\documentclass[11pt]{article}

\usepackage{latexsym}
\usepackage{color,graphicx}
\usepackage{amssymb}
\usepackage{amsmath}
\usepackage{amsthm}
\usepackage{txfonts}
\usepackage{enumitem}
\usepackage{fullpage}
\usepackage{comment}

\newtheorem{theorem}{Theorem}[section]
\newtheorem{lemma}[theorem]{Lemma}

\newcommand{\cM}{\mathcal{M}}

\newcommand{\Mshared}[1][]{\cM \if!#1!\else (#1) \fi}

\title{On scheduling coupled tasks with exact delays to minimize maximum  lateness}
\author{
Wies\l{}aw Kubiak\\
\\
\small{\emph{Memorial University}}\\
\small{\emph{St. John's, Canada}}
}
\begin{document}
\maketitle
\begin{abstract} This paper studies scheduling coupled tasks with exact delays to minimize maximum lateness. The first task has processing time $p>0$ and the second $b_i\geq 0$, also the second needs to start exactly $p$ units of time after the completion of the first. The couple has due date $d_i$. The tasks are scheduled on a single machine to minimize maximum lateness. The problem has been left open in the literature which offer hardly any results on scheduling coupled tasks with exact delays to minimize maximum lateness. The paper shows polynomial time algorithms  for \emph{agreeable} ($d_i\geq d_j$ implies $b_i\geq b_j$) and \emph{disagreeable} ($d_i\geq d_j$ implies $b_i\leq b_j$) cases. The complexity of the general problem remains open.
\end{abstract}

\textbf{Keywords:} coupled tasks, exact delays, maximum lateness, complexity

\section{Introduction} 

We refer the reader to the recent comprehensive reviews of the literature  on coupled tasks scheduling and applications by Chen and Zhang \cite{CZ20}, and Khatami et al \cite{KSC20}. This paper follows the notation used 
%Chen and Zhang \cite{CZ20} and Khatami et al \cite{KSC20}. 
in those two papers with one exception, to avoid confusion we reserve the notation $L_j$ for the lateness $L_j=C_j-d_j$ of job $J_j$ in this paper. This is the standard notation used in the scheduling literature,  see Blazewicz et al \cite{BEPSW07} for instance,  and we let $l_j$   denote the exact delay between the tasks of job $J_j$. 

Majority of literature on the coupled tasks scheduling has focused of the minimization of makespan thus far. Only few papers have dealt with other objectives, mostly total completion time. For the minimization of total completion time the problem
$1|(p, p, b_j)|\sum C_j$ was shown polynomial by  Chen and Zhang \cite{CZ20}, yet the problem $1|(1, l_j, 1)|\sum C_j$ was shown NP-hard in the strong sense by Kubiak \cite{K22}. Recall that the triplet $(a_j,l_j,b_j)$ in the middle field of the traditional three-field scheduling notation of Graham et. al. \cite{GLLRK79} indicates that the first task of a job $j$ has processing time $a_j$, the exact delay between tasks of job $j$ equals  to $l_j$, and the processing time of the second task of job $j$ has processing time $b_j$. 
Thus the triplet $(p,p,b_j)$ denotes the class of instances having the processing times of first tasks as well as the delays all fixed to $p$, which is a part of the input, and processing times $b_j$ of second tasks being arbitrary and job-dependent;   
the triplet $(1,l_j,1)$ denotes the class of instances having the processing times of all tasks equal to $1$, and the delays $l_j$ being arbitrary and job-dependent.

To our knowledge no papers have dealt with the minimization of maximum lateness problems unless their complexity immediately follows from the complexity of the minimization of makespan of their counterpart problems.  In particular the maximum lateness minimization $1|(p, p, b_j)|\max L_j$ problem was listed as open by Khatami et al \cite{KSC20}. This paper studies this problem.  
In the problem we have a set of $n$ independent jobs $J_1, \dots, J_n$ to be scheduled on a single machine $M$. Job $J_j$ consists of two tasks (operations),
$O_{j1}$ and $O_{j2}$, the former requires processing for $p$  time units and the latter requires $b_j$ time units on $M$. Preemptions are not  permitted, and tasks $O_{j1}$ and $O_{j2}$ need to be processed in that order. There needs to be an \emph{exact} time delay of $p$ time units between the completion of $O_{j1}$ and the start  of $O_{j2}$.  The machine $M$ can process at most one task at a time. The objective is to minimize maximum lateness $L_{\max}=\max_j \{L_j=C_j-d_j\}$, where the completion time $C_j$ of $J_j$ equals the completion time of its task $O_{j2}$, and $d_j$ is due date of job $J_j$. Let us point out that 
the makespan minimization problems $1|(p, p, b_j)|C_{\max}$ and $1|(a, p, b_j)| C_{\max}$ with common delay have been shown polynomial and  NP-hard in the strong sense respectively by Orman and Potts \cite{OP97}. Furthermore,  making the delays job-dependent renders minimization of makespan NP-hard in the strong sense even for unit-time tasks, i.e. the problem  $1|(1, l_j, 1)| C_{\max}$, see Yu et al. \cite{YHL04} .
To our knowledge no problem which is polynomial for makespan but NP-hard for $L_{\max}$ has been identified so far which is a further motivation to study the problem $1|(p, p, b_j)|\max L_j$.

The paper shows polynomial time algorithms  for the \emph{agreeable} (where $d_i\geq d_j$ implies $b_i\geq b_j$) case  in Section \ref{A}, and 
the \emph{disagreeable} (where $d_i\geq d_j$ implies $b_i\leq b_j$) case in Section  \ref{D}.  Surprisingly, despite the conditions imposed on the relationship between the processing times and due dates in either  case it turns quite challenging 
to find optimal schedules that minimize maximum lateness. The polynomial time, optimization algorithms turn out quite involved for either case.  Finally, Section \ref{G}  drops any assumptions about the relationship between the processing times and due dates instead  gives a polynomial time
 algorithm for a  given partition of jobs into a subset of those whose second tasks contributed to schedule makespan and a subset of those that do not since they can be interlaced with other jobs. However the question of obtaining an
 optimal partition in polynomial time remains open. 
 %The complexity of the general problem remains open

\section{Agreeable case} \label{A}

We consider the \emph{agreeable} case where the earliest due date (EDD) order

\begin{equation}\label{edd}
d_1\leq \dots \leq d_n,
\end{equation}
of jobs implies the shortest processing time (SPT) order of second task processing times

\begin{equation} \label{spt}
b_1 \leq \dots \leq b_n.
\end{equation}
We define two linear orders on the jobs $1, \dots, n$ as follows

\begin{equation}
i<j \text{ if and only if } d_i\leq d_j,
\end{equation}
and
%We also define order on the jobs $1, \dots, n$ as follows

\begin{equation} \label{ord}
i \prec j \text{ if and only if }   d_i-b_i \leq d_j-b_j.
\end{equation}
Both orders are relevant in the construction of optimal schedules. However the orders are generally \emph{not} consistent with each other: for example for two jobs $i$ and $j$ with $d_i=5, b_i=1$ and $d_j=7, b_j=4$ respectively we have $i<j$ but $j\prec i$.
This inconsistency adds to the minimization of maximum lateness complexity. 

For instance $I=\{1, \dots, n\}$, let $T_I$ be the set of all \emph{long} jobs $j$  in $I$ with $b_j>p$.

\subsection{Pairs without long jobs}
In this section we assume $b_i\leq p$ for each $i$. The coupled tasks of a job $i$ require an exact delay of $p$. The delay may be used by a task of another job $j$,
% as long as the task in not longer that $p$
the two jobs then become \emph{interlaced} to create a pair of jobs in a schedule.
We use the notation $\{i,j\}$ for an \emph{unordered}  pair of jobs $i$ and $j$ that make up a pair of interlaced jobs in a schedule. The \emph{ordered} pairs $(i, j)$ and $(j, i)$ indicate the order of jobs within the pair $\{i,j\}$ in the schedule. The former  has the job $i$ to start (relatively) at $0$ and complete at $2p+b_i$ and the job $j$ to start at $p$ and complete at $3p+b_j$. The latter has the job $j$ to start at $0$ and complete at $2p+b_j$ and the job $i$ to start at $p$ and complete at $3p+b_i$. We allow $i=j$ in the pair $\{i,j\}$ which results in a single job $i$ (referred to as a \emph{singleton}) to start at $0$ and complete at $2p+b_i$. We also use notation $\{i<j\}$ instead of $\{i,j\}$ to indicate that $d_i\leq d_j$, however the notation does not imply the order of the interlaced jobs  in the pair. As pointed out earlier
the order defined  by (\ref{ord})  may not be consistent with the EDD order defined by (\ref{edd}) (or the SPT order defined by (\ref{spt}) for that matter). This possible discord causes that maximum lateness of two interlaced jobs
$\{i<j\}$ is minimized  by $(i, j)$ if $i \prec j$ but by $(j, i)$ if $j\prec i$. Observe that the latter minimizes makespan of the interlaced jobs at the same time: the makespan of $(i, j)$ equals to $3p+b_j$ while the makespan of $(j, i)$ equals to $3p+b_i\leq 3p+b_j$ since $d_i\leq d_j$ imples $b_i\leq b_j$ for the agreeable case.
Thus the order $(j, i)$ minimizes makespan but not necessarily maximum lateness for the pair $\{i<j\}$.
 
 We have the following  characterization of consecutive pairs of interlaced jobs in optimal schedules.

\begin{lemma}\label{L10}
For pairs $(i, j)(k,l)$ in an optimal schedule we have $\min\{d_i,d_j\} \leq \min\{d_k,d_l\}$ and $\max\{d_i,d_j\} \leq \max\{d_k,d_l\}$. 
\end{lemma}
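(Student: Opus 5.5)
The plan is an interchange argument. Since the lemma concerns one optimal schedule, I read it as: some optimal schedule has this property for every consecutive pair of pairs, and any optimal schedule can be turned into one by local exchanges that never increase $L_{\max}$. Suppose $(i,j)(k,l)$ are consecutive in an optimal schedule and the first pair starts at time $t$. The block $(i,j)$ occupies $[t,t+3p+b_j]$. Since all $b\le p$, the only quantities the rest of the schedule sees are this block's end and the block $(k,l)$'s end, $t+6p+b_j+b_l$. (A singleton is handled the same way with $i=j$, length $2p+b_i$.) It is enough to show the following: whenever one of the two inequalities fails, we can redistribute the four jobs between the two blocks and reorder each block so that (a) the second block ends no later than before, and (b) each of the four jobs has lateness at most the old maximum lateness of these four jobs. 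Then nothing outside the two blocks is affected. Repeating this, with a potential such as the sum of positions weighted by due-date rank, terminates and gives the lemma.

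The main tool is the agreeable property: $d_x\le d_y$ implies $b_x\le b_y$. Write the four jobs as $u_1<u_2<u_3<u_4$ in the EDD order (\ref{edd}), so also $b_{u_1}\le\dots\le b_{u_4}$. There are two kinds of violation.

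\emph{Case A: the first pair contains $u_4$ and the second does not.} Swap $u_4$ with the larger-due-date job of the second pair.

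\emph{Case B: the first pair's minimum exceeds the second's.} The second pair then contains $u_1$; swap $u_1$ with the smaller-due-date job of the first pair.

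Doing A and then B if needed gives the partition $\{\text{first}\},\{\text{second}\}$ with $\min$ and $\max$ both in the right place. Within each new pair I order the jobs by $\prec$ from (\ref{ord}). As noted before the lemma, this minimizes the pair's lateness. The makespan-minimizing order puts the smaller-$b$ job last.

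Condition (a) is checked by comparing $b$ values. The new total length is $6p+b_{x}+b_{y}$, where $x,y$ are the jobs finishing last in each block. I will show the multiset of $b$'s of the last jobs can be chosen no larger than before, because the swap only moves larger-$b$ jobs into the later block. Where the $\prec$-order would put a large $b$ last, I will instead compare lateness directly.

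Condition (b) is the main obstacle, and it is exactly the $\prec$ versus $<$ discord. Completion times depend on $b$, so "a job with a larger due date moves later" does not by itself bound its lateness. The lateness of the job finishing last in a block is $\text{start}+3p+(b-d)$, and $b-d$ is not monotone in $d$. I will therefore not argue position by position. Instead I will bound every new lateness by one of two quantities: the old lateness of $u_4$, or the old lateness of whichever job ended the old second block.

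For a job moved into the second block, its completion is at most the old end of the second block, and its due date is at least that of the job it replaced in Case A, where we move $u_4$ out. So its lateness is at most that of the old occupant, or at most $L_{u_4}$ in the old schedule, whichever applies.

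For the job moved earlier, its completion decreases by at least $3p$ minus a $b$-difference, which is at most $p$, so it becomes earlier. Its lateness can only drop, except through the internal reordering. That reordering is controlled by the $\prec$ optimality of each pair.

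I expect to grind through four or five subcases: which of $i,j$ is $u_4$, and whether the old orders agreed with $\prec$. In each I will check the key inequality $C'_x-d_x\le\max_{y\in\{i,j,k,l\}}(C_y-d_y)$ using only $b_{u_1}\le\dots\le b_{u_4}$, $d_{u_1}\le\dots\le d_{u_4}$ and $b\le p$.
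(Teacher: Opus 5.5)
Your reduction to the local conditions (a) and (b) is sound, but the exchange you prescribe cannot always satisfy them. Running Case A and then Case B sends every violating arrangement to the same partition $\{u_1,u_3\},\{u_2,u_4\}$. From some arrangements, no ordering of that partition satisfies both (a) and (b).

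Take $p=10$ and $(b,d)=(1,0),(2,1),(3,100),(4,200)$ for $u_1,\dots,u_4$. This instance is agreeable and has all $b\le p$. Schedule it as $(u_4,u_1)(u_2,u_3)$, which violates only the max condition. The completion times are $24,31,53,64$, so the second block ends at $64$ and the maximum lateness is $52$, attained by $u_2$. Case A yields $\{u_1,u_3\},\{u_2,u_4\}$, whose four orderings give:
\begin{itemize}
\item $(u_3,u_1)(u_4,u_2)$ ends at $63$, but $u_2$ now finishes at $63$ with lateness $62$;
\item $(u_3,u_1)(u_2,u_4)$ keeps maximum lateness $52$ but ends at $65$;
\item $(u_1,u_3)(u_4,u_2)$ ends at $65$;
\item $(u_1,u_3)(u_2,u_4)$ ends at $67$.
\end{itemize}
Your $\prec$ rule picks the last of these, which violates both (a) and (b). This is not a matter of choosing the internal orders more cleverly. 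With $b_{u_1}<\dots<b_{u_4}$, the old end is $6p+b_{u_1}+b_{u_3}$, so only $(u_3,u_1)(u_4,u_2)$ can respect (a). That ordering pushes $u_2$ from first to last in its block, raising its completion time by $p$, and nothing in the old schedule bounds this. Note also that the offending job $u_2$ never changes block, so your bounds for the moved jobs do not cover it. Condition (b) for the jobs that stay put is exactly what your deferred subcases would have had to check, and here it fails.

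The missing idea is that the target partition cannot be a fixed function of the due-date ranks; it must depend on which job sits in which slot. In the example the separated partition works: $(u_2,u_1)(u_4,u_3)$ ends at $64$ with maximum lateness $31$.

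The paper reaches this by exchanging jobs between fixed positions rather than by rank. First it swaps $j\leftrightarrow l$ or $j\leftrightarrow k$ to obtain $d_j\le\min\{d_k,d_l\}$. Each such swap puts a job with no larger $b$ into slot $j$, so the first block can only shorten. Next, suppose $d_i>\max\{d_k,d_l\}$. Then $L_i<L_j$, because $L_i\ge L_j$ would force $d_i\le d_j$. The paper then swaps $i\leftrightarrow k$, which leaves $j$ and $l$, and hence both block ends, unchanged. Latenesses are then compared slot by slot. In the example this last swap is exactly $u_4\leftrightarrow u_2$, producing the separated partition that your Case A can never reach.
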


\begin{proof} 

 \begin{figure}
\centering         %declaration corresponding to the center environment
\includegraphics[scale=0.60]{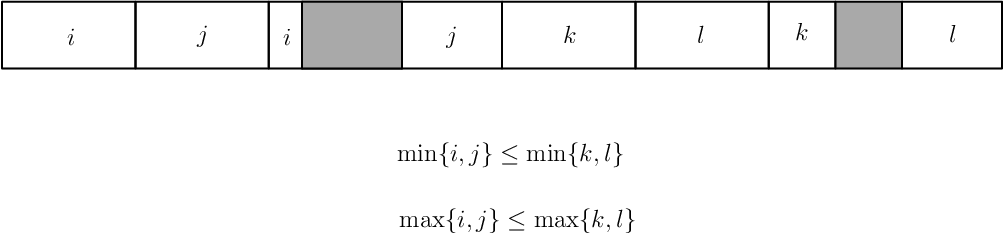}
\caption{The conditions for the arc between $(\{i, j\}, c)$ and $(\{k, l\}, c-2)$ to exist.}
\label{F0}
\end{figure}

By exchange argument. We first prove that 
\begin{equation}\label{V1}
d_j\leq d_l
\end{equation}
for  $(i, j)(k, l)$, see Figure \ref{F0}. 
Suppose 
%for contradiction 
that $d_j>d_l$.
% for some consecutive pairs in a schedule. 
% Assume that the schedule has the minimum number of such  pairs violating (\ref{V1}). 
Swap $j$ and $l$. The swap changes the lateness of jobs $j$ and $l$ from $L_j=3p+b_j-d_j$ and $L_l=6p+b_j+b_l-d_l$ respectively to
$\ell_j=6p+b_l+b_j-d_j$ and $\ell_l=3p+b_l-d_l$ respectively. Since $d_j>d_l$, we have $L_l>\max\{\ell_j,\ell_l\}$. Moreover, since $b_l\leq b_j$, the lateness of job $k$ does not increase. Finally the lateness of any other job remains unchanged.
%Thus the maximum lateness of the resulting schedule does not increase, and $(i, l)$ and $(k, j)$ meet (\ref{V1}) which reduces the number of  pairs violating (\ref{V1}) and leads to contradiction.

Second, we prove that
\begin{equation}\label{V2}
d_j\leq d_k
\end{equation}
for $(i, j)(k, l)$. Suppose
% for contradiction 
that $d_j>d_k$.
% for some pair. Assume that the schedule has the minimum number such  pairs violating (\ref{V2}). 
Swap $j$ and $k$. The swap changes the lateness of jobs $j$ and $k$ from $L_j=3p+b_j-d_j$ and $L_k=5p+b_j+b_k-d_k$ respectively to
$\ell_j=5p+b_j+b_k-d_j$ and $\ell_l=3p+b_k-d_k$ respectively. Since $d_j>d_k$, we have $L_k>\max\{\ell_j,\ell_k\}$. Moreover, since $b_k\leq b_j$, the lateness of job $l$ does not increase. Finally the lateness of any other job remains unchanged.
%Thus the maximum lateness of the resulting schedule does not increase, and $(i, k)$ and $(j, l)$ meet (\ref{V2}) which reduces the number of  pairs violating (\ref{V2}) and leads to contradiction.

Therefore we have
\begin{equation} \label{E11}
d_j\leq \min\{d_k,d_l\}\leq \max \{d_k,d_l\}
\end{equation}
for $(i, j)(k, l)$. We now prove that
\begin{equation} \label{E21}
d_i \leq \max \{d_k,d_l\}
\end{equation}
for $(i, j)(k, l)$. 
%Observe that (\ref{E2}) clearly holds if $b_j>p$ by (\ref{E1}), thus we assume $b_j\leq p$. 
The (\ref{E21}) holds 
%for any consecutive pairs $(i, j)$ and $(k,l)$ 
for $L_i \geq L_j$ since then $d_j\geq d_i$. Observe that $L_i\geq L_j$ implies $d_j-d_i+b_i-b_j\geq p$. However $b_i-b_j\leq p$ since $b_i\leq p$. Thus $d_j\geq d_i$ and hence (\ref{E21}) holds by (\ref{E11}).

Thus to complete the proof of (\ref{E21}) it remains to consider 
%consecutive pairs $(i, j)$ and $(k,l)$ with 
$L_i < L_j$. Then  $d_j-d_i+b_i-b_j < p$. Suppose
% for contradiction 
that $d_i > \max \{d_k,d_l\}$. Swap $i$ and $k$. The swap changes the lateness of jobs $i$ and $k$ from $L_i=2p+b_i-d_i$ and $L_k=5p+b_j+b_k-d_k$ respectively to
$\ell_i=5p+b_j+b_i-d_i$ and $\ell_k=2p+b_k-d_k$ respectively. 
If $L_k\geq \ell_i$, then $L_k\geq \max\{\ell_i,\ell_k\}$ hence  the maximum lateness of the resulting schedule does not increase.
% and the schedule has the pairs $(k, j)$ and $(i, l)$ which reduces the number of violating  pairs and leads to contradiction.
Observe that the lateness of any other job remains unchanged. 
%Thus it remains to consider the case 
If $L_k < {\ell}_i$, then $b_i-d_i>b_k-d_k$ which implies that $L_k<L_l$. To prove that suppose for contradiction that $L_k\geq L_l$, i.e.  $5p+b_j+b_k-d_k\geq 6p+b_j+b_l-d_l$. This implies
$b_i-d_i>b_k-d_k\geq p+b_l-d_l$. Hence $d_l-d_i+b_i-b_l\geq p$. This however leads to a contradiction since $d_l-d_i<0$ by assumption, and $b_i-b_l\leq p$ since $b_i\leq p$. Therefore we have
\begin{equation}
L_i < L_j  \text{   and  }  L_k<L_l.
\end{equation}
Since $b_i-d_i>b_k-d_k$, we have $L_j>L_i>\ell_k$. Moreover $L_l>\ell_i$ since $p+b_l-d_l>b_i-d_i$ for $d_i>d_l$ and $b_i\leq p$. 
%To see this we observe that $d_l-d_i<0$ by assumption. 
Therefore the swap keeps the maximum lateness of $(i, j)$ the same as $(k, j)$ and equal $L_j$, and the maximum lateness of $(k, l)$ the same as $(i, l)$ and equal $L_l$.
Observe that the lateness of any other job remains unchanged. Thus the maximum lateness of the resulting schedule does not increase.
% and the schedule has the pairs $(i, k)$ and $(j, l)$ which reduces the number of  pairs violating (\ref{E21}) and leads to contradiction.
\end{proof}

\begin{lemma} \label{L11X}
For pairs $\{i<j\}\{k<l\}$  we have either $i<j<k<l$ (the separated pairs)  or $i<k<j<l$ (the interlaced pairs).
\end{lemma}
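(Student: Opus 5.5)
Lemma \ref{L11X} follows from Lemma \ref{L10} by a case analysis on the four due dates. We read it for two consecutive pairs in an optimal schedule (more precisely, in an optimal schedule chosen as in the exchange arguments of Lemma \ref{L10}, which we may assume satisfies that lemma for every consecutive couple of pairs). Write the first pair as $\{i<j\}$ and the second as $\{k<l\}$, so $d_i\le d_j$ and $d_k\le d_l$; the order of jobs inside each pair in the schedule is irrelevant, since Lemma \ref{L10} is stated only through $\min$ and $\max$ of due dates. Then $\min\{d_i,d_j\}=d_i$, $\max\{d_i,d_j\}=d_j$, $\min\{d_k,d_l\}=d_k$ and $\max\{d_k,d_l\}=d_l$, and Lemma \ref{L10} yields
\begin{equation*}
d_i\le d_k \quad\text{and}\quad d_j\le d_l .
\end{equation*}

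These two inequalities together with $d_i\le d_j$ and $d_k\le d_l$ leave exactly two relative positions of $j$ and $k$. If $d_j\le d_k$, then $d_i\le d_j\le d_k\le d_l$, which is $i<j<k<l$ (separated pairs). If $d_k<d_j$, then $d_i\le d_k< d_j\le d_l$, which is $i<k<j<l$ (interlaced pairs). All other interleavings, namely $k$ before $i$ or $l$ before $j$, are excluded by the two displayed inequalities.

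The only delicate point is ties among due dates. The order $<$ is a linear order obtained from (\ref{edd}) by fixing a tie-breaking, so $d_i=d_k$ does not by itself decide whether $i<k$ or $k<i$. I would handle this by using the index tie-breaking of (\ref{edd}) and observing that when due dates are equal the corresponding jobs can be renamed, or that the swaps in Lemma \ref{L10} do not increase $L_{\max}$ even when the inequality is only weak. Hence we may assume the schedule is consistent with the linear order on ties as well, which upgrades the weak inequalities to the strict order statements $i<k$ and $j<l$. I expect this tie handling to be the only step needing care; the rest is immediate.
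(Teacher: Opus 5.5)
Your proof is correct and is the paper's argument: the paper only says the lemma follows immediately from Lemma \ref{L10}, and your step of reading the $\min$/$\max$ inequalities as $d_i\le d_k$ and $d_j\le d_l$, then splitting on $d_j\le d_k$ versus $d_k<d_j$, is exactly that deduction written out. Your treatment of ties goes beyond the paper and is sound, and the cleanest justification is this: in the agreeable case $d_i=d_k$ forces $b_i=b_k$, so tied jobs are identical and can be relabelled in order of their start times, which makes the linear order $<$ consistent with the schedule.
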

\begin{proof}
Follows immediately from Lemma \ref{L10}.
\end{proof} 

We use the terms interlaced jobs as well as interlaced pairs. The terms are different but hopefully do not cause confusion. The next lemma deals with a singleton followed or preceded by a pair of interlaced jobs.

\begin{lemma} \label{L12X}
For
\begin{equation} \label{AA1}
\{i\}\{k<l\}
\end{equation}
we have $d_i\leq \min\{d_k,d_l\}$, and for
\begin{equation} \label{AA2}
\{k<l\} \{i\}
\end{equation}
we have $ \max\{d_k,d_l\} \leq d_i$.
\end{lemma}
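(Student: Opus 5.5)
The plan is a pairwise exchange argument in the style of Lemma \ref{L10}. We take an optimal schedule that contains a violating configuration and swap the singleton with one job of the adjacent pair. We then check that no lateness grows beyond the maximum lateness of the original schedule, and that every job outside the three involved is unaffected. Two facts are used throughout. First, all jobs in this subsection satisfy $b\leq p$, so any two of them can be interlaced and every swapped configuration is feasible. Second, in the agreeable case $d_u<d_v$ implies $b_u\leq b_v$. Let the pair be ordered $(x,y)$ with $\{x,y\}=\{k,l\}$, and shift time so that the block starts at $0$.

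For (\ref{AA1}) the singleton $i$ completes at $2p+b_i$. With $T=2p+b_i$, job $x$ completes at $T+2p+b_x$ and job $y$ at $T+3p+b_x+b_y$. Suppose $d_i>d_m$ for some $m\in\{x,y\}$; then $b_m\leq b_i$.
\begin{itemize}
\item If $m=x$, we make $x$ the singleton and form $(i,y)$. Then $x$ completes at $2p+b_x$, which is no later than before. Job $i$ completes at $4p+b_x+b_i$, the old completion time of $x$, and since $d_i>d_x$ its new lateness is below $L_x$. Job $y$ keeps its completion time $5p+b_x+b_i+b_y$.
\item If $m=y$, we make $y$ the singleton and form $(x,i)$. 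Then $y$ completes at $2p+b_y$, which is earlier than before. Job $x$ completes at $4p+b_y+b_x\leq 4p+b_i+b_x$. Job $i$ completes at $5p+b_x+b_y+b_i$, the old completion time of $y$, and $d_i>d_y$ gives new lateness below $L_y$.
\end{itemize}
In both cases the makespan of the block is unchanged, so later jobs are untouched. Hence $L_{\max}$ does not increase. Repeating the swap while violations remain (each swap strictly improves the due-date order, so this terminates) gives $d_i\leq\min\{d_k,d_l\}$.

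For (\ref{AA2}) the pair $(x,y)$ comes first: $x$ completes at $2p+b_x$, $y$ at $3p+b_x+b_y$, and $i$ at $5p+b_x+b_y+b_i$. Suppose $d_i<d_m$ for some $m\in\{x,y\}$; then $b_i\leq b_m$.
\begin{itemize}
\item If $m=y$, we form $(x,i)$ followed by the singleton $y$. Job $y$ now completes at the old $C_i$, and $d_y>d_i$ keeps its lateness below $L_i$. Job $i$ completes at $3p+b_x+b_i$, which is below the old $C_i$. Job $x$ is unchanged.
\item If $m=x$, we form $(i,y)$ followed by the singleton $x$. Job $x$ now finishes at the old $C_i$, again dominated since $d_x>d_i$. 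Job $i$ finishes at $2p+b_i$, which is earlier than before.
\end{itemize}
The step needing care is the job $y$ in the case $m=x$: its completion time changes from $3p+b_x+b_y$ to $3p+b_i+b_y$. This is exactly where agreeability is needed, since $b_i\leq b_x$ makes the change non-increasing. The block makespan is again preserved, so the same termination argument yields $\max\{d_k,d_l\}\leq d_i$.
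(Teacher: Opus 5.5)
\textbf{There is a gap in the second claim.} Your treatment of (\ref{AA1}) and of the case $m=y$ in (\ref{AA2}) is sound, and it matches the paper's replacements $k(i,l)$, $k(l,i)$, $(k,i)l$, $(l,i)k$. The case $m=x$ in (\ref{AA2}), however, fails, and the cause is a wrong completion-time formula.

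In an interlaced pair $(x,y)$ the second job completes at $3p+b_y$ after the pair starts, not at $3p+b_x+b_y$. Since $b_x\leq p$, the second task of $x$ runs inside the delay of $y$; compare $L_j=3p+b_j-d_j$ in the proof of Lemma~\ref{L10}. Consequences:
\begin{itemize}
\item The block $(x,y)i$ has makespan $5p+b_y+b_i$, while your replacement $(i,y)x$ has makespan $5p+b_y+b_x\geq 5p+b_y+b_i$.
\item Job $x$ does not finish at the old $C_i$ but $b_x-b_i$ later, and every subsequent job is pushed back by the same amount.
\item Even $L'_x\leq L_i$ would require $d_i-b_i\leq d_x-b_x$, i.e.\ $i\prec x$, which agreeability does not give.
\end{itemize}
For a concrete failure, take $p=10$, $(b_y,d_y)=(1,0)$, $(b_i,d_i)=(1,5)$, $(b_x,d_x)=(10,7)$. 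This instance is agreeable with no long jobs. The schedule $(x,y)i$ is optimal with maximum lateness $47$, but $(i,y)x$ has maximum lateness $54$.

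When $d_y>d_i$ you can fall back on the $m=y$ swap. When $d_y\leq d_i<d_x$, which is the paper's configuration $(l,k)i$ with $k<i$, $x$ is the only violator and your argument has no valid move. The ``step needing care'' you single out is not a real issue: $y$ completes at $3p+b_y$ under both orders.

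\textbf{The missing move} is to pull $y$ to the front as a singleton, replacing $(x,y)i$ by $y(x,i)$. Then:
\begin{itemize}
\item both blocks have makespan $5p+b_y+b_i$;
\item job $i$ keeps its completion time;
\item $y$ now finishes at $2p+b_y<3p+b_y$;
\item $L_x=4p+b_y+b_x-d_x\leq 5p+b_y+b_i-d_i=L_i$, because $b_x\leq p$ and $d_x\geq d_i$.
\end{itemize}
This is exactly the paper's replacement of $(l,k)i$ by $k(l,i)$. It is the one place where the hypothesis $b\leq p$ is used quantitatively rather than merely to make interlacing possible.

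The same formula error also makes your remark that the block makespan is always preserved inaccurate in (\ref{AA1}) with $m=x$. There the block in fact shrinks from $5p+b_i+b_y$ to $5p+b_x+b_y$, which is harmless.
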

\begin{proof} For contradiction in the proof of (\ref{AA1}) suppose $i>k$. Replace $i(k, l)$ by $k(i, l)$ which is shorter since $b_k\leq b_i$ and it does not increase maximum lateness since $d_i\geq d_k$. 
Replace $i(l, k)$ by $k(l, i)$ which has the same makespan and it does not increase maximum lateness since $d_i\geq d_k$. 

For contradiction in the proof of (\ref{AA2}) suppose $l>i$. Replace $(k, l)i$ by $(k, i)l$ which has the same makespan and it does not increase maximum lateness since $d_l\geq d_i$. Replace $(l, k)i$ by $(l, i)k$ for $k>i$. The latter
has the same makespan and it does not increase maximum lateness since $d_i\leq d_k$. It remains to consider $(l, k)i$ for $k<i$. Observe that $b_l\leq p$. Replace $(l, k)i$ by $k(l, i)$ of the same makespan. For $(l, k)i$ we have 
\begin{align*}
\ell_l=2p+b_l-d_l\\
\ell_k=3p+b_k-d_k\\
\ell_i=5p+b_k+b_i-d_i,
\end{align*}
for $k(l, i)$ we have
\begin{align*}
L_k=2p+b_k-d_k\\
L_l=4p+b_k+b_l-d_l\\
L_i=5p+b_k+b_i-d_i
\end{align*}
which does not increase maximum lateness since $\ell_i\geq L_l$ (we have $p+b_i\geq b_l +d_i-d_l$, $b_l\leq p$ and $d_i-d_l\leq 0$)

\end{proof}

\subsection{Sequences without long jobs}

By Lemmas \ref{L10}, \ref{L11X}, and \ref{L12X} we get the following characterization of optimal sequences.
A sequence $\{i_1,j_1\}, \dots, \{i_{m}, j_{m}\}$ such that $1=i_1< \dots <i_{m}$, $j_1< \dots <j_{m}=n$, $\sum_{k=1}^m|\{i_k,j_k\}|=n$, and either $i_k<j_k$ or $i_k=j_k$  for $k=1, \dots, m$,  $i_k\neq j_{k-1}$ for $k=2, \dots, m$ is called a \emph{semi-feasible }sequence.
%and $\sum _{k=1}^{\frac{n}{2}} (i_k + j_k)=\frac{n(n+1)}{2}$.  
A semi-feasible sequence is  \emph{feasible} if

\begin{equation*}
\{i_1, \dots,i_{m}\} \cup \{j_1, \dots, j_{m}\}=\{1, \dots, n\}.
\end{equation*}
%If $k=0$ the sequence is referred to as feasible. Each job in $\{i_1, \dots,i_{\frac{n}{2}}\} \cap \{j_1, \dots, j_{\frac{n}{2}}\}$ is 
%called a \emph{double}, and
In a semi-feasible sequence which is \emph{not} feasible each job in $\{1, \dots, n\} \setminus  (\{i_1, \dots, i_{m}\}\cup \{j_1, \dots, j_{m}\})$ is called a \emph{miss}, and each job $x\in \{i_k,j_k\}\cap \{i_l,j_l\}$ for some $k<l$ 
is called a \emph{double}. The number of misses equals the number of doubles in a semi-feasible sequence.

Let $s$ be a feasible sequence, a pair $s=s_1\{i<j\}\{k<l\}s_2$ such that $i<j<k<l$ is 
a \emph{separator}. Then the sequence $s_1\{i<j\}$ includes all jobs in $[1..j]$, and the sequence $\{k<l\}s_2$ includes all jobs in $[k..n]$. Thus in a feasible sequence $k=j+1$. The singleton $i$ in the sequence $s=s_1\{i,i\}=\{i\}s_2$ is also a \emph{separator}.  The sequence $s_1$ includes all jobs in $[1..i-1]$
and the sequence $s_2$ includes all jobs in $[i+1..n]$, if $i<n$. The separators decomposed  a feasible sequence into disjoint intervals $[\alpha..\beta]$, where the subsequence $s_{\alpha, \beta}$ that includes all jobs in $[\alpha..\beta]$ is made of \emph{interlaced} pairs (if $\beta-\alpha \geq 4$) or a single pair $\{\alpha, \alpha+1\}$ (if $\beta-\alpha=1$) or a singleton $\{\alpha\}$ (if $\beta=\alpha$). The subsequence $s_{\alpha, \beta}$ is called a \emph{block}.

\subsection{Order of jobs in the interlaced pairs}

We now focus on the sequence of jobs in the first pair of interlaced pairs. 

\begin{lemma} \label{L2}
For the interlaced pairs $\{i<j\}\{k<l\}$  we have $(j, i)$.
\end{lemma}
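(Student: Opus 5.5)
The plan is to argue by contradiction, reusing the exchange step from the proof of Lemma \ref{L10} rather than proving anything new. Fix consecutive interlaced pairs $\{i<j\}\{k<l\}$ in an optimal schedule, so by Lemma \ref{L11X} we have $i<k<j<l$. Suppose the first pair is ordered $(i, j)$, so that $j$ is the \emph{second} job of the first pair. The second pair is ordered either $(k, l)$ or $(l, k)$. In both cases the situation is the one treated in the proof of Lemma \ref{L10}: the second job $j$ of a pair is immediately followed by a pair whose first job is $k$ or $l$.

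The key step is inequality (\ref{V2}), or (\ref{V1}) when the second pair is $(l, k)$. It says that, in the normalized optimal schedule produced by the exchanges of Lemma \ref{L10}, the second job of the first pair has due date at most that of the first job of the next pair. Concretely:
\begin{itemize}
\item If the second pair is $(k, l)$, then $d_j\leq d_k$.
\item If the second pair is $(l, k)$, then $d_j\leq d_k$ again, since (\ref{V1}) with the roles of $k$ and $l$ exchanged compares $j$ with the job in the second position, which is now $k$.
\end{itemize}
Hence (\ref{E11}) gives $d_j\leq \min\{d_k,d_l\}=d_k$. On the other hand, the interlacing $k<j$ means $d_k\leq d_j$.

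If $d_k<d_j$ this is an immediate contradiction. Equivalently, the swap of $j$ and $k$ from the proof of Lemma \ref{L10} can be performed. It is legitimate because $k<j$ gives $b_k\leq b_j$ in the agreeable case, so the lateness of the job after $k$ does not increase. The swap does not increase $L_{\max}$ and turns the pairs into $\{i<k\}\{j<l\}$, which are separated, not interlaced.

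The step I expect to be the main obstacle is ties, $d_k=d_j$. There the inequalities do not exclude $(i, j)$, and I would handle them through the tie-breaking built into the linear order $<$. When $d_j=d_k$, the swap of $j$ and $k$ still does not increase maximum lateness: the computation in Lemma \ref{L10} gives $L_k\geq \max\{\ell_j,\ell_k\}$ with equality of due dates, and $b_k\leq b_j$ still holds. We may therefore relabel $j$ and $k$ so that the pairs become separated. Because the swap is weakly improving, the normalization terminates; for instance, each swap reduces the number of interlaced pairs with first pair ordered $(i, j)$. The conclusion is that in an optimal schedule normalized as in Lemma \ref{L10}, every first pair of interlaced pairs is ordered $(j, i)$, i.e.\ the job with the larger due date starts first.
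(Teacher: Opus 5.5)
Your argument is correct and is essentially the paper's: the paper likewise assumes the first pair is $(i, j)$ and swaps $j$ with $k$ in both sub-cases $(k, l)$ and $(l, k)$, and your appeal to (\ref{V1})/(\ref{V2}) of Lemma \ref{L10} simply reuses that same exchange. The difference is only in presentation: the paper checks the latenesses directly using the non-strict inequality $d_k\leq d_j$, together with $b_k\leq b_j$, which also keeps the makespan from increasing, so the tie $d_j=d_k$ needs no separate treatment.
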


\begin{proof}
The proof is by the exchange argument. Suppose for contradiction that $(i,j)$. We have two cases. 
Case $(i, j)$ and $(k, l)$. Swap $j$ and $k$ to get $(i, k)$ and $(j, l)$. We have
\begin{align*}
L_j& = 3p + b_j -d_j\\
L_k& = 5p + b_j + b_k -d_k\\
L_l& = 6p + b_j+ b_l - d_l
\end{align*}
before the swap, and 
\begin{align*}
\ell_j& = 5p + b_k+b_j -d_j\\
\ell_k& = 3p + b_k -d_k\\
\ell_l& = 6p + b_k + b_l - d_l
\end{align*}
after the swap. We have $L_k\geq \ell_k$, and $L_k\geq \ell_j$ since $d_k\leq d_j$, and $L_l\geq \ell_l$ since $b_k\leq b_j$ for the agreeable case. Moreover, since $b_k\leq b_j$ the schedule of the two pairs $(i, k)$ and $(j, l)$ is not longer ($b_j+b_l$ before, and $b_k+b_l$ after) than the initial schedule of $(i, j)$ and $(k, l)$ before that swap.

Case $(i, j)$ and $(l, k)$, swap $j$ and $k$ to get $(i, k)$ and $(l, j)$. We have
\begin{align*}
L_j& = 3p + b_j -d_j\\
L_k& = 6p + b_j + b_k -d_k\\
L_l& = 5p + b_j + b_l - d_l
\end{align*}
before the swap, and 
\begin{align*}
\ell_j& = 6p + b_k+b_j -d_j\\
\ell_k& = 3p + b_k -d_k\\
\ell_l& = 5p + b_k + b_l - d_l
\end{align*}
after the swap. We have $L_k\geq \ell_k$, and $L_k\geq \ell_j$ since $d_k\leq d_j$, and $L_l\geq \ell_l$ since $b_k\leq b_j$ for the agreeable case. 
The swap does not change the makespan of the pairs.
%Moreover, since $b_k\leq b_j$ the schedule of the two pairs $(i, k)$ and $(l, j)$ is not longer ($b_j+b_k$ before, and $b_k+b_j$ after) than the initial schedule of $(i, j)$ and $(l, k)$ before that swap.

\end{proof}

\subsection{Sequences with long jobs}

This section considers instances with nonempty $T_{\mathcal{I}}$. We first identify the schedules which can be eliminated from consideration without loosing generality. Let $S$ be a schedule.

\begin{lemma} \label{L13A}
Without loss of generality we can assume that $i(j,k)$, where $i\in T_{\mathcal{I}}$ does not occur in $S$.
\end{lemma}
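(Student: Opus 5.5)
The plan is an exchange argument in the style of Lemmas \ref{L10} and \ref{L12X}. I would take an optimal $S$ in which the block $i(j,k)$ with $i\in T_{\mathcal{I}}$ appears and replace it by a rearrangement of the same three jobs. The rearrangement must have no larger makespan, so later jobs are unaffected, and must not increase $L_{\max}$. Repeating this, with a potential such as the number of such occurrences or their position, then removes all occurrences.

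First I record what the block gives relative to its start $t$. The completion times are $C_i=t+2p+b_i$, $C_j=t+4p+b_i+b_j$ and $C_k=t+5p+b_i+b_k$, so the makespan is $t+5p+b_i+b_k$. Feasibility of the pair $(j,k)$ forces $b_j\le p$, since the second task of $j$ on $[2p,2p+b_j]$ must end before the second task of $k$ starts at $3p$. So $j$ is short. Since $b_i>p\ge b_j$, agreeability gives $d_j\le d_i$.

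Next I split on $d_i$ versus $d_k$.

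\emph{Case $d_i\ge d_k$.} I replace the block by $(j,k)i$. The pair ends at $t+3p+b_k$, and $i$ then completes at $t+5p+b_k+b_i$, so the makespan is unchanged. Jobs $j$ and $k$ finish strictly earlier. For $i$, the new lateness is $5p+b_i+b_k-d_i\le 5p+b_i+b_k-d_k$, the old lateness of $k$.

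\emph{Case $d_i<d_k$.} Agreeability gives $b_k\ge b_i>p$, so $k$ is long too. I replace the block by $(j,i)k$. This is feasible because $b_j\le p$: the second task of $j$ occupies $[2p,2p+b_j]$ and the second task of $i$ occupies $[3p,3p+b_i]$. Then $k$ follows as a singleton and completes at $t+5p+b_i+b_k$, so both the makespan and $C_k$ are unchanged. Job $j$ improves from $t+4p+b_i+b_j$ to $t+2p+b_j$. Only $i$ gets worse, by exactly $p$, to $t+3p+b_i-d_i$.

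So it remains to bound $3p+b_i-d_i$ by the old maximum. It suffices to have $d_k-d_i\le 2p+b_k$, using the old lateness of $k$. Alternatively it suffices to have $d_j\le d_i+p+b_j$, using the old lateness of $j$ and the fact that $d_j\le d_i$. The second inequality holds trivially because $d_j\le d_i$. Indeed the old lateness of $j$ is $4p+b_i+b_j-d_j\ge 4p+b_i-d_i>3p+b_i-d_i$. So this case closes via $j$ alone, and I would present it this way.

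The step I expect to be the main obstacle is confirming that the replacements do not disturb what precedes and follows the block. They must start at the same time $t$ and have no larger makespan. They must also not create a new forbidden occurrence: in case one, $i$ now sits as a singleton possibly followed by another pair. For that I would use a potential, such as the position of the last long singleton immediately followed by a pair, which strictly increases with each exchange. This guarantees that the process terminates.
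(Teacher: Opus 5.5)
Your argument is correct, and its substance is the paper's proof. The paper replaces $i(j,k)$ by $(j,i)k$ in all cases. It observes that $C_k$ and the makespan are unchanged and $j$ finishes earlier. It also notes that the new lateness $3p+b_i-d_i$ of $i$ is below the old lateness $4p+b_i+b_j-d_j$ of $j$ because $d_j\le d_i$. Your own check in the second case never uses $d_i<d_k$; it uses only $b_j\le p$ and $d_j\le d_i$. So the case split is superfluous, and the alternative $(j,k)i$ in your first case is valid but buys nothing.

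The one step that does not go through as written is the termination potential, which the paper leaves implicit. The position of the last long singleton followed by a pair is not monotone. If the block after $(j,k)$ is a singleton, or there is no such block, the processed occurrence simply disappears. The last occurrence can then move backward. A long singleton $h$ immediately before $i$ even becomes a new occurrence, since it is now followed by a pair. Also, in your second case the new singleton $k$ is long, so the re-creation issue you flagged for the first case arises there too.

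A correct potential is the following. Both exchanges replace two consecutive blocks, a singleton followed by a pair, by a pair followed by a singleton. So the number of blocks is unchanged and exactly one singleton moves one block later. Hence the sum of the block indices of the singletons strictly increases and is bounded. The process therefore terminates, with no occurrence of $i(j,k)$, $i\in T_{\mathcal{I}}$, left.
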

\begin{proof} Suppose that $i(j,k)$, where $i\in T_{\mathcal{I}}$ occurs in $S$, we have
%Before $i(j,k)$
\begin{align*}
\ell_i=2p+b_i-d_i\\
\ell_j=4p+b_i+b_j-d_j\\
\ell_k=5p+b_i+b_k-d_k.
\end{align*}
In $S$ change $i(j,k)$ to $(j,i)k$. For the latter we have
\begin{align*}
L_j=2p+b_j-d_j\\
L_i=3p+b_i-d_i\\
L_k=5p+b_i+b_k-d_k
\end{align*}
Thus $\ell_k= L_k$. Moreover $\ell_j>L_j$, and since $i>j$, we have $\ell_j>L_i$. Therefore $\max\{L_i,L_j,L_k\}\leq \max\{\ell_i,\ell_j,\ell_k\}$. Moreover, the makspans of $i(j,k)$ and $(j,i)k$ are the same.

%Since $j>l$, we have $l_l\geq L_j$. Also $l_l>L_l$. Since $j>l$ and $j\in H_{\mathcal{I}}$, we have $l\in H_{\mathcal{I}}$. By Lemma \ref{L12} $k>l$, thus
%$l_l\geq L_k$. Thus $\max\{l_j,l_k,l_l\}\geq l_l\geq \max\{L_j,L_k,L_l\}$.
\end{proof}

\begin{lemma}\label{L15A}
Without loss of generality we can assume that $ij$, where $i\in T_{\mathcal{I}}$ and $j\in \mathcal{I}\setminus T_{\mathcal{I}}$ does not occur in $S$.
\end{lemma}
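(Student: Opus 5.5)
The plan is an exchange argument in the style of Lemmas \ref{L12X} and \ref{L13A}: replace the occurrence of $ij$ by a sequence that is no longer and has no larger maximum lateness. The observation behind it is that a long job $i$ can never host another job in its delay. If $(i,j)$ were used, the second task of $i$ would occupy $[2p,2p+b_i]$, and this overlaps the start $3p$ of $O_{j2}$ because $b_i>p$. The reverse order $(j,i)$ is always feasible when $j$ is short: $O_{j1}$ runs on $[0,p]$, $O_{i1}$ on $[p,2p]$, $O_{j2}$ on $[2p,2p+b_j]\subseteq[2p,3p]$ since $b_j\leq p$, and $O_{i2}$ starts at $3p$. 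So any occurrence of $ij$ wastes the idle delay of $i$, and the aim is to interlace $j$ in front of $i$.

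\textbf{Main case.} Here $i$ is a singleton immediately followed by the short job $j$, and $j$ is itself a singleton. If instead $j$ starts a pair $(j,k)$, we are in the pattern $i(j,k)$, which Lemma \ref{L13A} already excludes. Relative to the start $t$ of $i$, we have $C_i=t+2p+b_i$ and $C_j=t+4p+b_i+b_j$. After replacing $ij$ by $(j,i)$ we get $C'_j=t+2p+b_j<C_j$ and $C'_i=t+3p+b_i$. The segment shrinks from $4p+b_i+b_j$ to $3p+b_i$, so every later job moves earlier and none of their latenesses increases. It remains to show $C'_i-d_i\leq \max\{L_i,L_j\}$. I will compare $C'_i-d_i$ with $L_j=t+4p+b_i+b_j-d_j$, which requires $d_j\leq d_i+p+b_j$. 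This follows from agreeability: $b_j\leq p<b_i$ forces $j<i$ in EDD (ties broken consistently with (\ref{spt})), so $d_j\leq d_i$.

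\textbf{Remaining case and main obstacle.} The hard part is when the long job $i$ is the last job of a pair, i.e.\ we have $(x,i)j$ with $O_{i2}$ finishing the pair. Then $j$ cannot be slid in directly. I would first rewrite $(x,i)j$ as $(x,j)i$. This is feasible because $O_{x2}$ fits in $[2p,3p]$, as it already did before $O_{i2}$ in the original pair. The makespan is unchanged at $5p+b_i+b_j$. The latenesses compare as follows: $C_x$ is unchanged, $j$ moves earlier, and $i$ completes at the same time as $j$ did before. So the maximum lateness does not increase provided $d_i\geq d_j$, which holds by the agreeable argument above. After the rewrite, $i$ is a singleton. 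If $i$ is now followed by a short singleton, the main case applies.

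\textbf{Termination.} To make the argument rigorous, I would use a potential such as the number of short jobs placed after long singletons, and check that each rewrite strictly decreases it while never creating a pattern excluded by Lemma \ref{L13A}. I expect this bookkeeping, rather than the lateness inequalities, to be the delicate step.
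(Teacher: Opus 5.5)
Your main case is exactly the paper's argument. In the paper's notation $ij$ denotes two consecutive singletons, and the proof replaces them by the shorter interlaced pair $(j,i)$ using $d_j\le d_i$ from agreeability, as you do, with your version adding detail on feasibility and the lateness comparison. The other configurations you treat are already Lemmas \ref{L13A} and \ref{L15C}, and termination is immediate because each replacement of two singletons by a pair strictly reduces the number of singletons (your suggested potential can actually increase under the $(x,i)j\to(x,j)i$ rewrite, but that case is not needed here).
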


\begin{proof}
If $j\in \mathcal{I}\setminus T_{\mathcal{I}}$, then the sequence $ij$ of two singletons  can be replaced by a shorter pair $(j,i)$ of interlaced jobs $i$ and $j$ without increasing maximum lateness since $d_j\leq d_i$ in the agreeable case.
\end{proof}

By lemmas \ref{L13A} and \ref{L15A}  it is sufficient to consider schedule $S$ with singleton jobs from $T_{\mathcal{I}}$, if any,  all pushed to the end of $S$. Moreover, it can be readily shown that those singleton long jobs are  scheduled in the EDD order in $S$. Thus, we just shown that the following lemma holds.

\begin{lemma}
Let $S$ has singletons  from $T_{\mathcal{I}}$. Then, a singleton $i\in T_{\mathcal{I}}$ is either last in $S$ or it is followed by another singleton  $j\in T_{\mathcal{I}}$, if any, such that $d_i\leq d_j$.
\end{lemma}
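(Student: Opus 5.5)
We argue by normalization: start from an optimal schedule $S$ and, without increasing $L_{\max}$ or the makespan, apply the exchanges of Lemmas \ref{L13A} and \ref{L15A} and then one final adjacent swap among long singletons. It suffices to show that a singleton $i\in T_{\mathcal{I}}$ that is not last is immediately followed by a singleton $j\in T_{\mathcal{I}}$, and that consecutive long singletons can be put in EDD order.

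First we consider what can follow a singleton $i\in T_{\mathcal{I}}$. It is either a singleton $j$ or the first job of a pair of interlaced jobs. In the first case, if $j\notin T_{\mathcal{I}}$, Lemma \ref{L15A} replaces $ij$ by the shorter pair $(j,i)$ without increasing maximum lateness. In the second case the pattern is $i(j,k)$, and Lemma \ref{L13A} replaces it by $(j,i)k$ with the same makespan and no larger maximum lateness.

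Each such replacement absorbs $i$ into a pair. It therefore strictly decreases the number of long singletons, or it leaves the singleton $k$ in place of $i$. Hence we use the potential (number of singletons from $T_{\mathcal{I}}$, total position of such singletons) and observe that every replacement improves it lexicographically. After finitely many replacements, every long singleton is last or is followed by another long singleton. We must also check that the jobs after the modified segment start no later than before. This holds because the makespans of the replaced segments do not increase, so their latenesses do not increase either.

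Next we order the long singletons, which by now form a block of consecutive singletons at the end of $S$. If $ij$ are consecutive singletons with $d_i>d_j$, we swap them. Let $t$ be the start of $i$. Before the swap, $L_j=t+4p+b_i+b_j-d_j$. After the swap, job $j$ completes at $t+2p+b_j$ and job $i$ completes at $t+4p+b_i+b_j$. Since $d_i>d_j$, the latter gives lateness below the former $L_j$, and the completion time of the block is unchanged. This is the standard adjacent interchange argument for EDD. Repeating it sorts the block into EDD order, so the claimed property holds.

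The main obstacle is ensuring the reductions terminate. Lemma \ref{L13A} turns $i(j,k)$ into $(j,i)k$, which may create a new singleton $k$. If $k$ is not long, the block structure changes. If $k\in T_{\mathcal{I}}$, we need the potential above to guarantee that no cycle arises. I would first try the lexicographic measure: the number of singletons not in $T_{\mathcal{I}}$ that are preceded by a long singleton, followed by the position of the first long singleton that is not at the tail. I would then verify case by case that each lemma decreases this measure.
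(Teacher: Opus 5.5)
Your proposal is correct and follows the paper's route: normalize with Lemmas \ref{L13A} and \ref{L15A} so that the long singletons form a terminal block, then sort that block by the adjacent EDD interchange, which the paper only calls ``readily shown'' and you verify explicitly. Your first potential already gives termination if you read it as (number of long singletons, decreasing; total position of long singletons, increasing and bounded), so drop the alternative measure in your last paragraph. That measure is not monotone: $i(j,k)m$ with $i,k\in T_{\mathcal{I}}$ and $j,m$ short becomes $(j,i)km$, which creates a new short singleton preceded by a long one.
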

%\begin{proof}
%\end{proof}

We now turn our attention to the long jobs in $T_{\mathcal{I}}$ which may be interlaced with other jobs in $S$.

\begin{lemma} \label{L13C}
Without loss of generality we can assume that $(i,j)(k,l)$, where $j\in T_{\mathcal{I}}$ and $l\in \mathcal{I}\setminus T_{\mathcal{I}}$ does not occur in $S$.
\end{lemma}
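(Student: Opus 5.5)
The plan is a single exchange argument modelled on the proof of (\ref{V1}) in Lemma \ref{L10}. Suppose $(i,j)(k,l)$ occurs in $S$ with $j\in T_{\mathcal{I}}$ and $l\in \mathcal{I}\setminus T_{\mathcal{I}}$, so that $b_l\leq p<b_j$. First I record the relative timetable. Job $i$ occupies $[0,p]$ and $[2p,2p+b_i]$, and job $j$ occupies $[p,2p]$ and $[3p,3p+b_j]$; feasibility of the interlacing forces $b_i\leq p$. Then $k$ starts at $3p+b_j$, $l$ at $4p+b_j$, $k$ completes at $5p+b_j+b_k$ with $b_k\leq p$, and $l$ completes at $6p+b_j+b_l$.

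The first key step is to compare due dates. Since $b_l<b_j$, the agreeable condition gives $d_l\leq d_j$: otherwise $d_l>d_j$ would force $b_l\geq b_j$.

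The second step is to swap $j$ and $l$, obtaining $(i,l)(k,j)$, and to check feasibility. The interlacing $(i,l)$ still needs only $b_i\leq p$. The interlacing $(k,j)$ needs only $b_k\leq p$, since $j$'s second task comes last and may be long. Both conditions already held in $S$.

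Then I compare completion times. Before the swap,
\begin{align*}
L_j&=3p+b_j-d_j,\\
L_k&=5p+b_j+b_k-d_k,\\
L_l&=6p+b_j+b_l-d_l.
\end{align*}
After the swap,
\begin{align*}
\ell_l&=3p+b_l-d_l,\\
\ell_k&=5p+b_l+b_k-d_k,\\
\ell_j&=6p+b_l+b_j-d_j.
\end{align*}
Now $\ell_l<L_l$ trivially, and $\ell_k\leq L_k$ because $b_l<b_j$. Finally $\ell_j\leq L_l$ because $d_l\leq d_j$.

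The makespan of the two pairs is $6p+b_j+b_l$ both before and after the swap. Hence all later jobs keep their lateness, and the lateness of $i$ and of all earlier jobs is untouched. So the maximum lateness does not increase.

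The remaining step is to make the ``without loss of generality'' rigorous. The swap moves the long job $j$ into the second position of a later pair, so I would use a potential argument, for example the sum over long interlaced jobs of their pair index, which strictly increases and is bounded. Repeating the swap therefore terminates in a schedule with no forbidden pattern.

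The main (mild) obstacle is this termination bookkeeping, together with checking that a swap does not create the patterns excluded by Lemmas \ref{L13A} and \ref{L15A}. Singletons are untouched by the swap, so the latter check should be immediate. Note also that Lemma \ref{L10} already gives $d_j\leq d_l$ in an optimal schedule, so in fact $d_j=d_l$ there; the exchange handles the equal-due-date tie in any case.
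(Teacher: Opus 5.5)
Your proof is correct and is the paper's argument. You swap $j$ and $l$ to get $(i,l)(k,j)$, use $b_l<b_j$ to handle job $k$ and $d_l\le d_j$ to bound the new lateness of $j$ by the old lateness of $l$, and note that the makespan $6p+b_j+b_l$ of the two pairs is unchanged. The potential-function termination argument is a sensible addition that the paper leaves implicit, but drop the closing aside about Lemma \ref{L10}: it was proved only for instances without long jobs, and it is not needed here.
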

\begin{proof} Suppose that $(i,j)(k,l)$, where $j\in T_{\mathcal{I}}$ and $l\in \mathcal{I}\setminus T_{\mathcal{I}}$ occurs in $S$. For $(i,j)(k,l)$ we have
\begin{align*}
\ell_i=2p+b_i-d_i\\
\ell_j=3p+b_j-d_j\\
\ell_k=5p+b_j+b_k-d_k\\
\ell_l=6p+b_j+b_l-d_l.
\end{align*}
For $(i,l)(k,j)$ we have
\begin{align*}
L_i=2p+b_i-d_i\\
L_l=3p+b_l-d_l\\
L_k=5p+b_l+b_k-d_k\\
L_j=6p+b_l+b_j-d_j
\end{align*}
Since $j>l$, we have $\ell_l\geq L_j$, and $\ell_k\geq L_k$.  Also $\ell_l\geq L_l$, and $\ell_i\geq L_i$. Thus $\max\{\ell_i,\ell_j,\ell_k,\ell_l\}\geq \max\{\ell_k,\ell_l\}\geq \max\{L_i,L_j,L_k,L_l\}$.

\end{proof}

\begin{lemma}\label{L15C}
Without loss of generality we can assume that $(i,j)k$, where $j\in T_{\mathcal{I}}$ and $k\in \mathcal{I}\setminus T_{\mathcal{I}}$, does not occur in $S$.
\end{lemma}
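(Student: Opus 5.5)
We use an exchange argument patterned on Lemma \ref{L13C}, with the singleton $k$ playing the role of the pair $(k,l)$. Suppose $(i,j)k$ occurs in $S$ with $j\in T_{\mathcal{I}}$ and $k\in\mathcal{I}\setminus T_{\mathcal{I}}$, so $b_k\leq p<b_j$. Since the instance is agreeable and $b_j>b_k$, we have $d_k\leq d_j$, i.e. $k<j$. Measuring time from the start of the pair, the latenesses in $(i,j)k$ are
\begin{align*}
\ell_i&=2p+b_i-d_i,\\
\ell_j&=3p+b_j-d_j,\\
\ell_k&=5p+b_j+b_k-d_k .
\end{align*}
The natural replacement is $(i,k)j$, which swaps the roles of $j$ and $k$. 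Its latenesses are
\begin{align*}
L_i&=2p+b_i-d_i,\\
L_k&=3p+b_k-d_k,\\
L_j&=5p+b_k+b_j-d_j .
\end{align*}
Both blocks have the same length $5p+b_j+b_k$ and start at the same time. Hence every job outside the block keeps its completion time and lateness.

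The comparison then goes term by term. First, $L_i=\ell_i$. Second, $L_k<\ell_k$, since $3p+b_k<5p+b_j+b_k$. Third, $L_j\leq\ell_k$, since $d_k\leq d_j$ (the step where agreeability is used). So $\max\{L_i,L_j,L_k\}\leq\max\{\ell_i,\ell_j,\ell_k\}$, and the replacement does not increase the maximum lateness.

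We also need the replacement to be feasible and not to create a new forbidden configuration. Feasibility holds because $b_k\leq p$, so in $(i,k)$ the task $O_{k2}$ fits in the delay of $i$ exactly as $O_{j2}$ did, and the singleton $j$ after it is always feasible. The new block ends with the long singleton $j$. This is the configuration allowed by the preceding lemma, in which long singletons are pushed to the end in EDD order. To make the ``without loss of generality'' rigorous, we argue by iteration. Each replacement moves a long job strictly later in the sequence, or equivalently reduces the number of pairs $(x,y)$ with $y$ long followed by a short job. Together with Lemmas \ref{L13A}, \ref{L15A} and \ref{L13C}, repeated application therefore terminates in a schedule with none of the forbidden patterns.

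The main obstacle is confirming that these modifications do not cycle with one another, for instance that the $(i,k)j$ produced here is not later undone by Lemma \ref{L15A} pulling a short job ahead. We plan to handle this with a potential function: the sum, over all long jobs, of their positions in the sequence. Every exchange in Lemmas \ref{L13A}--\ref{L15C} weakly increases this sum, and the present one strictly increases it.
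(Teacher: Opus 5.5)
Your exchange $(i,j)k \to (i,k)j$ and the term-by-term comparison ($L_i=\ell_i$, $L_k<\ell_k$, and $L_j\le\ell_k$ via $d_k\le d_j$) are exactly the paper's proof, and your potential-function argument (the sum of positions of long jobs, which in fact strictly increases under each exchange in Lemmas \ref{L13A}--\ref{L15C}) is a correct addition that makes the ``without loss of generality'' rigorous where the paper leaves it implicit. One small correction: feasibility of $(i,k)$ rests on $b_i\le p$, inherited from the original pair $(i,j)$, not on $b_k\le p$, since in $(i,k)$ it is $O_{k1}$ that occupies $i$'s delay and $O_{i2}$ that occupies $k$'s delay.
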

\begin{proof} Suppose that $(i,j)k$, where $j\in T_{\mathcal{I}}$ and $k\in \mathcal{I}\setminus T_{\mathcal{I}}$ occurs in $S$. For $(i,j)k$ we have
\begin{align*}
\ell_i=2p+b_i-d_i\\
\ell_j=3p+b_j-d_j\\
\ell_k=5p+b_j+b_k-d_k.
\end{align*}
and for $(i,k)j$ we have
\begin{align*}
L_i=2p+b_i-d_i\\
L_k=3p+b_k-d_k\\
L_j=5p+b_k+b_j-d_j.
\end{align*}
Since $j>k$, we have $\ell_k\geq L_j$. Also $\ell_k\geq L_k$, and $\ell_i\geq L_i$. Thus $\max\{\ell_i,\ell_j,\ell_k\}\geq \max\{L_i,L_j,L_k\}$.

\end{proof}

By Lemmas \ref{L13C} and \ref{L15C}  it is sufficient to consider schedule $S$ with pairs of interlaced jobs where one of the interlaced jobs is from $T_{\mathcal{I}}$, if any,  all pushed to the end of $S$ yet, by Lemma \ref{L13A},  before singleton jobs from $T_{\mathcal{I}}$, if any.

We can summarize all those observations in the following lemma.

\begin{lemma} \label{LT}
For an instance $\mathcal{I}$ with nonempty $T_{\mathcal{I}}=\{i, \dots, n\}$ there is an optimal schedule $S$ that ends with
\begin{equation}
(y_a,i)\dots(y_1, i+a-1) i+a \dots n
\end{equation}
where $y_a\prec \dots \prec y_1$ are in $\mathcal{I}\setminus T_{\mathcal{I}}$, and $0\leq a\leq n-i+1$. That is
\begin{equation*}
S=s(y_a,i)\dots(y_1, i+a-1) i+a \dots n,
\end{equation*}
where $s$ is the schedule for an instance $\mathcal{I}\setminus \{y_1, \dots, y_a, i, \dots, n\}$ without long jobs.
\end{lemma}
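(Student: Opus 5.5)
Starting from an arbitrary optimal schedule, I would apply the exchange lemmas of this subsection in a fixed order, each time checking that maximum lateness does not increase and makespan does not grow, so the result is still optimal. Each component of a schedule is a singleton or an interlaced pair. In the agreeable case the long jobs are the largest indices, so $T_{\mathcal{I}}=\{i,\dots,n\}$. A pair can contain at most one long job, and only as its second job: a long first task would overlap the next start. Two long jobs cannot be interlaced.

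\textbf{Step 1.} Using Lemma \ref{L15A} and Lemma \ref{L13A}, I would push every long singleton past each short singleton and past each pair that follows it. Using Lemma \ref{L15C} and Lemma \ref{L13C}, I would push every pair $(y,j)$ with $j\in T_{\mathcal{I}}$ past any following short singleton or short-only pair. Potential argument: each exchange strictly decreases the number of inversions between a ``long-containing'' component and a later ``short-only'' component, so the process terminates. It ends with $S=s\,P\,Q$, where $s$ uses only short jobs, $P$ is a block of pairs each with a long second job, and $Q$ is a block of long singletons. The lemma preceding Lemma \ref{L13C} already gives $Q$ in EDD order. One point to verify: a pair $(y,j)$ with $j$ long followed by a long singleton $k$ could, a priori, be swapped the wrong way. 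I would check directly by exchange that $(y,j)k$ can be replaced by $(y,k')k''$ with an appropriate relabelling, so pairs precede long singletons.

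\textbf{Step 2.} Within $P$ and $Q$ I would show the long jobs appear in EDD order, i.e.\ as $i,i+1,\dots,n$ in increasing index. The tool is an exchange like the first part of the proof of Lemma \ref{L10}: swapping two long jobs $j>j'$ that are out of order. For two consecutive pairs $(y,j)(y',j')$ with $d_j>d_{j'}$, swapping $j$ and $j'$ gives $L_{j'}$ before the swap dominating both new latenesses, since $d_j>d_{j'}$. Because $b_{j'}\le b_j$, the start of $y'$ and all later jobs do not move later. The same argument works between a pair and a singleton. This also shows that the long jobs used in pairs are exactly the $a$ smallest ones, $i,\dots,i+a-1$.

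\textbf{Step 3, the main obstacle.} I would order the short partners $y$ so that $y_a\prec\dots\prec y_1$, i.e.\ by nondecreasing $d_y-b_y$ in schedule order. In $(y,j)(y',j')$ the job $y$ finishes at $2p+b_y$ relative to its pair start, and $y'$ finishes at $2p+b_j+2p+b_{y'}$. Swapping $y$ and $y'$ does not change the makespan. Since $j$ still starts at $p$, the swap leaves $L_j$ unchanged; I must check that $L_j$ does not increase and that the start of $j'$ stays fixed. Then it remains to compare $\max\{b_y-d_y,\;b_{y'}-d_{y'}+\text{const}\}$, which is the classical two-job argument that ordering by $d-b$ (Jackson's rule on $d_y-b_y$) minimizes the maximum. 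This yields $y_a\prec\dots\prec y_1$.

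Finally, $s$ is then an optimal schedule of the remaining short jobs, since any improvement of $s$ could be substituted without affecting later components. The delicate part is Step 1's bookkeeping: each replacement in Lemmas \ref{L13A}--\ref{L15C} must not destroy the structure already established, so the potential must be chosen carefully.
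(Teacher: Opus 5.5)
Your proof is correct and follows the paper, which obtains this lemma simply by combining Lemmas \ref{L13A}, \ref{L15A}, \ref{L13C}, \ref{L15C} with the EDD order of the long singletons. Your Steps 2 and 3 make explicit two things the paper leaves implicit: that the paired long jobs are $i,\dots,i+a-1$ in EDD order, and that the partners are ordered by $\prec$. One repair is needed in Step 1: the replacement $i(j,k)\to(j,i)k$ of Lemma \ref{L13A} leaves the sequence of component types unchanged, so your potential does not decrease there; use it lexicographically with the number of (long singleton, later pair) inversions, which that replacement strictly decreases, and note that the same lemma with $k$ long already puts pairs before long singletons, so your extra check at that point is unnecessary.
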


%In order to find an optimal schedule it suffices to consider schedules  for each $0\leq a\leq n-i+1$. 
%The jobs $y_1, \dots, y_a$ can not be decided in advance since the orders $\prec$ and $<$ may not be consistent. 
Observe  that the border pairs $(k<l)(y_a<i)$ 
between $s$ and $(y_a,i)\dots(y_1, i+a-1)$, where $y_a,k,l\in \mathcal{I}\setminus T_{\mathcal{I}}$, $i\in T_{\mathcal{I}}$ may be interlaced since $l\prec y_a$ though $y_a<l$.
Thus $(y_a,i)\dots(y_1, i+a-1) $ may not be a block since $(k<l)(y_a<i)$ may not be a separator and consequently the jobs $y_1, \dots, y_a$ can not be decided in advance and need to be determined by the algorithm given in subsequent sections.

Finally, observe that the schedule $s(y_a,i)\dots(y_1, i+a-1) $ does not necessarily satisfy Lemma \ref{L10} which is an anomaly resulting from the presence of long jobs.
To see this let $\{i<j\}$ immediately precedes  $\{k<l\}$ such that $j, l\in T_{\mathcal{I}}$ in a schedule. Clearly $(i, j)$ and $(k, l)$. Also we have $j<l$ for the agreeable case, and $i\prec k$. However 
we may have $i>k$ at the same time. Thus $\min\{d_i,d_j\}\geq \min\{d_k,d_l\}$ and $\max\{d_i,d_j\}\leq\max\{d_k,d_l\}$ which is in contrast to Lemma \ref{L10} for the schedules without tails.

\subsection{From semi-feasible to feasible sequences}

%Thus we have $\min\{d_i,d_j\}\leq \min\{d_k,d_l\}$ and $\max\{d_i,d_j\}\leq\max\{d_k,d_l\}$ for the border. We also have $j\prec k$. However we may have $j>k$, which meas that $\{i<j\}$ and $\{k,l\}$ are interlaced. 

We now show that a semi-feasible sequence can always be turned into a feasible one without loosing optimality. We have the following key lemma.
%With these observations in mind we get the following key result.

\begin{lemma} \label{LM}
Let  $\alpha$ be a semi-feasible sequence with maximum lateness not exceeding $\lambda$.  For $\alpha$ there is a feasible sequence $\beta$   with maximum lateness $\lambda'\leq \lambda$.
\end{lemma}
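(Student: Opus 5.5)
We will transform $\alpha$ into $\beta$ by repairing misses and doubles one at a time, using an exchange argument in the style of Lemmas \ref{L10} and \ref{L2}. The goal at each step is to reduce the number of misses by one without increasing the maximum lateness of any position above $\lambda$. Recall that a semi-feasible sequence $\{i_1,j_1\},\dots,\{i_m,j_m\}$ has strictly increasing first elements $i_1<\dots<i_m$ and strictly increasing second elements $j_1<\dots<j_m$. Since the number of misses equals the number of doubles, whenever $\alpha$ is not feasible there is a miss $x$ and a double $y$, where $y$ occurs as $y=j_k$ and $y=i_l$ with $k<l$ (it cannot appear twice in the same coordinate because of strict monotonicity). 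The lateness bound $\lambda$ is evaluated for the schedule induced by $\alpha$, taking within each pair the order prescribed by Lemma \ref{L2} and the $\prec$ rule (that is, $(j,i)$ when $j\prec i$ and $(i,j)$ otherwise). Completion times then depend only on the multiset of $b$'s of the second jobs in the pairs and on the positions.

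The repair step is as follows. Choose a miss $x$ and a double $y$ that are adjacent in the EDD order among the ``defect'' jobs; for instance, take the smallest miss $x$ and the nearest double $y$ on the appropriate side. Replace one occurrence of $y$ by $x$. If $x<y$, we replace the occurrence $y=i_l$ in the later pair, or the earlier occurrence if that is the one monotonicity permits, so that the sequences $i_\cdot$ and $j_\cdot$ remain strictly increasing. Choosing $x,y$ with no other index of the sequences strictly between them guarantees this. If $x>y$, we replace the other occurrence symmetrically. We must also check that the side conditions $i_k\le j_k$ and $i_k\neq j_{k-1}$ still hold, or can be restored by relabeling a pair as a singleton. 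Each such step strictly decreases the number of misses, so after at most $n$ steps we obtain a feasible $\beta$.

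The key estimate is that the replacement does not increase the maximum lateness. In the agreeable case $x<y$ means $d_x\le d_y$ and $b_x\le b_y$. Placing $x$ at the position of the occurrence of $y$ that comes first in time, and keeping $y$ at the later one, is the favorable direction. The earlier due date goes earlier, and the smaller processing time only shortens the prefix, so all subsequent completion times weakly decrease. The lateness of the moved job is compared exactly as in the swap computations of Lemma \ref{L10}: the job left in the later position has a larger due date, and its lateness is dominated by the old lateness of the later copy of $y$. The case $x>y$ is handled the same way, with $x$ placed in the later occurrence; its lateness is at most that of the old later copy of $y$ because $d_x\ge d_y$. One subtlety is that $b_x\ge b_y$ may lengthen the schedule after that position. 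This is why $x$ should be taken as large as possible, ideally occupying the last relevant slot, and why the prefix-monotonicity argument must be combined with the bound $\lambda$ on later jobs.

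The main obstacle will be the case where the repair changes the internal order of a pair. Because $\prec$ and $<$ may disagree, substituting $x$ for $y$ can flip whether $(i,j)$ or $(j,i)$ is used, and this changes which job contributes its $b$ to the makespan. I would handle this first by keeping the old order of the pair rather than the $\prec$-optimal one, and showing that the lateness bound already holds for that fixed order; re-optimizing the order afterwards can only help. If that fails in some subcase, I would fall back on a case analysis paralleling Lemma \ref{L12X}, where the configurations with $b\le p$ yield the inequality $d_l-d_i+b_i-b_l<p$ that closes the argument.
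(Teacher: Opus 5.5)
Your proposal has a genuine gap. You replace one occurrence of the double directly by the miss, but that step cannot be carried out as stated.

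First, you cannot in general choose a miss $x$ and a double $y$ with no other index of the sequences strictly between them. For $n=10$, the sequence $\{1,3\},\{2,5\},\{3,6\},\{4,8\},\{9,10\}$ is semi-feasible. Its only double is $3$ and its only miss is $7$, while $4,5,6$ each occur once. Replacing either occurrence of $3$ by $7$ destroys the monotonicity of the $i$'s or of the $j$'s. So a one-for-one substitution is not enough. The miss has to be inserted into a sequence, the entries between its slot and the double's slot have to move by one pair, and you then need to control the lateness of every moved job; your argument does not do this.

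Second, when $x>y$ you put $x$ into the later occurrence $y=i_l$. If pair $l$ is followed by an interlaced pair, Lemma~\ref{L2} makes $i_l$ the second-started job, whose $b$ fixes the pair length $3p+b$. In the example above, the later occurrence of $3$ is exactly the second-started job of $(6,3)$. Then $b_x\ge b_y$ can delay every later job, and ``take $x$ as large as possible'' does not bound this.

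Third, the bound on the inserted job's own lateness is missing. ``The earlier due date goes earlier'' does not control $2p+b_x-d_x$, precisely because $<$ and $\prec$ can disagree.

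The missing idea is to look at where the two occurrences of the double sit. Suppose $i_a=j_b$ with $b<a$; semi-feasibility forces $a\ge b+2$. Then every pair $c=b,\dots,a-1$ is followed by an interlaced pair, so Lemma~\ref{L2} orients it as $(j_c,i_c)$. Hence the earlier occurrence $j_b$ is a first-started job. Its second task has length at most $p$, lies inside the delay of $i_b$, and affects no completion time.

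You should therefore always delete that occurrence, whether the miss is smaller or larger than the double. Insert the miss among the $j$'s and shift the intermediate $j$'s by one pair. Only first-started jobs move, so all pair lengths and start times and all latenesses of the $i$'s are unchanged. Each moved job $j$ is dominated by its new partner $i_\ell$: $2p+b_j-d_j\le 3p+b_{i_\ell}-d_{i_\ell}$ whenever $d_{i_\ell}\le d_j$, because $b_j-b_{i_\ell}\le p$. Interlacing gives $d_{i_\ell}\le d_{j_{\ell-1}}$ for the shifted jobs, and the insertion slot is chosen so that the same inequality holds for the miss.

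This is the route the paper takes. It also needs a separate step that you do not address, for the case where $j_a$ is a long job. There the paper derives a contradiction from Lemma~\ref{L2} and the $\prec$ order of the pairs containing long jobs.

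The ``main obstacle'' you identify, orientation flips, does not arise for the pairs between the two occurrences, since Lemma~\ref{L2} fixes their orientation. The real danger is placing the miss in the makespan-determining slot.
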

\begin{proof}  %The lemma holds for $k=0$. Thus assume $k>0$. Thus $\alpha$ has $k$ doubles and the same number of misses. 
 Consider $\alpha$ with minimum number of doubles. 
% If there are no doubles, then $\alpha$ is feasible and the lemma holds. 
For contradiction suppose that  $i_a$ is a double in $\alpha$.
%there is at least one double in $\alpha$. 
Without loss of generality it suffices to consider $\alpha$ being a block with at least two pair ($\beta-\alpha\geq 4$). Suppose first that $j_a\in \mathcal{I}\setminus T_{\mathcal{I}}$.
Let $i_a = j_b$ for some $b$.  We then have $a>b$ since $a<b$ implies $i_a<j_a<j_b=i_a$ which gives contradiction. Thus for $\alpha$ we have
\begin{equation*}
i_1<\dots < i_b<\dots < i_a, \dots ,i_{m} 
\end{equation*}
\begin{equation*}
j_1<\dots < j_b= i_a<\dots < j_a,\dots ,j_{m}.
\end{equation*}

The pairs $\{i_c,j_c\}$ and $\{i_{c+1}, j_{c+1}\}$ for $c=1, \dots, a-1$ are interlaced. Observe that by definition of semi-feasible sequence we have $a-1>1$. By Lemma \ref{L2} we have
\begin{equation*}
(j_1, i_1) \dots (j_b, i_b) \dots (j_{a-1}, i_{a-1})
%, \dots, j_a\rightarrow i_a.
\end{equation*}
in $\alpha$.

 \begin{figure}
\centering         %declaration corresponding to the center environment
\includegraphics[scale=0.60]{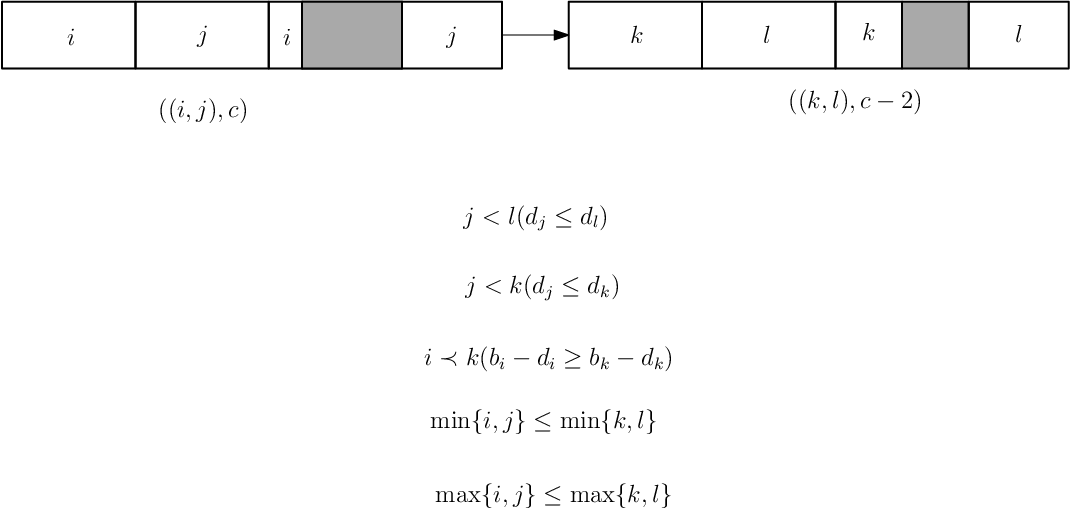}
\caption{The conditions for the arc between $((i, j), c)$ and $((k, l), c-2)$ to exist.}
\label{F1}
\end{figure}
Let $x$ be the largest index such that
\begin{equation}
j\in \{i_x, \dots, n\} \setminus ( \{i_x, \dots, i_{m}\}\cup \{j_x, \dots, j_{m}\}),
\end{equation}
such an index exists since there is a miss in $\alpha$. Assume $x\leq b$, replace
%\begin{equation*}
%i_1<\dots < i_x< \dots < i_b<\dots < i_a<\dots <i_{\frac{n}{2}} 
%\end{equation*}
\begin{equation*}
(j_1, i_1) \dots (j_x, i_x) \dots  (j_b= i_a, i_b)\dots  (j_{a-1}, i_{a-1})
\end{equation*}
in $\alpha$ by
\begin{equation*}
(j_1, i_1) \dots (j, i_x) \dots  (j_{b-1}, i_b)\dots  (j_{a-1}, i_{a-1})
\end{equation*}
to obtain sequence $\beta$.

We have $3p+b_{i_b}-d_{i_b}\geq 2p+b_{j_{b-1}}-d_{j_{b-1}}$ since $d_{i_b}\leq d_{j_{b-1}}$ (interlacing) and $b_{j_{b-1}}-b_{i_b}\leq p$ (assumption). Observe that this applies to all pairs between $x$ and $b$, if any,  since $i_{\ell}<j_{\ell-1}$ 
(otherwise, if $i_{\ell}>j_{\ell-1}$ for some $\ell$, then the pairs $\{i_{\ell-1}<j_{\ell-1}\}$ and  $\{i_{\ell}<j_{\ell}\}$ would not be interlaced which gives contradiction) for $\ell=x+1, \dots,b$. Thus the replacement does not increase the maximum lateness and preserves the makespan.
% since
%\begin{equation*}
%(j, i_x), (j_x, i_{x+1}), \dots, (j_{b-1}, i_b)
%, \dots, j_a\rightarrow i_a.
%\end{equation*}
Therefore we obtain a sequence $\beta$ with fewer doubles than in $\alpha$ which gives contradiction.
Now assume $b<x$, replace
%\begin{equation*}
%i_1<\dots < i_x< \dots < i_b<\dots < i_a<\dots <i_{\frac{n}{2}} 
%\end{equation*}
\begin{equation*}
(j_1, i_1) \dots   (j_b= i_a, i_b)\dots  (j_{a-1}, i_{a-1})
\end{equation*}
in $\alpha$ by
\begin{equation*}
(j_1, i_1) \dots  (j, i_b)\dots  (j_{a-1}, i_{a-1})
\end{equation*}
to obtain sequence $\beta$. We have $3p+b_{i_b}-d_{i_b}\geq 2p+b_{j}-d_{j}$ since $d_{i_b}\leq d_{j}$ and $b_{j_{b-1}}-b_{i_b}\leq p$ (assumption). Thus the replacement does not increase the maximum lateness and preserves the makespan.
% since
%\begin{equation*}
% (j, i_b)
%, \dots, j_a\rightarrow i_a.
%\end{equation*}
Therefore we obtain a sequence $\beta$ with fewer doubles than in $\alpha$ which gives contradiction.

Suppose now $j_a\in T_{\mathcal{I}}$. Then $(i_{a-\ell}, j_{a-\ell}), \dots, (i_a, j_a)$, where $j_{a-\ell}$ is the earliest in $T_{\mathcal{I}}$, and by Lemma \ref{L2} $(j_b, i_b) \dots (j_{a-\ell-1}, i_{a-\ell-1})$ with all jobs
in $\mathcal{I}\setminus T_{\mathcal{I}}$.
However $j_b=i_a\prec \dots \prec  j_{a-\ell-1}\prec i_{a-\ell}\prec \dots \prec i_a$ which gives contradiction.

\end{proof}

\subsection{The graphs for the agreeable case}

We begin with the graph $G=(\mathcal{N},A)$  of \emph{unordered} pairs. The set of nodes $\mathcal{N}$
%$\mathcal{N}=\{(\{i< j\}, c): 1\leq i< j \leq n, i\in \mathcal{I}\setminus T_{\mathcal{I}}, 2\leq c \leq n\}\cup \{(\{i,i\}, c): 1\leq i \leq n, 1\leq c \leq n\}$.
%  1\leq c \leq \frac{n}{2}, \frac{n}{2}-i+1\leq c \leq \frac{n-i+1}{2}$.
% \text{if  } b_j>p, \text{ then  } b_i\leq p\}$. 
\begin{equation*}
\mathcal{N}=\{(\{i< j\}, c): 1\leq i< j \leq n, i\in \mathcal{I}\setminus T_{\mathcal{I}}, 2\leq c \leq n\}\cup \{(\{i,i\}, c): 1\leq i \leq n, 1\leq c \leq n\}.
\end{equation*}
The set of arcs $A$

For $j\in \mathcal{I}\setminus T_{\mathcal{I}}$, the arcs from $(\{i<j\}, c\geq 2)\in \mathcal{N}$ exist to each  $(\{k< l\}, c-2)\in \mathcal{N}$
such that $k>i$, $l>j$ and $j\neq k$, and to   $(\{j+1, j+1\}, c-2)\in \mathcal{N}$.

For $j\in T_{\mathcal{I}}$, the arcs from $(\{i<j\}, c\geq 2)\in \mathcal{N}$ exist to each  $(\{k< j+1\}, c-2)\in \mathcal{N}$
such that $i\prec k$, and to   $(\{j+1, j+1\}, c-2)\in \mathcal{N}$.

For $i\in \mathcal{I}\setminus T_{\mathcal{I}}$, the arcs from $(\{i,i\}, c\geq 1)\in \mathcal{N}$ exist to each  $(\{i+1< l\}, c-1)\in \mathcal{N}$
%, and to  $(\{i+1,i+1\}, c-1)\in \mathcal{N}$
provided that $i+1\in \mathcal{I}\setminus T_{\mathcal{I}}$,
%. If $i+1\in  T_{\mathcal{I}}$, the arc from $(\{i,i\}, c\geq 1)\in \mathcal{N}$ exists 
and to  $(\{i+1,i+1\}, c-1)\in \mathcal{N}$.

For $i\in T_{\mathcal{I}}$, the arc from $(\{i,i\}, c\geq 1)\in \mathcal{N}$ exists to  $(\{i+1,i+1\}, c-1)\in \mathcal{N}$ provided that $i<n$.

No other arcs exist in $A$. 

%Let $G=(\mathcal{N},A)$ be the graph of unordered pairs. 
In the graph $G=(\mathcal{N},A)$ we focus on the paths from the \emph{initial} nodes $(\{1\leq j\}, n)$, $j=1, \dots, n$, of in-degree zero to
the \emph{final} nodes $(\{i\leq n\},|\{i,n\}|)$, $i=1, \dots, n$, of out-degree zero. Any such path is of the following form 
\begin{equation*}
(\{i_1=1,j_1\}, c_1=n)\rightarrow (\{i_2,j_2\}, c_2) \rightarrow \dots \rightarrow (\{i_m,j_m=n\},c_m),
\end{equation*}
where
\begin{equation}
i_1< \dots <i_{m'}, i_{m'+1}\prec \dots \prec i_{m}
\end{equation}
\begin{equation}
 j_1< \dots <j_{m'}< j_{m'+1}<\dots< j_{m}
 \end{equation}
 for some $m'\leq m$,
 \begin{equation}
 \{ j_{m'+1}, \dots, j_{m}\}=T_{\mathcal{I}}
 \end{equation}
 and
 \begin{equation}
i_k\leq j_k,
 \end{equation}
for $k=1, \dots, m$, and
 \begin{equation}
i_k\leq j_{k-1},
 \end{equation}
for $k=2, \dots, m$, and
\begin{equation}
c_k=n - \sum_{\ell=1}^{k-1} |\{ i_{\ell}, j_{\ell}\}|
\end{equation}
for $k=1, \dots, m-1$, and
\begin{equation}
c_m=n - \sum_{\ell=1}^{m-1} |\{ i_{\ell}, j_{\ell}\}|=| \{i_m,j_m\}|
\end{equation}
These seven conditions follow immediately from the definition of  $G=(\mathcal{N},A)$. A \emph{feasible} path needs to also satisfy the following condition
\begin{equation}
\{i_1, \dots,
i_{m}\}\cup \{j_1, \dots,
j_{m}\}=\{1, \dots, n\}
\end{equation}
By Lemma \ref{LM}, this last condition holds 
%sets $\{i_1, \dots,
%i_{\frac{n}{2}}\}$ and $\{j_1, \dots,
%j_{\frac{n}{2}}\}$  are disjoint 
for optimal paths in the graph of \emph{ordered} pairs which we now define.  In the graph $G$, replace each node $(\{i<j\},c)$ with two nodes $((i, j), c)$ and $((j, i), c)$, see Figure \ref{F2},
%replace each node $(\{i<j\},c)$, where $j\in T_{\mathcal{I}}$, with one node $(i \rightarrow j, c)$.
  and  each arc $(\{i<j\},c)\rightarrow (\{k<l\},c-2)$ by four arcs as in Figure \ref{F3},  each arc $(\{i,i\},c)\rightarrow (\{i+1<l\},c-1)$ by two arcs as in Figure \ref{F3B}, each arc $(\{i<j\},c)\rightarrow (\{j+1,j+1\},c-2)$ by two arcs as in Figure \ref{F3A}, to obtain the graph $G'$ of ordered pairs.

 \begin{figure}
\centering         %declaration corresponding to the center environment
\includegraphics[scale=0.60]{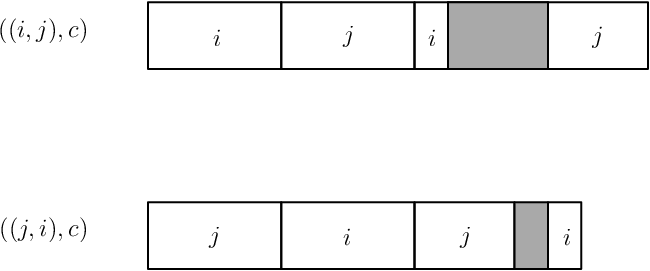}
\caption{The two nodes $((i, j), c)$ and $((j, i), c)$ (and their respective schedules) to replace $(\{i<j\}, c)$, $i, j\in\mathcal{I}\setminus T_{\mathcal{I}}$.}
\label{F2}
\end{figure}

 \begin{figure}
\centering         %declaration corresponding to the center environment
\includegraphics[scale=0.60]{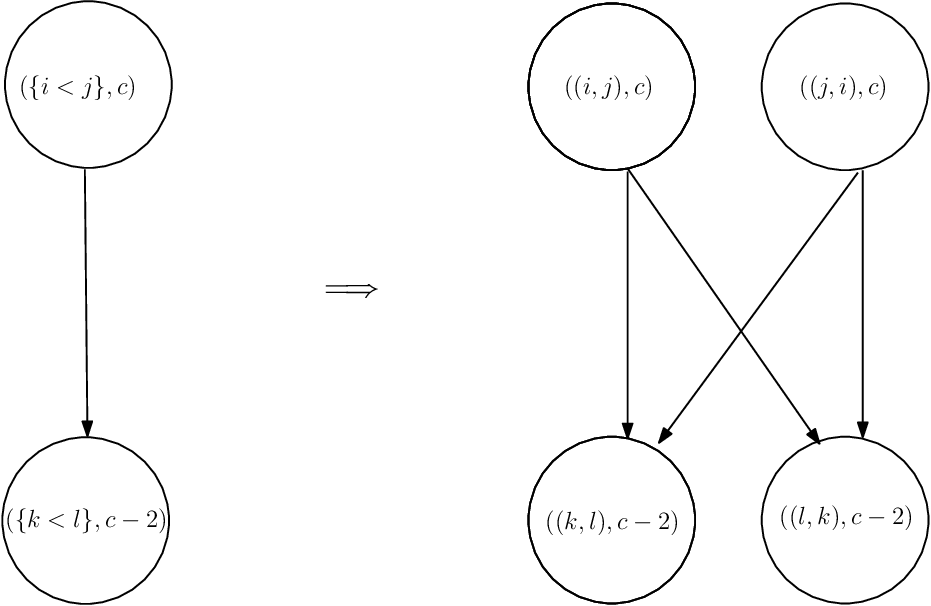}
\caption{The construction of graph $G'$.}
\label{F3}
\end{figure}

 \begin{figure}
\centering         %declaration corresponding to the center environment
\includegraphics[scale=0.60]{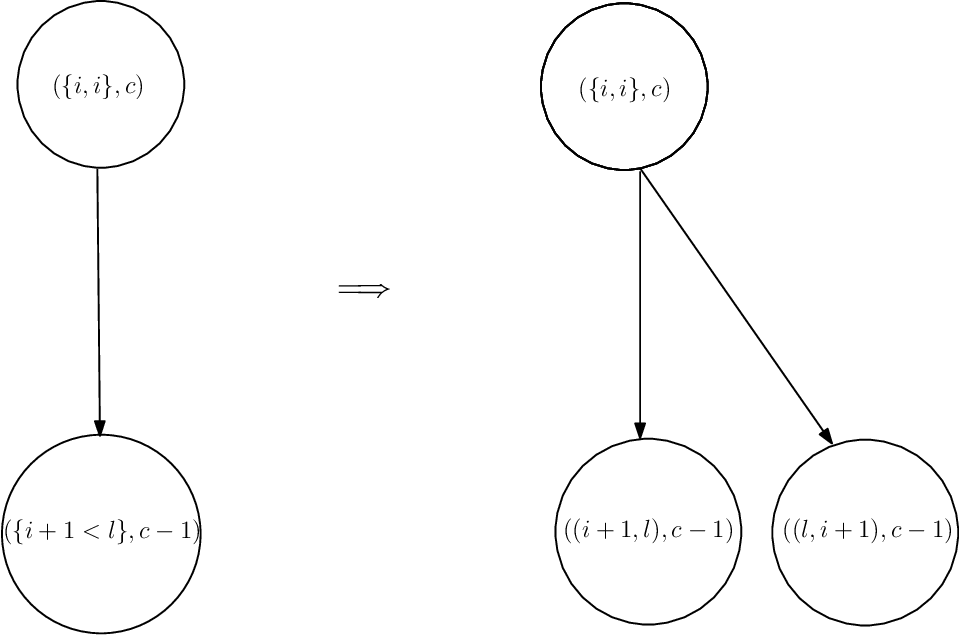}
\caption{The construction of graph $G'$.}
\label{F3B}
\end{figure}

 \begin{figure}
\centering         %declaration corresponding to the center environment
\includegraphics[scale=0.60]{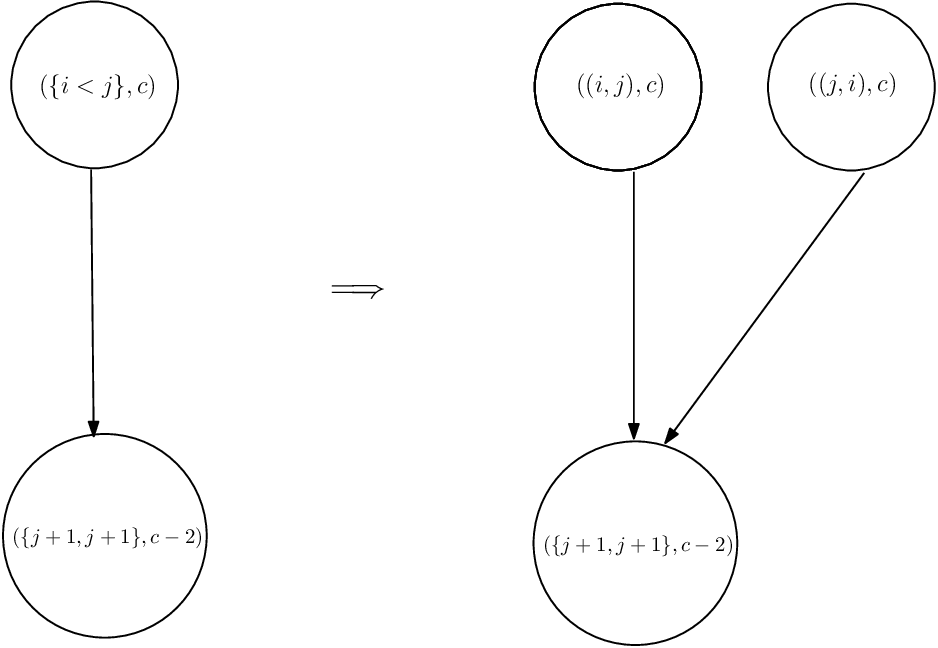}
\caption{The construction of graph $G'$.}
\label{F3A}
\end{figure}

Observe that $G'$ may include some \emph{redundant} arcs (those arcs that do not satisfy one or more conditions in Figure \ref{F1}), those arcs can be removed since they are not required by optimal paths in $G'$. As well
$G'$ may include some \emph{infeasible} nodes, those are the nodes $(j, i)$, where $j\in T_{\mathcal{I}}$ which need to be removed along with all in-coming and out-going arcs from $G'$.
Let $\mathcal{L}$ be the resulting graph, in the graph define the \emph{relative maximum lateness}
\begin{equation}
L_{((i, j),c)}=\max\{2p+b_i-d_i, 3p+b_j-d_j\}=2p+\max\{b_i- d_i, p+b_j-d_j\}
\end{equation}
of the node $((i, j), c)$, and
\begin{equation}
L_{\{i, i\},c)}=2p+b_i- d_i
\end{equation}
of the singleton node $(\{i,i\}, c)$.
For each arc going-out of the node $((i, j), c)$ define the arc \emph{length}
\begin{equation}
P_{((i, j), c)}=3p+b_j,
\end{equation}
for each arc going-out of the node $(\{i,i\}, c)$ define the arc \emph{length}
\begin{equation}
P_{(\{i,i\}, c)}=2p+b_i.
\end{equation}

For convenience add the starting node $s$ to $\mathcal{L}$ and connect it to all \emph{initial} nodes $(\{1,1\}, n)$, $((j,1), n)$, $((1, j), n)$, $j=1, \dots, n$ by the arcs of length $0$.

\subsection{The algorithm for the agreeable case}

The algorithm uses $\mathcal{L}$ doing binary search for minimum maximum lateness $\lambda$. Given $\lambda$. For a node $((i, j), c)$ in $\mathcal{L}$ let $\ell_{((i, j), c)}$ be the length of the shortest path from $s$ to $((i, j), c)$ in $\mathcal{L}$. If
\begin{equation*}
\ell_{((i, j), c)}+L_{((i, j),c)}> \lambda,
\end{equation*}
then remove $((i, j), c)$ along with all in-coming and out-going arcs from the node from $\mathcal{L}$. Similarly, for a node $(\{i,i\}, c)$ in $\mathcal{L}$ let $\ell_{(\{i,i\}, c)}$ be the length of the shortest path from $s$ to $(\{i,i\}, c)$ in $\mathcal{L}$. If
\begin{equation*}
\ell_{(\{i,i\}, c)}+L_{(\{i,i\},c)}> \lambda,
\end{equation*}
then remove $(\{i,i\}, c)$ along with all in-coming and out-going arcs from the node from $\mathcal{L}$. Let the resulting graph be $\mathcal{L}'$. If \emph{no} path from $s$ to some  \emph{final} nodes $(\{n, n\}, 1)$, $(i, n), 2)$, $((n,i), 2)$ for $i=1, \dots, n$
in $\mathcal{L}'$ exists, then answer that NO and stop.  Otherwise,  a shortest path from $s$ to some final node in $\mathcal{L}'$ gives  by Lemma \ref{LM} the schedule with maximum lateness not exceeding $\lambda$ and the algorithm stops.

\begin{lemma}
The algorithm finds a feasible schedule with maximum lateness not exceeding $\lambda$ if one exists.
\end{lemma}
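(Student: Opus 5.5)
The argument has two directions. Soundness: whatever the algorithm outputs really is a schedule of lateness at most $\lambda$. Completeness: if such a schedule exists, the algorithm does not answer NO.

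\emph{Soundness.} Take a path from $s$ to a final node in $\mathcal{L}'$. Read its nodes in order and place each ordered pair or singleton immediately after the previous one. By the arc lengths, the start time of a node equals the length of the path prefix reaching it. That prefix is at least the shortest-path distance $\ell$ computed for the node, but it can be larger. So I would use the fact that the returned path is a \emph{shortest} path in $\mathcal{L}'$. Every prefix of a shortest path is itself a shortest path to its endpoint within $\mathcal{L}'$.

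The pruning, however, used distances in $\mathcal{L}$, and these may be smaller than distances in $\mathcal{L}'$. To close this gap I would iterate the pruning to a fixed point, so that every surviving node $v$ satisfies $\ell'_v+L_v\le\lambda$ with $\ell'$ the distance in $\mathcal{L}'$. I would state this as the intended reading of the algorithm; the graph is acyclic because $c$ strictly decreases, so the iteration terminates in polynomial time. Then each node $v$ on the returned path starts at time $\ell'_v$, and all its jobs have lateness at most $\ell'_v+L_v\le\lambda$.

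Two further points are needed here. First, consecutive blocks must not overlap. This follows from the arc lengths: $3p+b_j$ covers the tail task of $j$, and $b_i\le p$ fits inside the gap. Second, the path is only semi-feasible, so some jobs may be doubled and others missed. Here I invoke Lemma \ref{LM}: it yields a feasible sequence of no larger maximum lateness, hence a genuine schedule.

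\emph{Completeness.} Suppose some schedule has $L_{\max}\le\lambda$. By Lemma \ref{LT}, together with Lemmas \ref{L10}, \ref{L11X}, \ref{L12X} and \ref{L2}, there is an optimal, left-shifted schedule of the normalized form: blocks of interlaced pairs and singletons in the stated orders, followed by the tail of long jobs. I would check that this schedule corresponds to a path in $\mathcal{L}$. The counter $c$ equals the number of jobs not yet scheduled, and the arc conditions ($k>i$, $l>j$, $j\ne k$ in the short part; $i\prec k$ and consecutive long jobs in the tail) are exactly the conditions implied by those lemmas. The redundant arcs removed are, by definition, never used by such schedules.

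Next, I would show by induction along the path that no node of it is pruned. Let $t_v$ be the start time of node $v$ in the schedule. Since the schedule is left-shifted, $t_v$ equals the prefix length of the path, so $t_v\ge\ell_v$. The schedule has lateness at most $\lambda$, so $t_v+L_v\le\lambda$, and hence $\ell_v+L_v\le\lambda$. The same bound holds in every round of the iterated pruning, because the whole path survives each round and distances only increase as nodes are removed. Therefore a path to a final node exists in $\mathcal{L}'$ and the algorithm does not answer NO.

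\emph{Main obstacle.} I expect the hardest step to be the correspondence in completeness. I must verify that every normalized optimal schedule, including the border between the short-job part $s$ and the tail $(y_a,i)\dots$, where Lemma \ref{L10} can fail, is realized by arcs of $\mathcal{L}$. The delicate transitions are the arcs from short pairs into pairs with $j\in T_{\mathcal{I}}$, since those arcs use $\prec$ rather than $<$. I would handle this case-by-case against the arc definitions, using Lemma \ref{LT} to fix the tail's shape.
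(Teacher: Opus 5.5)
Your proposal is correct. On the direction the paper actually proves, it coincides with the paper's argument.

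The paper's proof is exactly your completeness half. It takes a schedule with $L_{\max}\le\lambda$ and reads off the nodes $(\pi_k,c_k)$. It uses Lemmas~\ref{L10} and~\ref{LT} only to make the first node initial and the last node final. From $C_{(\pi_k,c_k)}\ge \ell_{(\pi_k,c_k)}$ it concludes that no node on this path is pruned, so the algorithm cannot answer NO. It never checks that the arcs between consecutive $(\pi_k,c_k)$ exist in $\mathcal{L}$. So the border case you flag (short part into the tail, arcs governed by $\prec$) is a verification the paper also leaves undone, not a gap you introduce.

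The paper never proves soundness. It simply asserts, in the description of the algorithm and citing Lemma~\ref{LM}, that a shortest path in $\mathcal{L}'$ yields a schedule with lateness at most $\lambda$.

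Your soundness half is genuinely additional, and the problem you identified is real. The pruning test uses shortest-path distances in $\mathcal{L}$. After deletions, a surviving node can have a strictly larger distance in $\mathcal{L}'$. On the returned shortest path it starts at that larger distance, so its lateness may exceed $\lambda$ even though it passed the test.

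Your fix works. Iterate the pruning to a fixed point, or equivalently make one pass in topological order, computing each node's distance only over surviving predecessors; this is possible because $c$ strictly decreases along arcs. Then every surviving $v$ satisfies $\ell'_v+L_v\le\lambda$, and on a shortest path each node starts at $\ell'_v$.

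Your argument that the schedule's path survives every round also goes through. Correct one piece of its justification, though. The bound holds because the surviving path caps each distance at $t_v$. It does not come from distances increasing under deletion; that monotonicity works against you, not for you.

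In short, the paper's route is shorter but only shows that some path to a final node survives. Yours proves the lemma as stated, at the cost of making the pruning step precise.
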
 
\begin{proof}  Let $S$ be a feasible schedule with maximum lateness not exceeding $\lambda$. Suppose for contradiction that the algorithm stops with $\mathcal{L}'$ not having any path from $s$ to any final node. The $S$ defines a sequence
\begin{equation*}
\pi_1 \dots \pi_m,
\end{equation*}
where each $\pi_k$ is either an interlaced pair of jobs $(i,j)$ or a singleton $i$. 
Let $C_{\pi_k}$ be the start of $\pi_k$ in $S$, and let $c_k$ be the number of jobs  that start at or after $C_{\pi_k}$ in $S$. Thus the nodes $(\pi_k, c_k)$ are in the graph $\mathcal{L}$. Moreover by
Lemmas \ref{L10} and \ref{LT} $(\pi_1,c_1)$ is one of the initial nodes and $(\pi_m,c_m)$ is one of the final nodes in $\mathcal{L}$.
Finally 
\begin{equation*} \label{EQ}
C_{(\pi_k, c_k)}+L_{(\pi_k, c_k)}\leq \lambda,
\end{equation*}
for $k=1, \dots, m$ in $S$. Since $C_{(\pi_k, c_k)}\geq \ell_{(\pi_k, c_k)}$ we have
\begin{equation*} 
\ell_{(\pi_k, c_k)}+L_{(\pi_k, c_k)}\leq \lambda,
\end{equation*}
for $k=1, \dots, m$. Therefore the path
\begin{equation*}
s(\pi_1, c_1) \dots (\pi_m, c_m),
\end{equation*}
from $s$ to a final node $(\pi_m, c_m)$ exists in $\mathcal{L}'$ which gives contradiction.

\end{proof}

\section{Disagreeable case} \label{D}
We now turn our attention to the \emph{disagreeable} case where the EDD order of jobs implies the longest processing time (LPT) order of processing times
%Order the jobs in non-increasing order of due dates, i.e. earliest due date order
%\begin{equation}
%d_1\leq \dots \leq d_n.
%\end{equation}
%We define a linear order on the jobs $1, \dots, n$ as follows

%\begin{equation}
%i<j \text{ if and only if } d_i\leq d_j.
%\end{equation}
%We consider an 

\begin{equation}
b_1 \geq \dots \geq b_n.
\end{equation}
It can be easily checked that the orders (\ref{edd}) (i.e. EDD) and (\ref{ord}) (i.e. $\prec$) are consistent for the disagreeable case. 
For an instance $\mathcal{I}=\{1, \dots, n\}$ let $H_{\mathcal{I}}$ be the set of all long jobs  $i\in \mathcal{I}$ with $b_i>p$.

%\subsection{Characteristics of schedules}

\subsection{Sequence for instance without long jobs}
We assume instances with no long jobs in this section. The order of two interlaced jobs in a pair is determined by the following lemma.

\begin{lemma} \label{L0}
For $(i, j)$ we have $i<j$.
%, and $L_i\geq L_j$.
\end{lemma}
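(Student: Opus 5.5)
We argue by exchange, in the same style as Lemma \ref{L2}. Take an optimal schedule that contains an interlaced pair $(j,i)$ with $i<j$, meaning $d_i\leq d_j$, with $j$ starting first. We will replace it by $(i,j)$ and check two things: the maximum lateness does not increase, and the jobs scheduled after the pair are not delayed.

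First we compare the latenesses inside the pair. Relative to the start of the pair, $(j,i)$ gives
\begin{align*}
L_j&=2p+b_j-d_j,\\
L_i&=3p+b_i-d_i,
\end{align*}
and $(i,j)$ gives
\begin{align*}
\ell_i&=2p+b_i-d_i,\\
\ell_j&=3p+b_j-d_j.
\end{align*}
Clearly $\ell_i<L_i$. In the disagreeable case $d_i\leq d_j$ implies $b_i\geq b_j$. Hence $b_j-d_j\leq b_i-d_i$, which is exactly the consistency of the orders (\ref{edd}) and (\ref{ord}) noted above. It follows that $\ell_j\leq L_i$. So $\max\{\ell_i,\ell_j\}\leq L_i\leq\max\{L_i,L_j\}$, and the pair's own contribution to maximum lateness does not increase.

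Next we handle the rest of the schedule. The makespan of $(j,i)$ is $3p+b_i$ and that of $(i,j)$ is $3p+b_j\leq 3p+b_i$. Since there are no long jobs, both tasks fit inside the delay, so the swap is feasible within the pair's window. The block of the pair therefore gets shorter or stays the same length. Every subsequent pair or singleton can then be kept at its old start time (or shifted left), and the latenesses of all other jobs do not increase.

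The main point to be careful about is the case $d_i=d_j$, where the order $i<j$ is only a tie-break. There, if $b_i\neq b_j$, we should index so that the job with the larger $b$ comes first, which is consistent with $b_1\geq\dots\geq b_n$. The inequalities above then still hold with $\leq$. Repeating the exchange for every offending pair, each swap strictly decreases the number of pairs violating the lemma, so the process terminates. This yields an optimal schedule in which every interlaced pair $(i,j)$ has $i<j$.
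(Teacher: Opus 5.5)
Your proposal is correct and is essentially the paper's argument: the paper's one-line proof rests on the same inequality $b_j-b_i+d_i-d_j\leq 0$ (that is, $i\prec j$, the consistency of EDD and $\prec$ in the disagreeable case) that you use to get $\ell_j\leq L_i$, and it remarks right after the lemma that $(i,j)$ also minimizes the pair's makespan; you simply write out that exchange explicitly. Your tie-breaking remark is unnecessary, since $d_i=d_j$ already forces $b_i=b_j$ in the disagreeable case.
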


\begin{proof}
We have $i<j$  since $b_j-b_i+d_i-d_j\leq 0$ (i.e. $i\prec j)$ in the disagreeable case. 
%Moreover $L_i=2p+b_i-d_i\geq L_j=3p+b_j-d_j$ since by the subcase assumption $p< b_i-b_{i+1}+d_{i+1}-d_i$ for  $i=1, \dots, n-1$
\end{proof}

Observe that the order $(i,j)$ minimizes the makespan of the interlaced pair $i$ and $j$, however which of the two jobs defines the maximum lateness of the pair depends on the distance $d_j-d_i$. For $d_j-d_i\leq p+b_j-b_i$ we have $L_i\leq L_j$, but for $d_j-d_i> p+b_j-b_i$ we have $L_i>L_j$.

The first jobs of the pairs follow the EDD order which is given in the following lemma.

\begin{lemma} \label{L11}
For $(i, j)(k, l)$ we have  $i<k$.
\end{lemma}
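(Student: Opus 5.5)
We argue by an exchange argument, as in Lemma \ref{L10}. Consider consecutive pairs $(i,j)(k,l)$ in an optimal schedule (no long jobs, disagreeable case). By Lemma \ref{L0} we have $i<j$ and $k<l$. Suppose for contradiction that $k<i$, so $d_k\leq d_i$ and $b_k\geq b_i$. Relative to the start of $(i,j)$ the latenesses are
\begin{align*}
L_i&=2p+b_i-d_i, & L_j&=3p+b_j-d_j,\\
L_k&=5p+b_j+b_k-d_k, & L_l&=6p+b_j+b_l-d_l.
\end{align*}
The natural move is to swap $i$ and $k$, giving $(k,j)(i,l)$. This is consistent with Lemma \ref{L0} in the first pair ($k<i<j$), but in the second pair it may happen that $l<i$. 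To avoid case analysis I would instead replace the four jobs by the pairs $(x,y)(z,w)$, where $x<z$ are the two jobs of $\{i,k\}$ in EDD order and $y,w$ are $j,l$ reordered so that each pair satisfies Lemma \ref{L0}. Since $k<i<j$ and $k<l$, the job $k$ is the overall smallest, so $x=k$ and $k$ can always start the first pair.

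The makespan of the two pairs is $6p+b_{\text{second of first pair}}+b_{\text{second of second pair}}$. I need the replacement not to increase this, so that later jobs are unaffected. The key comparisons are:
\begin{itemize}
\item the new first job $k$ has lateness $2p+b_k-d_k\leq 5p+b_j+b_k-d_k=L_k$;
\item job $i$ moves to the second pair, so its lateness becomes at most $5p+b_{\cdot}+b_i-d_i$ or $6p+\dots$. This is dominated by $L_k$ because $d_i\geq d_k$ and $b_i\leq b_k$, provided the second-task term of the first pair does not grow;
\item the second tasks $j,l$ keep their positions if possible; otherwise, when $l<i$, I would use the subcase $(k,l)(i,j)$. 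There I would check $L$ values directly, using $b_l\leq p$ and $d_l\geq d_k$, as in the last part of Lemma \ref{L10}.
\end{itemize}

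The main obstacle is the subcase where the swap would place $i$ before $l$ with $i>l$, violating Lemma \ref{L0}. Here $b_l\geq b_i$ in the disagreeable order, and comparing makespans $b_j+b_l$ versus $b_l+b_j$ (or $b_j+b_i$) needs care. I would handle it by choosing between $(k,j)(l,i)$ and $(k,l)(i,j)$ and bounding each new lateness by $\max\{L_k,L_l\}$. This uses $d_k\leq d_i$, $b_k\geq b_i$, and the short-job bound $b_\cdot\leq p$ to absorb the extra $p$ shifts. Finally, repeated application of such swaps strictly decreases the number of inversions among first jobs, so an optimal schedule with $i<k$ for all consecutive pairs exists.
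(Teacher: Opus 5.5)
Your proof is correct. Its core is the paper's exchange; the difference is in what you require the new schedule to satisfy.

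The paper swaps only the first tasks of $i$ and $k$ and obtains $(k,j)(i,l)$ in every case. Since $j$ and $l$ stay where they are, the makespan $6p+b_j+b_l$ and the latenesses of $j$ and $l$ do not change. The remaining checks are $2p+b_k-d_k<L_k$ and $5p+b_j+b_i-d_i\le L_k$, the latter by $k\prec i$. So the subcase $l<i$ that you call the main obstacle is no obstacle for the lateness bound. It matters only if the new schedule must still satisfy Lemma \ref{L0}.

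Your rule (use $(k,l)(i,j)$ when $l<i$) does preserve Lemma \ref{L0}. In both of your cases the sequence of first jobs undergoes an adjacent transposition of an inverted pair, so the inversion count drops by exactly one. This gives a single schedule with the same makespan satisfying Lemmas \ref{L0} and \ref{L11} at once. The paper tacitly needs exactly that when it applies both lemmas to the shortest schedule $S$ in Lemmas \ref{A12}--\ref{A14}, but it never argues it.

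The price is an appeal to the short-job bound, which the paper's one-line swap avoids. When you fill in the $(k,l)(i,j)$ subcase, you have $k<l<i<j$, and the needed facts are these:
\begin{itemize}
\item $5p+b_l+b_i-d_i\le L_l$ follows from $d_l\le d_i$ and $b_i\le b_l\le p$.
\item $6p+b_l+b_j-d_j\le L_l$ follows from $d_l\le d_j$.
\end{itemize}
The inequality $d_l\ge d_k$ you cite plays no role. The makespan needs no special care either, because $j$ and $l$ remain the second tasks.

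Commit to $(k,l)(i,j)$ rather than $(k,j)(l,i)$. The latter is also safe for lateness (it is the paper's swap followed by Lemma \ref{L0}). However, it makes $l$ a first job in place of $i$, so any earlier first job $x$ with $l<x<i$ becomes a new inversion, and your potential need not decrease.
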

\begin{proof} Suppose $k<i$. Swap $i$ and $k$. Before the swap $L_i=2p+b_i-d_i$, and $L_k=5p+b_j+b_k-d_k$. After the swap $\ell_k=2p+b_k-d_k$, $\ell_i=5p+b_j+b_i-d_i$.
We have $L_k>\ell_k$ and $L_k\geq \ell_i$ since $k\prec i$ for the disagreeable case.
%By the assumption $k<j$, thus $l_k\geq l_j$ by the case assumption. Since $b_j\geq b_l$, we get $L_k\geq l_k$. Finally, $L_i\geq L_j$ and $L_i\geq l_l$, and $L
%_i$ does not change. 
Thus the swap does not increase the maximum lateness, and preserves the makespan. Therefore we have $i<k$.
%<l<j$ for $j>k$.
\end{proof}

Any two singletons can be interlaced.

\begin{lemma}\label{L15}
A sequence of two singletons $ij$ does not occur in $S$.
\end{lemma}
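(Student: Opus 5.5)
We argue by an exchange argument, in the same way as Lemma \ref{L15A} for the agreeable case. We are in the disagreeable case without long jobs, so $b_i\leq p$ and $b_j\leq p$. Suppose two consecutive singletons $ij$ occur in a schedule $S$. The block $ij$ starts at some time $t$ and occupies $[t,t+4p+b_i+b_j]$. Job $i$ completes at $t+2p+b_i$ and job $j$ completes at $t+4p+b_i+b_j$.

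We replace $ij$ by an interlaced pair, chosen according to Lemma \ref{L0}. If $i<j$ we use $(i,j)$; if $j<i$ we use $(j,i)$. Interlacing is feasible because the second task of the pair's first job (length at most $p$) must fit between the two tasks of the other job, and $b\leq p$ guarantees this. The pair occupies $[t,t+3p+b_y]$, where $y$ is its second job.

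The key step is to compare latenesses inside the block. Write the pair as $(x,y)$ with $x<y$, so $d_x\leq d_y$ and $b_x\geq b_y$.
\begin{itemize}
\item Job $x$ completes at $t+2p+b_x$ in the pair. This is no later than its completion time in $ij$: either $2p+b_i$ if $x=i$, or $4p+b_i+b_j$ if $x=j$.
\item Job $y$ completes at $t+3p+b_y$. If $y=j$, this is at most $t+4p+b_i+b_j$, its old completion time.
\item If $y=i$ (so the original order was $ij$ with $j<i$), job $i$ now completes at $t+3p+b_i$. Compare this with the old lateness of $j$, which was $t+4p+b_i+b_j-d_j$. Since $d_j\leq d_i$, we get $t+3p+b_i-d_i\leq t+4p+b_i+b_j-d_j$.
\end{itemize}
So in every case the maximum lateness of the two jobs does not increase.

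The block also becomes shorter. Its new length is $3p+b_y$, at most $3p+p<4p+b_i+b_j$. All subsequent tasks can therefore be shifted left by the same amount, which keeps the exact delays of every other job intact and does not increase any other lateness.

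Hence we may assume $ij$ never occurs. If needed, we choose $S$ among optimal schedules with the minimum number of singletons. Each replacement strictly decreases that number, so an optimal schedule with no two consecutive singletons exists.

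The only point needing care is the case $y=i$ above, where the later job $i$ could in principle become worse. The resolution is the comparison with $j$'s old lateness using $d_j\leq d_i$ together with $p\leq 4p$ slack.
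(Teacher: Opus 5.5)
Your argument is correct and is essentially the paper's exchange, with one difference. The paper replaces $ij$ by $(i,j)$ in the \emph{original} order. Then $C_i$ is unchanged and $C_j$ drops by $p+b_i$, so every job finishes no later, no due-date comparison (your case $y=i$) is needed, and only $b_i\le p$ is used. This is why the paper can add that the lemma also holds for $j\in H_{\mathcal{I}}$. Your reorientation would not cover that case: a long $j$ has $j<i$, and $(j,i)$ is infeasible when $b_j>p$.

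Your orientation buys only a slightly shorter block, $3p+\min\{b_i,b_j\}$ instead of $3p+b_j$. There is also a small slip: $3p+p<4p+b_i+b_j$ fails when $b_i=b_j=0$. Write $3p+b_y\le 3p+b_i+b_j<4p+b_i+b_j$ instead.
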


\begin{proof}
It can be replaced by shorter $(i, j)$ without increasing maximum lateness. Observe that the lemma holds also for $j\in H_{\mathcal{I}}$.
\end{proof}
Any singleton job can be pushed to the end of the sequence.

\begin{lemma} \label{L14}
Without loss of generality $i(j, k)$ does not occur in $S$.
\end{lemma}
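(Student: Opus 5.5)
Since this section has no long jobs ($b_x\le p$ for all $x$) and the disagreeable order makes $<$ and $\prec$ coincide, I would use an exchange argument. Replace the block $i(j,k)$ by $(j,i)k$ or by $(j,k)i$, check that maximum lateness does not increase and the makespan does not grow, and then iterate until singletons sit at the end. By Lemma \ref{L0} we have $j<k$, so $b_j\ge b_k$ and $d_j\le d_k$.

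Start the block at time $0$. For $i(j,k)$ the latenesses are
\begin{align*}
\ell_i&=2p+b_i-d_i,\\
\ell_j&=4p+b_i+b_j-d_j,\\
\ell_k&=5p+b_i+b_k-d_k,
\end{align*}
and the makespan is $5p+b_i+b_k$.

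\textbf{Case $j<i$.} Replace the block by $(j,i)k$, which is a legitimate order by Lemma \ref{L0}. The pair $(j,i)$ occupies $[0,3p+b_i]$ and the singleton $k$ then runs to $5p+b_i+b_k$, so the makespan is unchanged. The new latenesses are
\begin{align*}
L_j&=2p+b_j-d_j,\\
L_i&=3p+b_i-d_i,\\
L_k&=5p+b_i+b_k-d_k=\ell_k.
\end{align*}
Clearly $L_j<\ell_j$. Also $L_i\le\ell_j$, because this is equivalent to $d_j-d_i\le p+b_j$, which holds since $d_j\le d_i$.

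\textbf{Case $i<j$.} Now $d_i\le d_j\le d_k$ and $b_i\ge b_j\ge b_k$. Replace the block by $(j,k)i$. The pair $(j,k)$ occupies $[0,3p+b_k]$ and $i$ then runs to $5p+b_k+b_i$, so the makespan is again unchanged. The new latenesses are
\begin{align*}
L_j&=2p+b_j-d_j<\ell_j,\\
L_k&=3p+b_k-d_k<\ell_k,\\
L_i&=5p+b_i+b_k-d_i.
\end{align*}
The comparison of $L_i$ is the main obstacle: because $d_i\le d_k$, $L_i$ can exceed $\ell_k$. To handle it, use $i\prec j$ and $i\prec k$, i.e.\ $b_i-d_i\le b_j-d_j$ and $b_i-d_i\le b_k-d_k$, together with $b_k\le p$. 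These give
\[
L_i=5p+b_k+(b_i-d_i)\le 5p+b_k+(b_j-d_j)\le 4p+b_i+b_j-d_j=\ell_j
\]
provided $p+b_k\le b_i$. If instead $b_i<p+b_k$, I would try the alternative replacement $(i,j)k$, with makespan $5p+b_j+b_k\le 5p+b_i+b_k$. Its latenesses are
\begin{align*}
L_i&=2p+b_i-d_i,\\
L_j&=3p+b_j-d_j<\ell_j,\\
L_k&=5p+b_j+b_k-d_k\le\ell_k.
\end{align*}
Here $L_i=\ell_i$, so this replacement works in every subcase of $i<j$. I would therefore use $(i,j)k$ uniformly when $i<j$, and then Lemma \ref{L0} is respected in both cases.

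\textbf{Completing the argument.} When the makespan of the block shrinks, jobs scheduled after it do not become later, since they can be shifted left by the saved amount. Each replacement moves the singleton strictly later in the sequence. Iterating, together with Lemma \ref{L15}, therefore yields a schedule with no $i(j,k)$ pattern and no larger maximum lateness. The key step to verify carefully is the choice of replacement by case; the inequality chains themselves are routine.
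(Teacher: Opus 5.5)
Your final argument is correct and is essentially the paper's proof: replace $i(j,k)$ by $(j,i)k$ when $j<i$ and by $(i,j)k$ when $i<j$, with exactly the same lateness comparisons. The paper does not spell out the iteration you add. In the abandoned $(j,k)i$ detour, the inequality you draw from $i\prec j$ is reversed (it actually gives $b_i-d_i\ge b_j-d_j$), but since you discard that branch it does not affect the proof.
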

\begin{proof} If $i<j$. For  $i(j, k)$ we have
\begin{align*}
\ell_i=2p+b_i-d_i\\
\ell_j=4p+b_i+b_j-d_j\\
\ell_k=5p+b_i+b_k-d_k,
\end{align*}
for $(i, j)k$ we have
\begin{align*}
L_i=2p+b_i-d_i\\
L_j=3p+b_j-d_j\\
L_k=5p+b_j+b_k-d_k
\end{align*}
Thus $\ell_k\geq L_k$ since $i<j$. Moreover $\ell_j>L_j$ and $\ell_i=L_i$. Therefore $\max\{L_i,L_j,L_k\}\leq \max\{\ell_i,\ell_j,\ell_k\}$.

If $j<i$. For $i(j, k)$ we have
\begin{align*}
\ell_i=2p+b_i-d_i\\
\ell_j=4p+b_i+b_j-d_j\\
\ell_k=5p+b_i+b_k-d_k,
\end{align*}
and for $(j, i)k$ we have
\begin{align*}
L_j=2p+b_j-d_j\\
L_i=3p+b_i-d_i\\
L_k=5p+b_i+b_k-d_k
\end{align*}
Thus $\ell_k= L_k$. Moreover $\ell_j>L_j$ and $\ell_j>L_i$ since $j<i$. Therefore $\max\{L_i,L_j,L_k\}\leq \max\{\ell_i,\ell_j,\ell_k\}$.

%Since $j>l$, we have $l_l\geq L_j$. Also $l_l>L_l$. Since $j>l$ and $j\in H_{\mathcal{I}}$, we have $l\in H_{\mathcal{I}}$. By Lemma \ref{L12} $k>l$, thus
%$l_l\geq L_k$. Thus $\max\{l_j,l_k,l_l\}\geq l_l\geq \max\{L_j,L_k,L_l\}$.
\end{proof}

We summarize those observations in the following lemma.

\begin{lemma} \label{LS}
There is at most one singleton in $S$, and if there is one it is the last in $S$.
\end{lemma}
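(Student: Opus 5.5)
We derive Lemma \ref{LS} from Lemmas \ref{L15} and \ref{L14} by a potential-function exchange argument. Take an optimal schedule $S$ for an instance without long jobs. Among all optimal schedules choose one minimizing first the number of singletons and then the total number of interlaced pairs that come after some singleton. View $S$ as a sequence $\pi_1\dots\pi_m$ of blocks, each $\pi_k$ being either an interlaced pair $(i,j)$ or a singleton.

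\emph{At most one singleton.} Suppose $S$ has two singletons $i$ and $j$.
\begin{enumerate}
\item If they are adjacent, the sequence contains $ij$. Lemma \ref{L15} replaces $ij$ by the pair $(i,j)$, which is shorter by $p$, without increasing maximum lateness. The pair $(i,j)$ is feasible because $b_j\leq p$.
\item Every job after the replaced segment then starts no later than before, so no lateness increases. The number of singletons drops, contradicting the choice of $S$.
\item If the two singletons are not adjacent, take the first singleton $i$. Some pair follows it before the next singleton, so the sequence contains $i(j,k)$.
\item Lemma \ref{L14} replaces $i(j,k)$ by $(i,j)k$ or $(j,i)k$ without increasing maximum lateness. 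The makespan of the segment is $5p+b_j+b_k$ against $5p+b_i+b_k$ in the first case, and equal in the second.
\item The singleton has moved one position later, so the number of pairs after singletons decreases. Repeating this brings the two singletons adjacent, and the previous case applies.
\end{enumerate}

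\emph{The singleton is last.} If the unique singleton $i$ is not last, it is followed by a pair, giving $i(j,k)$. Apply Lemma \ref{L14} again, strictly reducing the potential.

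\emph{Main obstacle: suffix effects.} Each replacement must not delay the rest of the schedule. In the case $i<j$ of Lemma \ref{L14}, the new segment $(i,j)k$ ends at $5p+b_j+b_k$ versus $5p+b_i+b_k$. In the disagreeable case $i<j$ implies $b_i\geq b_j$, so the segment does not lengthen. I must also check that $(i,j)k$ is a legal pattern, with $k$ a singleton placed after the pair. Here I would use that the interlacing conditions need only $b\leq p$, which holds since there are no long jobs. With these checks, the segment end time never increases, subsequent latenesses do not increase, and the potential argument terminates.
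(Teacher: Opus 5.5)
Your proposal is correct and follows the paper's route: the paper's proof is simply that the lemma follows from Lemmas \ref{L15} and \ref{L14}, and your lexicographic potential (number of singletons, then number of pairs after a singleton) only makes the implicit termination of those exchanges explicit. Two small slips, both harmless here because there are no long jobs: the pair $(i,j)$ needs $b_i\leq p$, not $b_j\leq p$, since $O_{i2}$ must fit in the delay of job $j$; and replacing the singletons $ij$ by $(i,j)$ saves $p+b_i$ rather than $p$.
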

\begin{proof} Follows from Lemmas \ref{L14} and \ref{L15}.
\end{proof}

\subsection{Sequences for instance with long jobs}

\begin{lemma} \label{L12}
If $(i, j)$ in $S$, where $j\in H_{\mathcal{I}}$, then $i>j$. Also $L_j>L_i$ since $p+d_i-d_j\geq b_i-b_j$ for the disagreeable case.
\end{lemma}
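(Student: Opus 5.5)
The plan is to derive the first claim from the feasibility of the interlaced pair $(i, j)$. Once we know $i$ is short, the disagreeable condition forces the order of due dates, and the second claim follows by comparing the two lateness expressions directly.

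First, I would recall the timetable of $(i, j)$ relative to its start. Task $O_{i1}$ occupies $[0,p]$ and $O_{j1}$ occupies $[p,2p]$. Task $O_{i2}$ must start exactly at $2p$, and $O_{j2}$ must start exactly at $3p$. Since the machine processes one task at a time, $O_{i2}$ must finish by $3p$, so $b_i\leq p$. In other words, the first job of any interlaced pair is not in $H_{\mathcal{I}}$. Because $j\in H_{\mathcal{I}}$, this gives $b_i\leq p<b_j$, hence $b_i<b_j$.

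Next, I would invoke the disagreeable assumption, $d_i\geq d_j \Rightarrow b_i\leq b_j$, together with the EDD ordering $b_1\geq\dots\geq b_n$. If we had $i<j$, i.e. $d_i\leq d_j$ with $i$ earlier in the fixed EDD order, then the ordering would give $b_i\geq b_j$. That contradicts $b_i<b_j$, so $i>j$, which means $d_i\geq d_j$. The only delicate point is ties in due dates. Since the strict inequality $b_i<b_j$ is incompatible with $i<j$ in the chosen linear order, whatever tie-breaking fixed the numbering, no special argument is needed.

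Finally, for the lateness comparison I would compute the two values from the timetable above: $L_i=2p+b_i-d_i$ and $L_j=3p+b_j-d_j$. Then
\begin{equation*}
L_j-L_i=p+(b_j-b_i)+(d_i-d_j).
\end{equation*}
Here $b_j-b_i>0$ by the first step and $d_i-d_j\geq 0$ by the second step, so $L_j-L_i>p>0$. This is exactly the inequality $p+d_i-d_j\geq b_i-b_j$ quoted in the statement, in fact with strict inequality. I do not expect a real obstacle. The only step requiring care is the observation that interlacing forces $b_i\leq p$, since everything else follows from it.
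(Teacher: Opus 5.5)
Your proof is correct and matches the paper's argument: the paper notes that $i<j$ would force $b_i\geq b_j>p$, making $i$ a long job, which cannot be the first job of an interlaced pair; you run the same reasoning in contrapositive form, using $b_i\leq p<b_j$ to conclude $i>j$. Your explicit computation $L_j-L_i=p+(b_j-b_i)+(d_i-d_j)>p$ supplies the lateness comparison that the paper only asserts, and it gives the strict inequality that $L_j>L_i$ actually requires.
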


\begin{proof}
Otherwise $i\in H_{\mathcal{I}}$, and the jobs $i$ and $j$ cannot be interlaced in the disagreeable case.
\end{proof}

The pair of interlaced jobs with one job from $H_{\mathcal{I}}$ precedes the pair of interlaced jobs with both jobs in $\mathcal{I}\setminus H_{\mathcal{I}}$.

\begin{lemma} \label{L13}
If $(i, j)(k, l)$ in $S$, where $j\in \mathcal{I}\setminus H_{\mathcal{I}}$, then $l\in \mathcal{I}\setminus H_{\mathcal{I}}$ .
\end{lemma}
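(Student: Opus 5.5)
We would argue by the same exchange technique used for Lemmas \ref{L11} and \ref{L14}. Suppose, for contradiction, that $(i, j)(k, l)$ occurs in $S$ with $j\in \mathcal{I}\setminus H_{\mathcal{I}}$ but $l\in H_{\mathcal{I}}$. First we record what the structure forces. In a pair $(k,l)$ the second task of $k$ occupies the delay of $l$, so $b_k\leq p$ and $k\in \mathcal{I}\setminus H_{\mathcal{I}}$; likewise $b_i\leq p$. By Lemma \ref{L12} applied to $(k,l)$ with $l\in H_{\mathcal{I}}$ we get $k>l$, hence $d_l\leq d_k$. Moreover $b_l>p\geq b_j$, so in the disagreeable order $b_1\geq\dots\geq b_n$ we have $l<j$, hence $d_l\leq d_j$.

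The exchange is to swap $j$ and $l$, replacing $(i, j)(k, l)$ by $(i, l)(k, j)$. This is feasible since $b_i\leq p$ and $b_k\leq p$, and the total length $6p+b_j+b_l$ of the two pairs is unchanged, so all other jobs keep their completion times. Measured from the start of $(i,j)$, the latenesses before the swap are
\begin{align*}
L_j&=3p+b_j-d_j,\\
L_k&=5p+b_j+b_k-d_k,\\
L_l&=6p+b_j+b_l-d_l,
\end{align*}
and after the swap they are
\begin{align*}
\ell_l&=3p+b_l-d_l,\\
\ell_k&=5p+b_l+b_k-d_k,\\
\ell_j&=6p+b_l+b_j-d_j.
\end{align*}
The lateness of $i$ is unchanged. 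We would bound everything by $L_l$:
\begin{itemize}
\item $\ell_l<L_l$ trivially.
\item $\ell_j\leq L_l$ because $d_l\leq d_j$.
\item $\ell_k\leq L_l$ is equivalent to $d_l-d_k\leq p+b_j-b_k$. This holds since the left side is $\leq 0$ (as $k>l$) and the right side is $\geq 0$ (as $b_k\leq p$, $b_j\geq 0$).
\end{itemize}
Hence the maximum lateness does not increase.

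The step we expect to need care is making the "without loss of generality" rigorous, i.e.\ showing that repeated swaps terminate. We would use a potential such as the sum of the positions (pair indices) of the long jobs occupying second slots of pairs. Each swap moves a long job one pair earlier, so this potential strictly decreases. The swap may create a new violating configuration with the pair preceding $(i,l)$, but only one of the same type, which further decreases the potential. Therefore after finitely many swaps no pair with a short second job is followed by a pair with a long second job, which is the claim. We also need to confirm that the new pairs $(i,l)$ and $(k,j)$ still respect Lemma \ref{L12} and Lemma \ref{L0}. If $(k,j)$ has $k>j$, we would reorder it to $(j,k)$ by Lemma \ref{L0}, which does not increase lateness in the disagreeable case.
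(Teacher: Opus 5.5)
Your proof is correct and is essentially the paper's argument: a single adjacent exchange that keeps the total length of the two pairs unchanged, after which every affected lateness is bounded by the old lateness $6p+b_j+b_l-d_l$ of the long job $l$. The differences are minor. The paper swaps the two pairs wholesale to get $(k,l)(i,j)$, so the inequality you check for $k$ (using $d_l\le d_k$ and $b_k\le p$) is checked there for $i$. You also add a potential-function termination argument that the paper leaves implicit; it works, though new violations can also appear after $(k,j)$, not only before $(i,l)$, which is harmless since every swap still lowers the potential.
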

\begin{proof} Suppose for contradiction that $l\in H_{\mathcal{I}}$. For $(i, j)(k, l)$ we have
\begin{align*}
\ell_i=2p+b_i-d_i\\
\ell_j=3p+b_j-d_j\\
\ell_k=5p+b_j+b_k-d_k\\
\ell_l=6p+b_j+b_l-d_l.
\end{align*}
Swap $(i, j)$ and $(k, l)$, for $(k, l)(i, j)$ we have
\begin{align*}
L_k=2p+b_k-d_k\\
L_l=3p+b_l-d_l\\
L_i=5p+b_l+b_i-d_i\\
L_j=6p+b_l+b_j-d_j.
\end{align*}
Since $j>l$, we have $\ell_l\geq L_j$. Since $i>l$, we have $\ell_l\geq L_i$. Also $\ell_l\geq L_l$, and $\ell_k\geq L_k$. Thus $\max\{\ell_i,\ell_j,\ell_k,\ell_l\}\geq \max\{l_k,l_l\}\geq \max\{L_i,L_j,L_k,L_l\}$.

\end{proof}

The singleton jobs from $H_{\mathcal{I}}$ precede all pair of interlaced jobs.

\begin{lemma} \label{L14A}
Without loss of generality $(j, k)i$, where $i\in H_{\mathcal{I}}$ does not occur in $S$.
\end{lemma}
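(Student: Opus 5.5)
The plan is an exchange argument in the style of Lemmas \ref{L13A} and \ref{L14}. Suppose $(j, k)i$ with $i\in H_{\mathcal{I}}$ occurs in $S$. We replace this block of three jobs by a block of the same makespan in which $i$ no longer sits alone after the pair, and check that the maximum lateness does not increase.

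For $(j, k)i$ we have
\begin{align*}
\ell_j&=2p+b_j-d_j,\\
\ell_k&=3p+b_k-d_k,\\
\ell_i&=5p+b_k+b_i-d_i,
\end{align*}
and the makespan is $5p+b_k+b_i$. Because the second task of $j$ must fit in the delay of $k$, we have $b_j\leq p$, so $j\notin H_{\mathcal{I}}$. The two candidate replacements, both of makespan $5p+b_i+b_k$, are the following.
\begin{itemize}
\item[(a)] $i(j, k)$, with $L_i=2p+b_i-d_i$, $L_j=4p+b_i+b_j-d_j$ and $L_k=5p+b_i+b_k-d_k$.
\item[(b)] $(j, i)k$, with $L_j=2p+b_j-d_j$, $L_i=3p+b_i-d_i$ and $L_k=5p+b_i+b_k-d_k$. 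This is admissible since $b_j\leq p$.
\end{itemize}
All other jobs keep their completion times, so it suffices to compare the latenesses of $i$, $j$ and $k$.

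Case $k>i$, i.e.\ $d_k\geq d_i$. Use (b). Then $L_i\leq\ell_i$ trivially, $L_j=\ell_j$, and $L_k\leq \ell_i$ because $d_k\geq d_i$. Hence the maximum does not increase. Option (a) also works when $j>i$, since $L_j\leq\ell_i$ reduces to $d_j-d_i\geq b_j-b_k-p$, and the right-hand side is at most $0$.

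Case $k<i$, i.e.\ $d_k\leq d_i$. In the disagreeable case $b_k\geq b_i>p$, so $k\in H_{\mathcal{I}}$, and Lemma \ref{L12} gives $j>k$. Here both (a) and (b) leave the term $5p+b_i+b_k-d_k$, which is at least $\ell_i$. So I must instead use that $k$ already has the larger lateness, or use a different rearrangement. First I would try (b) and bound $5p+b_i+b_k-d_k$ by $\max\{\ell_k,\ell_i\}$ plus whatever slack the consistency of EDD and $\prec$ gives. If that fails, I would try a replacement that also moves $k$, such as $(j,k)$ preceded by $i$ interlaced with a later job. Failing both, I would fall back on the intended meaning ``without loss of generality'': I would show that $i$ can be moved forward, past the whole run of pairs ending with $(j,k)$, by repeated swaps, consistent with the claim that singletons from $H_{\mathcal{I}}$ precede all pairs.

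\textbf{Main obstacle.} The case $k<i$ with both $k$ and $i$ long is where I expect the difficulty. The naive three-job swaps increase the lateness of $k$ by $d_i-d_k$, so a cleverer global rearrangement, or a more careful use of Lemmas \ref{L12} and \ref{L13}, is needed there.
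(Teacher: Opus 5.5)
\paragraph{Gap in the case $d_k<d_i$.} Your treatment of the case $d_k\ge d_i$ is fine; option (a) works there, since $j>i$ always holds. The case $d_k<d_i$, which you flag as the main obstacle, is not proved, however. The suggestions to ``bound by slack'', to ``try a replacement that also moves $k$'', or to move $i$ forward by repeated swaps are not arguments. You correctly observe that (a) and (b) both fail there: each makes $k$ finish at the end of the block, which raises its lateness to $\ell_i+(d_i-d_k)$. So as it stands the proposal proves the lemma only in half of the cases.

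\paragraph{The missing exchange.} Use the third rearrangement of the same three jobs, $k(j,i)$: run $k$ alone first and interlace $j$ with $i$ as $(j,i)$, which is admissible since $b_j\le p$. Its makespan is again $5p+b_k+b_i$, and
\[
L_k=2p+b_k-d_k<\ell_k,\qquad L_i=5p+b_k+b_i-d_i=\ell_i,\qquad L_j=4p+b_k+b_j-d_j\le L_i .
\]
The last inequality holds because $b_j\le p<b_i$ forces $d_j\ge d_i$ in the disagreeable case. Hence $\max\{L_i,L_j,L_k\}\le\max\{\ell_k,\ell_i\}$.

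Instead of pushing $k$ to the end, this exchange leaves $C_i$ unchanged and makes $k$ finish \emph{earlier}. It therefore works however $d_k$ compares with $d_i$, and no case split is needed. This is exactly the paper's proof.

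\paragraph{Remark on option (b).} Your option (b) is a poor choice even in the easy case when $k\in H_{\mathcal{I}}$. It produces $(j,i)k$, which is the forbidden pattern again at the same position, now with $d_i\le d_k$. That is precisely your unresolved configuration. By contrast, (a) or $k(j,i)$ move a long singleton strictly forward (or remove it), so repeated application terminates.
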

\begin{proof} Suppose it does. For $(j, k)i$ we have
\begin{align*}
\ell_j=2p+b_j-d_j\\
\ell_k=3p+b_k-d_k\\
\ell_i=5p+b_k+b_i-d_i.
\end{align*}
For $k(j, i)$
\begin{align*}
L_k=2p+b_k-d_k\\
L_j=4p+b_k+b_j-d_j\\
L_i=5p+b_k+b_i-d_i.
\end{align*}
Since $j\in \mathcal{I}\setminus H_{\mathcal{I}}$, we have $j>i$. Thus $L_i\geq L_j$. If $\ell_k>\ell_i$, then $\ell_k >\max\{L_i,L_j,L_k\}$. If $\ell_k\leq \ell_i$, then
$\ell_i=L_i=\max\{L_i,L_j,L_k\}$. Thus $\max\{L_i,L_j,L_k\}\leq \max\{\ell_i,\ell_j,\ell_k\}$.
%Since $j>l$, we have $l_l\geq L_j$. Also $l_l>L_l$. Since $j>l$ and $j\in H_{\mathcal{I}}$, we have $l\in H_{\mathcal{I}}$. By Lemma \ref{L12} $k>l$, thus
%$l_l\geq L_k$. Thus $\max\{l_j,l_k,l_l\}\geq l_l\geq \max\{L_j,L_k,L_l\}$.
\end{proof}

The following lemma summarizes the observations of this section.

\begin{lemma}
For  $\mathcal{I}$ with nonempty $H_{\mathcal{I}}=\{1, \dots, i\}$, $n\geq i\geq 1$ any optimal sequence $S$ starts with
\begin{equation}
1 \dots i-x(i+1, i-x+1)\dots(i+x, i) 
\end{equation}
where $1, \dots, i, \dots, i+x$, $0\leq x\leq i$,  are the $i+x$ jobs with the earliest due dates. That is
\begin{equation*}
S=1, \dots, i-x, (i+1, i-x+1)\dots(i+x, i)s,
\end{equation*}
where $s$ is the schedule for an instance $\mathcal{I}\setminus \{1, \dots, i, \dots, i+x\}$ without  long jobs.

\end{lemma}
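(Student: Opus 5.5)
We assemble the structural lemmas of this section into the claimed prefix. Since $b_1\geq\dots\geq b_n$ and the long jobs have the largest $b$'s, $H_{\mathcal{I}}=\{1,\dots,i\}$ is an EDD prefix. Two long jobs cannot be interlaced, because the second task of one does not fit in the delay $p$ of the other. Hence every long job is either a singleton or the second job $j$ of a pair $(k,j)$ whose first job $k$ is short. By Lemma \ref{L12} such a $k$ satisfies $k>j$, so $k\geq i+1$.

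Let $x$ be the number of long jobs that are interlaced; the remaining $i-x$ long jobs are singletons. The argument proceeds in the following order.

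\textbf{Singletons first.} By Lemma \ref{L14A} no long singleton is immediately preceded by a pair. A long singleton also cannot be preceded by another singleton unless both are long, since Lemma \ref{L15} excludes two consecutive singletons except when interlacing is impossible. Repeatedly moving long singletons forward via the exchanges of Lemmas \ref{L14A} and \ref{L15} (the latter only when interlacing is feasible), we may assume all long singletons come first. Among themselves they are put in EDD order by a standard adjacent-interchange argument: they are contiguous, and swapping them changes only their own completion times.

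\textbf{Pairs with a long job next.} By Lemma \ref{L13}, a pair whose second job is short is never followed by a pair whose second job is long. So all $x$ pairs $(k,j)$ with $j\in H_{\mathcal{I}}$ come immediately after the singletons, forming a contiguous segment, and the remainder $s$ contains only short jobs. By Lemma \ref{LS} applied to $s$, $s$ has at most one singleton, placed last.

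\textbf{Identification of jobs.} Two identifications remain, and I expect them to be the main obstacle: showing by exchange that the interlaced long jobs are exactly $i-x+1,\dots,i$, and that their partners are exactly $i+1,\dots,i+x$, matched in this order.

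For the long jobs, suppose a singleton long job $a$ and an interlaced long job $c$ satisfy $a>c$, so $d_a\geq d_c$ and $b_a\leq b_c$. Swap the roles of $a$ and $c$: $c$ becomes the singleton and $a$ is interlaced. Both occupy a long slot, and I would check by the same lateness computations as in Lemma \ref{L13} that the maximum lateness does not increase. The makespan changes by $b_a-b_c\leq 0$ before the pair segment, so later jobs are unaffected or improved.

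For the partners, the first jobs of the pairs are in EDD order by Lemma \ref{L11}. The second jobs are ordered by the analogue of Lemma \ref{L10} in the form of the first exchange in its proof: swap the second jobs when they are out of EDD order. With both sequences EDD, the $r$-th pair is $(k_r, i-x+r)$.

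It remains to show $\{k_r\}=\{i+1,\dots,i+x\}$. If some short job $y\leq i+x$ lies in $s$ while a partner $k_r>i+x$ is used, swap $y$ and $k_r$. The first job of a pair contributes $2p+b-d$ to lateness and does not affect the makespan of the pair segment. Since $y<k_r$ means $y\prec k_r$ in the disagreeable case, moving the earlier-due job earlier does not increase lateness in the pair. In $s$, $k_r$ takes $y$'s position; here I would use $b_{k_r}\leq b_y$ and $d_{k_r}\geq d_y$, which give $b_{k_r}-d_{k_r}\leq b_y-d_y$, so its lateness does not exceed $y$'s old lateness and the makespan does not grow.

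Iterating these exchanges yields the claimed form with $0\leq x\leq i$.
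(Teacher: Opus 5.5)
Your outline matches the paper's: Lemmas~\ref{L14A} and \ref{L15} for the prefix of long singletons, Lemma~\ref{L13} for the segment of long pairs, then EDD. You also go further by arguing the identification step that the paper only asserts via Lemma~\ref{L12}. However, two steps fail as written.

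First, in swapping a singleton long job $a$ with an interlaced long job $c$, $a>c$, you have the sign reversed. Since $b_c\ge b_a$, the singleton block grows by $b_c-b_a\ge 0$, so every job strictly between the two positions is delayed, and this can raise $L_{\max}$. Take $p=1$ and jobs $c'=(b,d)=(3,0)$, $c=(3,10)$, $a=(2,11)$, $k'=k=(1,100)$: the schedule $a(k',c')(k,c)$ has $L_{\max}=L_{c'}=10$, while $c(k',c')(k,a)$ gives $L_{c'}=11$.

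The repair is to sort the long singletons and the second jobs of the long pairs by EDD first. The first exchange of Lemma~\ref{L10} still works here, but because $k$ is short and $d_k\ge d_l$, not because $b_l\le b_j$, which fails in this case. After sorting, any violation appears between the last singleton $a$ and the first long pair $(k,c)$, which are adjacent. With $S$ the start of $a$, the swap $a(k,c)\to c(k,a)$ keeps the block length. The new $L_a$ is at most the old $L_c$ since $d_a\ge d_c$, and the new $L_k=S+4p+b_c+b_k-d_k$ is below $S+5p+b_a+b_c-d_c$.

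Likewise, in the partner step, $y\prec k_r$ gives $2p+b_y-d_y\ge 2p+b_{k_r}-d_{k_r}$, the opposite of what you claim. The swap is valid only because $y$ now finishes earlier than it did in $s$ and $c_r$ dominates the pair's lateness.

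Second, Lemma~\ref{L13} compares two adjacent pairs only, so it does not show that the long pairs come right after the long singletons. A short singleton $y$ immediately before a long pair $(k,c)$ is excluded by nothing you cite. Your own use of Lemma~\ref{L14A} on $(j,k)i$ with $k$ short produces exactly such a configuration, $k(j,i)$. On it, Lemmas~\ref{L14} and \ref{L14A} just cycle.

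You need the additional exchange $y(k,c)\to(k,c)y$, with $S$ the start of $y$. The block length $5p+b_y+b_c$ is unchanged, and $k$ and $c$ finish earlier. The new $L_y=S+5p+b_c+b_y-d_y$ is at most the old $L_c=S+5p+b_y+b_c-d_c$, because $d_y\ge d_c$ ($y$ short, $c$ long). The paper's three-line proof skips this configuration too, but without this exchange neither argument reaches the stated form.
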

\begin{proof} By Lemmas \ref{L14A} and \ref{L15} singletons $j\in H_{\mathcal{I}}$ make a prefix of $S$.  By Lemma \ref{L13} no pair $(i,j)$ with $j\in H_{\mathcal{I}}$ follows any pair $(k,l)$ with $j\in \mathcal{I}\setminus H_{\mathcal{I}}$.
By Lemma \ref{L12} the prefix of $S$  follows the EDD order.

\end{proof}

\section{Trimming algorithm for instances without long jobs}
This section gives trimming algorithm for the disagreeable case without long jobs. We further assume an even number of jobs $n$, the extension to an odd $n$ is, by Lemma \ref{LS}, straightforward and will be omitted. By Lemma \ref{LS} for an even $n$ the schedule has $n/2$ pairs of interlaced jobs.
We begin by defining pivotal jobs for given makespan $C_{\max}$ and maximum lateness $L_{\max}$.
\subsection{Pivotal jobs for $C_{\max}$ and $L_{\max}$} \label{S1}
 For an even $n$,  and given $C_{\max}\geq 3p \times \frac{n}{2}+b_n+ \dots b_{\frac{n}{2}+1}$ and $L_{\max}$ let 
 $k^*$ be the \emph{largest} $k=0, \dots, n-1$ such that
\begin{equation} \label{E1}
C_{\max}-d_{n-k}\leq L_{\max}.
\end{equation}
If no such $k$ exists, then there is no schedule with makespan $C_{\max}$ and maximum lateness not exceeding $L_{\max}$. 
If $k^*=n-1$, then the schedule
\begin{equation*}
(\frac{n}{2}, \frac{n}{2}+1) \dots, (1,n),
\end{equation*}
has maximum lateness not exceeding $L_{\max}$ and the makespan $ 3p \times \frac{n}{2}+b_n+ \dots b_{\frac{n}{2}+1}$. Thus we can assume that $k^*< n-1$.
Let $i^*$ be the \emph{largest} $n-1\geq i>k^*$ such that
\begin{equation} \label{E2}
C_{\max}-p-b_{n-k^*}+b_{n-i}-d_{n-i}\leq L_{\max}.
\end{equation}
If such $i^*$ does not exist, then the jobs in the last pair of  any schedule with makespan $C_{\max}$ and maximum lateness not exceeding $L_{\max}$ belong to the set
\begin{equation*}
\{n-k^*, \dots,n\}.
\end{equation*}
In this case, if $k^*=0$, then there is no schedule with makespan $C_{\max}$ and maximum lateness not exceeding $L_{\max}$. 
For $k^*>0$,  if there is a schedule with makespan $C_{\max}$ and maximum lateness not exceeding $L_{\max}$, then it ends with the pair $(n-k^*,n-k^*+1)$ (the pair is used for trimming in this case). Therefore it suffices to consider trimming for the case where both $k^*$ and $i^*$ exist and $n-1\geq i^*>k^*$.

\subsection{Possible ends}
\begin{lemma} \label{A12}
Let $S$ be the shortest schedule with maximum lateness $L_{\max}$, and let $C_{\max}$ be the makespan of $S$. If both $k^*$ and $i^*$ exist for $C_{\max}$ and $L_{\max}$, then $(n-i^*,w)$ for some $w> n-i^*$ occurs in $S$.
\end{lemma}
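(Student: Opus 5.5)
Throughout, $S$ is a shortest schedule with maximum lateness $L_{\max}$ in the disagreeable case without long jobs and $n$ even. By Lemma \ref{LS} it consists of $n/2$ interlaced pairs. By Lemma \ref{L0} each pair is ordered $(i,j)$ with $i<j$, and by Lemma \ref{L11} the first jobs of consecutive pairs increase. Since $i<j$ implies $b_i\geq b_j$, the $b$-values sum to at least $b_n+\dots+b_{\frac{n}{2}+1}$, so $C_{\max}\geq 3p\cdot\frac{n}{2}+b_n+\dots+b_{\frac{n}{2}+1}$ and the pivotal indices of Subsection \ref{S1} are well defined. I read the two pivotal conditions as follows. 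By (\ref{E1}), a job can complete at $C_{\max}$ only if it lies in $\{n-k^*,\dots,n\}$. By (\ref{E2}), a job $n-i$ may start as a first task in the last pair only if $i\leq i^*$. Indeed, if the last pair is $(x,y)$, then $x$ completes at $C_{\max}-p-b_y+b_x$, and $b_y\leq b_{n-k^*}$ because $y\geq n-k^*$.

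The proof is by contradiction. Suppose job $u:=n-i^*$ is not the first job of any pair, so it is the second job of some pair $(v,u)$ with $v<u$. My plan is an exchange that either lowers the makespan or contradicts the maximality of $i^*$. Let $(x,y)$ be the last pair of $S$, so $y\geq n-k^*$. First I will show $u$ cannot be $x$ or $y$. It is not $x$ by assumption. It is not $y$, because $u=n-i^*$ with $i^*>k^*$ means $u<n-k^*$. So $(v,u)$ is strictly earlier, and Lemma \ref{L11} gives $v<x$.

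The key step swaps $u$ and $x$, replacing $(v,u)\dots(x,y)$ by $(v,x)\dots(u,y)$. This is legal as a sequence since $v<x$, and if $u>y$ I reorder the last pair as $(y,u)$. I will compute the effect of the swap in three parts. First, the makespan changes by $b_x-b_u$ through the earlier pair. Second, job $u$ now completes, as a first task of the last pair, at $C'-p-b_y+b_u$, which must be compared with (\ref{E2}). Third, job $x$ now completes earlier, or at the same position, inside $(v,x)$.

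I then split into cases on the relative order of $u$ and $x$. If $x<u$, then $b_x\geq b_u$, and I expect the swap to lower the makespan, or keep it equal. Every job between the two pairs completes no later, $u$ satisfies (\ref{E2}) by the definition of $i^*$, and $x$ completing earlier is harmless. So the maximum lateness is not increased, which contradicts the minimality of $C_{\max}$, or at worst yields a schedule with the same makespan containing $(n-i^*,w)$. If $x>u$, then $x=n-i$ for some $i<i^*$. Here I will use the disagreeable ordering $u\prec x$ to show that putting $u$ in place of $x$ at the end keeps the lateness within $L_{\max}$: $C-p-b_y+b_u-d_u\leq L_{\max}$ holds by (\ref{E2}). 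Also $x$ now completes earlier than before. The jobs between the two pairs shift by $b_x-b_u\leq 0$. So again the schedule remains feasible and has makespan no larger.

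The main obstacle is the bookkeeping. I must confirm that the shift $b_x-b_u$ has the right sign in the $x<u$ case, or else pick a different partner, for example the first job of the pair immediately after $(v,u)$, and cascade the swap step by step. I must also handle how the lemma's "occurs in $S$" copes with ties: since $S$ is only \emph{a} shortest schedule, I will interpret the claim as holding after these makespan-preserving exchanges, choosing $S$ among shortest optimal schedules to maximize the index of the last first job.
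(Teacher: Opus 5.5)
\textbf{The gap: the swap of $u$ with $x$ is not controlled by (\ref{E2}).} Your key step exchanges $u=n-i^*$ with the first job $x$ of the last pair $(x,y)$. This step is not justified, and in general it fails.

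After the swap the earlier pair shrinks by $b_u-b_x$. So $u$ completes exactly where $x$ used to, at $C_{\max}-p-b_y+b_x$. Its lateness is therefore that of $x$ in $S$ plus $d_x-d_u\geq 0$.

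Condition (\ref{E2}) only bounds $C_{\max}-p-b_{n-k^*}+b_u-d_u$. Since $y\geq n-k^*$, you have $b_y\leq b_{n-k^*}$. That is the direction you correctly used to show (\ref{E2}) is \emph{necessary} for the first job of the last pair. It is the wrong direction for using (\ref{E2}) as a \emph{sufficient} condition. You would need $b_{n-k^*}-b_y\leq b_u-b_x$, and nothing guarantees it. Also, $u\prec x$, i.e.\ $b_x-d_x\leq b_u-d_u$, points the wrong way.

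\textbf{A concrete failure.} Take $p=2$, $b=(2,2,2,1)$, $d=(6,13,15,15)$, $L_{\max}=0$, and $S=(1,2)(3,4)$ with $C_{\max}=15$, which equals the lower bound.
\begin{itemize}
\item Here $k^*=1$, $i^*=2$, $u=2$ and $(x,y)=(3,4)$.
\item Your swap gives $(1,3)(2,4)$, in which job $2$ completes at $14>d_2=13$.
\end{itemize}

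\textbf{The missing idea.} Before swapping, you must make $n-k^*$ the second job of the last pair, because (\ref{E2}) is calibrated to $b_{n-k^*}$. This is what the paper's case analysis does.
\begin{itemize}
\item If $n-k^*$ is the second job of some earlier pair, swap it with $y$. The makespan is unchanged. The job $y$ and the intermediate jobs only move earlier. The first job of the last pair finishes earlier because $b_{n-k^*}\geq b_y$. And $n-k^*$ may finish at $C_{\max}$ by (\ref{E1}).
\item If $n-k^*$ is a first job, Lemma \ref{L11} places its pair after $(v,u)$. Then rotate $(v,u)\dots(n-k^*,w)$ into $(v,w)\dots(u,n-k^*)$ directly.
\end{itemize}
Only after this normalisation is the lateness of $u$, as first job of the last pair, bounded exactly by (\ref{E2}). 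In the example it yields $(1,4)(2,3)$, which is feasible with makespan $15$.

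\textbf{Smaller points.}
\begin{itemize}
\item Your case $x<u$ is vacuous. You already showed $x\geq n-i^*=u$, and $x\neq u$ by assumption. The sign you give there is wrong anyway: $b_x\geq b_u$ lengthens the earlier pair.
\item Likewise $u>y$ cannot occur, since $u<n-k^*\leq y$.
\item You are right to worry about ties. The same example shows the statement fails for a particular shortest schedule: $(1,2)(3,4)$ has minimum makespan and maximum lateness $0$, yet job $2$ is a second job. The paper's ``shorter'' is only weakly shorter.
\item Your proposed tie-break, maximising the index of the last first job, selects exactly $(1,2)(3,4)$ here rather than $(1,4)(2,3)$. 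It also conflicts with your own exchange, which lowers that index.
\item The lemma should be read existentially: some shortest schedule with maximum lateness at most $L_{\max}$ contains a pair $(n-i^*,w)$.
\end{itemize}
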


\begin{proof}
Suppose for contradiction that $(t,n-i^*)$ for some $t<n-i^*$ is in $S$ instead. We first show that $n-k^*$ does not occur \emph{before} that pair in $S$.  Suppose for contradiction that $n-k^*$ occurs \emph{before} $(t,n-i^*)$ in $S$, then by Lemma \ref{L11} we have
\begin{equation*}
S=\dots(t_0,n-k^*) \dots (t,n-i^*) \dots .
\end{equation*}
This leads to two cases, either $(t,n-i^*)$ is not last in $S$, i.e.
\begin{equation*}
S=...(t_0,n-k^*) \dots (t,n-i^*) \dots (t',w'),
\end{equation*}
or it is last, i.e.
\begin{equation*}
S=...(t_0,n-k^*) \dots (t,n-i^*).
\end{equation*}
In the latter case $C_{\max}-d_{n-i^*}\leq L_{\max}$.  Since $i^*>k^*$ this contradicts the definition of $k^*$. In the former, if $w'>n-k^*$, then swapping $n-k^*$ and $w'$ gives
\begin{equation*}
S'=..(t_0,w') \dots (t,n-i^*) \dots (t',n-k^*),
\end{equation*}
which does not increase maximum lateness and preserves the makespan. However $t'>n-i^*$ gives a shorter schedule by sweeping $t'$ with $n-i^*$ which leads to contradiction, and $t'<n-i^*$ contradicts the definition of $i^*$.
%Otherwise we get a shorter feasible
%schedule by swapping $n-k^*$ and $t'$ which gives contradiction. 
%Thus we must have $w'<n-k^*$ 
%Thus $w'<n-k^*$ which however contradicts the definition of $k^*$.
Thus $n-k^*$ must be \emph{after} $(t,n-i^*)$ in $S$. We now show that is assumption also leads to contradiction. We have that either $n-k^*$ is the first job interlaced with another job 
\begin{equation*}
S= \dots (t,n-i^*) \dots (n-k^*,w) \dots
\end{equation*}
or the second
\begin{equation*}
S= \dots (t,n-i^*) \dots (t', n-k^*) \dots
\end{equation*}
In the former case change $S$ to
\begin{equation*}
S'=\dots (t,w) \dots (n-i^*,n-k^*) \dots
\end{equation*}
which is shorter than $S$ which gives contradiction since maximum lateness does not increase.
In the latter case, if $(t', n-k^*)$ is last in $S$, i.e.
\begin{equation*}
S= \dots (t,n-i^*) \dots (t', n-k^*),
\end{equation*}
then for $t'>n-i^*$ swap  $t'$ and $n-i^*$ to obtain a shorter schedule which gives contradiction since maximum lateness does not increase. However $t'<n-i^*$ contradicts the definition of  $i^*$.  It remains to consider the case where $(t', n-k^*)$ is not last in $S$, i.e.
\begin{equation*}
S= \dots (t,n-i^*) \dots (t', n-k^*)\dots (t_0,w_0).
\end{equation*}
For $w_0<n-k^*$,  we get  contradiction with the definition of $k^*$. For $w_0>n-k^*$, 
swap of $n-k^*$ and $w_0$ to get
\begin{equation*}
S'=\dots (t,n-i^*) \dots (t',w_0)\dots (t_0, n-k^*)
\end{equation*}
without increasing maximum lateness.
If $t_0>n-i^*$, then the swap of $t_0$ and $n-i^*$ 
results in a shorter schedule without increasing maximum lateness
\begin{equation*}
S''= \dots (t,t_0) \dots (t', w_0)\dots (n-i^*,n-k^*)
\end{equation*}
which gives contradiction.
However $t_0<n-i^*$ contradicts the definition of  $i^*$. Thus $n-k^*$ cannot be after the pair $(t, n-i^*)$ either. This proves that  a pair $(n-i^*,w)$ for some $w> n-i^*$ exists in $S$.

\end{proof}

\begin{lemma} \label{A13}
Let $S$ be the shortest schedule with maximum lateness $L_{\max}$, and let $C_{\max}$ be the makespan of $S$. If both $k^*$ and $i^*$ exist for $C_{\max}$ and $L_{\max}$, then
$S$  ends with
\begin{equation*}
S_{\beta}=(n-i^*, n-i^*+2\beta+1)\dots (n-i^*+\beta, n-i^*+\beta+1)
\end{equation*}
for $\beta+1\geq i^*-k^*$, and
\begin{equation*}
S_{\beta}=(n-i^*, y_0)\dots (n-i^*+\beta, y_{\beta}=n-k^*),
\end{equation*}
for $\beta+1<i^*-k^*$, where $y_0>\dots>y_{\beta}$, and $0\leq \beta\leq \frac{i^*-1}{2}$. Moreover $\{y_0, \dots, y_{\beta}\}=\{n-i^*+\beta+1, \dots, n-i^*+2\beta+1\}$ for $2\beta+1\geq i^*-k^*$ 
and $\{y_0, \dots, y_{\beta-1}\}=\{n-i^*+\beta+1, \dots, n-i^*+2\beta\}$ for $2\beta+1< i^*-k^*$. 
\end{lemma}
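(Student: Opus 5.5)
Starting from Lemma \ref{A12}, $S$ contains a pair $(n-i^*,w)$ with $w>n-i^*$. Every job $x<n-i^*$ satisfies $C_{\max}-p-b_{n-k^*}+b_x-d_x>L_{\max}$ by the maximality of $i^*$. I will first show that all pairs after $(n-i^*,w)$ in $S$ use only jobs from $[n-i^*..n]$. Then I will show that their first components are the consecutive jobs $n-i^*,\dots,n-i^*+\beta$, and identify which jobs act as second components.

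\emph{Step 1: the suffix beginning with $(n-i^*,w)$ contains only jobs $\geq n-i^*$.} By Lemma \ref{L11}, the first jobs of pairs appear in EDD order, so every later pair $(t,u)$ has $t>n-i^*$, and by Lemma \ref{L0} $u>t$. Every job in the suffix is therefore $\geq n-i^*$. Conversely, suppose some job $x>n-i^*$ sits in an earlier pair $(t',u')$. If $x=t'$, this contradicts Lemma \ref{L11}. So $x=u'$, and exchanging $x$ with a second component of a later pair yields a schedule that is no longer and no later. This is the swap argument used in the proof of Lemma \ref{A12}: swap $n-k^*$, or more generally a large-index job, towards the end. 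If the swap makes the schedule strictly shorter, we contradict the choice of $S$ as shortest. Hence the suffix is exactly $[n-i^*..n]$, consisting of $\beta+1$ pairs. It follows that $2(\beta+1)\le i^*+1$, so $\beta\le\frac{i^*-1}{2}$.

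\emph{Step 2: the shape of the suffix.} By Lemmas \ref{L11} and \ref{L0}, the first components $t_0<\dots<t_\beta$ are increasing and each is smaller than its partner. I claim the first components are the $\beta+1$ smallest jobs $n-i^*,\dots,n-i^*+\beta$. If a partner $y_c$ were smaller than some later first component, exchanging the two would move the larger $b$, since $b$ is nonincreasing in the disagreeable case, into the partner position. This shortens the makespan (the last partner's $b$ shrinks, or the sum $\sum b_{y}$ decreases) without increasing lateness, because $i\prec j$ is consistent with EDD. Next, the partners appear in decreasing order $y_0>\dots>y_\beta$. Placing the smallest partner last minimises the last $b$ and hence the makespan, and the lateness comparisons follow as in Lemma \ref{L11}.

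Split at $k^*$. If $n-k^*$ lies in the partner range, i.e.\ $\beta+1\ge i^*-k^*$, nothing forces a particular last job, and the partners are exactly $n-i^*+\beta+1,\dots,n-i^*+2\beta+1$, which gives the first form. If $\beta+1<i^*-k^*$, then the last job must satisfy (\ref{E1}), so by the maximality of $k^*$ the last job is $y_\beta=n-k^*$. The remaining partners are then the smallest available, namely $n-i^*+\beta+1,\dots$, with the stated range depending on whether $2\beta+1\ge i^*-k^*$.

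The main obstacle is Step 2's exchange showing the partners are exactly the consecutive indices, rather than arbitrary larger jobs. Large-index jobs (small $b$, late $d$) might be better placed earlier than the suffix. I would first attempt the argument used in Lemma \ref{A12}: any partner $y>n-i^*+2\beta+1$, other than the forced $n-k^*$, can be swapped with a missing consecutive index located earlier. This swap keeps the makespan, because the earlier pair's length depends only on its partner's $b$ and the order by $b$ is reversed. It does not increase lateness, because of the definitions of $i^*$ and $k^*$ via (\ref{E1}) and (\ref{E2}). If the makespan decreases strictly, we have a contradiction; otherwise we normalise to this form without loss of generality.
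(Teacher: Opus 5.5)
Your proposal has a genuine gap: the converse half of Step 1 is false, and the rest of your argument depends on it.

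It is true that every pair after $(n-i^*,w)$ uses only jobs $\ge n-i^*$. It is not true that every job $x>n-i^*$ lies in that suffix. If it did, the suffix would be exactly $[n-i^*..n]$, which forces $2\beta+1=i^*$. The lemma, however, deliberately allows any $0\le\beta\le\frac{i^*-1}{2}$. When $2\beta+1<i^*$, the jobs $n-i^*+2\beta+2,\dots,n$ sit \emph{before} $S_\beta$. These are the set $A_\beta$ of Lemma~\ref{A14}, i.e.\ the partners in $P_\alpha$ of the trim, and they are the reason the algorithm has to search for $\alpha^*$.

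A concrete instance: take $p=10$, $b=(10,10,10,3,2,1)$, $d=(30,61,93,96,97,98)$ and $L_{\max}=0$. The unique shortest schedule is $(1,6)(2,5)(3,4)$ with $C_{\max}=96$. This gives $k^*=2$, $i^*=3$ and $\beta=0$, and jobs $5,6$ precede the suffix.

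Your swap cannot repair this, for two reasons.
\begin{itemize}
\item Exchanging two partners never changes the makespan, which is $\sum(3p+b_{\mathrm{partner}})$. So there is no strict improvement that would contradict the minimality of $S$.
\item The exchange can break feasibility. If the later partner $y$ is smaller than $x$, the earlier pair lengthens and everything up to the first job of the later pair is delayed. Swapping $5$ and $4$ above makes job $3$ finish at $94>d_3$. If $y>x$, then $x$, which has the earlier due date, inherits $y$'s completion time.
\end{itemize}
The swap in Lemma~\ref{A12} is special: it moves $n-k^*$ into the \emph{last} slot, where (\ref{E1}) guarantees feasibility. Your closing paragraph in fact concedes that large-index jobs may precede the suffix, which contradicts Step 1. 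The remedy you propose there has the same defect. (\ref{E1}) and (\ref{E2}) constrain only the last job and the first job of the last pair. They say nothing about a lower-index job moved into an interior slot of the suffix.

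Because Step 2 presupposes that the suffix is $[n-i^*..n]$, it only exchanges within the suffix. It never treats the case the paper's proof actually needs: a gap job $z$ with $x_{i-1}<z<x_i$ that is a partner in a pair \emph{preceding} $(n-i^*,y_0)$.

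The paper's route makes no claim about where the large-index jobs go.
\begin{itemize}
\item By Lemmas~\ref{A12} and~\ref{L11}, the suffix is $(n-i^*,y_0)\dots(x_\beta,y_\beta)$ with increasing $x$'s.
\item Any gap job $z$ must be a partner somewhere earlier in $S$. Swapping $z$ with $x_i$ changes the partner set and gives a shorter schedule, since $b_{x_i}\le b_z$. This contradicts the choice of $S$.
\item Then $y_\beta\ge n-k^*$ by (\ref{E1}), and the $y$'s decrease by exchange.
\item The bound $\beta\le\frac{i^*-1}{2}$ comes only from $n-i^*+2\beta+1\le n$. The value of $\beta$ itself is left free and is pinned down later through the trim.
\end{itemize}
A minor point: in Step 2 the exchange moves the \emph{smaller} $b$ into the partner position, not the larger.
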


\begin{proof} By Lemmas \ref{A12} and \ref{L11},  $S$ ends with the sequence
\begin{equation*}
(x_0=n-i^*, y_0)\dots (x_{\beta},y_{\beta})
\end{equation*}
for some $\beta\geq 0$, and
\begin{equation*}
x_0=n-i^*<\dots <x_{\beta}.
\end{equation*}
If $\beta=0$, then $S$ ends with
\begin{equation*}
(n-i^*, n-k^*).
\end{equation*}
If $\beta>0$, then $x_i=x_{i-1}+1$ for $i=1, \dots, \beta$. Suppose for contradiction that $x_{i-1}<z<x_i$ for some $i=1,\dots, \beta$. Then by Lemma \ref{L11} either
\begin{equation*}
S=\dots (t, z)\dots (x_{i-1},y_{i-1})(x_{i},y_{i}) \dots (x_{\beta},y_{\beta})
\end{equation*}
or
\begin{equation*}
S=\dots (x_{i-1},z)(x_{i},y_{i}) \dots (x_{\beta},y_{\beta}).
\end{equation*}
The swap of $z$ and $x_i$ gives either
\begin{equation*}
S'=\dots (t, x_{i})\dots (x_{i-1},y_{i-1})(z,y_{i}) \dots (x_{\beta},y_{\beta})
\end{equation*}
or
\begin{equation*}
S'=\dots (x_{i-1},x_{i})(z,y_{i}) \dots (x_{\beta},y_{\beta})
\end{equation*}
gives a shorter schedule since $z<x_i$, also the maximum lateness of the schedule does not exceed $L_{\max}$ since $n-i^*<z$. This results in contradiction.
Therefore we have
\begin{equation*}
S=\dots (n-i^*, y_0)\dots (n-i^*+\beta,y_{\beta}).
\end{equation*}
By definition of $k^*$ we have $y_{\beta}\geq n-k^*$. Thus $y_i<y_{i-1}$ for $i=1, \dots, \beta$ which follows by the exchange argument. Therefore
\begin{equation*}
(n-i^*, n-i^*+2\beta+1)\dots (n-i^*+\beta, n-i^*+\beta+1).
\end{equation*}
for $\beta+1\geq i^*-k^*$, and
\begin{equation*}
(n-i^*, y_0)\dots (n-i^*+\beta, y_{\beta}=n-k^*),
\end{equation*}
for $\beta+1<i^*-k^*$, where $\{y_0, \dots, y_{\beta}\}=\{n-i^*+\beta+1, \dots, n-i^*+2\beta+1\}$ for $2\beta+1\geq i^*-k^*$ 
and $\{y_0, \dots, y_{\beta-1}\}=\{n-i^*+\beta+1, \dots, n-i^*+2\beta\}$ for $2\beta+1< i^*-k^*$ as required by the lemma. For $S$ clearly $n-i^*+2\beta+1\leq n$.  Thus
$0\leq \beta\leq \frac{i^*-1}{2}$.
\end{proof}

\subsection{Optimal trim}
Let both $k^*$ and $i^*$ exist for $C_{\max}$ and $L_{\max}$. For  $\beta\geq 0$, $\alpha\geq 0$ such that $\alpha+2\beta+1=i^*$ and $i^*
+\alpha\leq n-1$ define \emph{trim} $\mathcal{S}_{\alpha,\beta}$ of length $\alpha+\beta+1$ as follows:

For $\beta+1\geq i^*-k^*$,  $\mathcal{S}_{\alpha,\beta}=P_{\alpha}S_{\beta}$ where
\begin{equation*}
P_{\alpha}=(n-i^*-\alpha, n) \dots(n-i^*-1,n-\alpha+1),
\end{equation*}
\begin{equation*}
S_{\beta}=(n-i^*, n-\alpha)\dots (n-i^*+\beta, n-\alpha - \beta).
\end{equation*}

For $\beta+1<i^*-k^*$, $\mathcal{S}_{\alpha,\beta}=P_{\alpha}S_{\beta}$ where
\begin{equation*}
P_{\alpha}=(n-i^*-\alpha, y_0) \dots(n-i^*-1,y_{\alpha -1}),
\end{equation*}
\begin{equation*}
S_{\beta}=(n-i^*, y_{\alpha})\dots (n-i^*+\beta, y_{\beta+\alpha}=n-k^*),
\end{equation*}
  $y_0>\dots>y_{\beta +\alpha-1}$, and $\{y_0, \dots, y_{\beta+\alpha}\}=\{n-i^*+\beta+1, \dots, n\}$. 

The trim $\mathcal{S}_{\alpha,\beta}$ is feasible if it has maximum lateness not exceeding $L_{\max}$ when completed at $C_{\max}$. Among feasible trims, if any, we select the one with maximum $\alpha$. The trim will be called
an optimal trim, $\mathcal{S}_{\alpha^*,\beta^*}=P_{\alpha^*}S_{\beta^*}$,  where $\alpha^*$ denotes an optimal $\alpha$, the optimal $\beta$ is denoted by $\beta^*=\frac{i^*-\alpha^*-1}{2}$, and optimal \emph{end} is  $S_{\beta^*}$. We later show that if no feasible trim exists, then no feasible schedule with maximum lateness not exceeding $L_{\max}$ and makespan $C_{\max}$ exists.

\begin{lemma} \label{A14}
Let $S$ be the shortest schedule with maximum lateness $L_{\max}$, and let $C_{\max}$ be the makespan of $S$. If both $k^*$ and $i^*$ exist for $C_{\max}$ and $L_{\max}$, then
$S$  ends with $S_{\beta\geq \beta^*}$.
\end{lemma}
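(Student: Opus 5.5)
By Lemma \ref{A13}, $S$ ends with some $S_\beta$ with $0\leq \beta\leq \frac{i^*-1}{2}$. Set $\alpha=i^*-2\beta-1\geq 0$. The plan is to show two things: the last $\alpha+\beta+1$ pairs of $S$ form a feasible trim $\mathcal{S}_{\alpha,\beta}$, and $\beta\geq\beta^*$ then follows from the maximality of $\alpha^*$.

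\textbf{Step 1: identify the pairs just before $S_\beta$.} The end $S_\beta$ uses the first jobs $n-i^*,\dots,n-i^*+\beta$ and $\beta+1$ second jobs taken from $\{n-i^*+\beta+1,\dots,n\}$. This leaves exactly $\alpha$ jobs of $\{n-i^*+\beta+1,\dots,n\}$ unused by $S_\beta$; they lie among the top indices. By Lemma \ref{L0} each of them is a second job of its pair, since any job $x>n-i^*+\beta$ that is first in a pair would contradict Lemma \ref{L11} and the consecutiveness of the $x$'s established in Lemma \ref{A13}. Their partners are first jobs smaller than $n-i^*$. Using the same swap argument as in Lemma \ref{A13} (exchanging a gap $z$ with a first job to shorten the schedule), I will show these $\alpha$ pairs are exactly $(n-i^*-\alpha,\cdot),\dots,(n-i^*-1,\cdot)$. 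I will also show they occur consecutively, immediately before $S_\beta$. The tool is Lemma \ref{L11}: the first jobs appear in EDD order, so the pairs with first jobs $n-i^*-\alpha,\dots,n-i^*-1$ form a contiguous block ending right before $S_\beta$. Any other pair inserted in between would have a first job in that range, forcing it to be one of them.

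\textbf{Step 2: order the second jobs.} An exchange argument shows the second jobs appear in decreasing order $y_0>\dots>y_{\alpha+\beta}$ (or $y_{\alpha+\beta}=n-k^*$ in the second case). Suppose two pairs $(x,y)$ and $(x',y')$ with $x<x'$ have $y<y'$. Swapping $y$ and $y'$ keeps both makespan and lateness within $L_{\max}$: because $b$ is nonincreasing, placing the larger-due-date job later only helps, as in the arguments of Lemma \ref{A13}. After these steps the last $\alpha+\beta+1$ pairs of $S$ coincide with $P_\alpha S_\beta$ as defined. They complete at $C_{\max}$ within $L_{\max}$, so $\mathcal{S}_{\alpha,\beta}$ is a feasible trim.

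\textbf{Step 3: conclude.} Since $\alpha^*$ is the largest $\alpha$ of a feasible trim, $\alpha\leq\alpha^*$, i.e.\ $i^*-2\beta-1\leq i^*-2\beta^*-1$, giving $\beta\geq\beta^*$.

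The main obstacle is Step 1. I need to show that the $\alpha$ leftover large-index jobs are paired precisely with $n-i^*-\alpha,\dots,n-i^*-1$, and that the constraint $i^*+\alpha\leq n-1$ holds. I would first try the contradiction pattern of Lemma \ref{A12}. Any first job $t<n-i^*-\alpha$ paired with one of the top jobs, with some $z\in[n-i^*-\alpha,n-i^*-1]$ paired elsewhere, can be swapped to obtain a schedule that is shorter or has equal lateness. The definitions of $k^*$ and $i^*$ through (\ref{E1}) and (\ref{E2}) then bound the lateness of the moved jobs.
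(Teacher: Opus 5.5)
Your overall logic matches the paper's. You show that the trim $\mathcal{S}_{\alpha,\beta}$ with $\alpha=i^*-2\beta-1$ read off from $S$ is feasible, then invoke the maximality of $\alpha^*$; the paper phrases the same thing as a contradiction. Your preliminary observations are also fine: jobs of $A_\beta$ lie before $S_\beta$ and, by Lemma~\ref{L11}, are second jobs with partners in $B=\{1,\dots,n-i^*-1\}$. This already gives $i^*+\alpha\le n-1$, since the $\alpha$ partners are distinct elements of $B$.

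The gap is the central claim of Step 1: that in $S$ itself the $\alpha$ pairs immediately before $S_\beta$ are $(n-i^*-\alpha,\cdot),\dots,(n-i^*-1,\cdot)$ with partners in $A_\beta$. Nothing forces this. The paper's proof explicitly works with $S=\dots(t_1,z_1)\dots(t_\alpha,z_\alpha)\dots S_\beta$, where pairs with both jobs in $B$ may sit between the $A_\beta$-pairs, and $t_k\le n-i^*-\alpha+k-1$ may be strict. The exchanges you propose do not remove such configurations:
\begin{itemize}
\item If $z\in[n-i^*-\alpha,n-i^*-1]$ is a first job, swapping it with $t<z$ keeps the makespan but puts the earlier-due-date, longer job $t$ into a later first-job slot. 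This is exactly the direction Lemma~\ref{L11} excludes, so the lateness of $t$ is uncontrolled.
\item If $z$ is a second job, the pair receiving $t$ ends up with a second job no shorter than $z$, so the schedule does not get shorter.
\item Re-pairing second jobs, e.g.\ $(t_1,z_1)(x,y)\to(t_1,y)(x,z_1)$, delays $x$ by $b_y-b_{z_1}\ge 0$.
\end{itemize}
The swap in Lemma~\ref{A13} worked only because the displaced job exceeded $n-i^*$, so (\ref{E2}) applied. For $z<n-i^*$, maximality of $i^*$ says precisely that (\ref{E2}) fails. So the definitions of $k^*$ and $i^*$ give no lateness bound for the moved jobs, contrary to what your last paragraph hopes.

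The missing idea is that feasibility of a trim is a property of the block completed at $C_{\max}$, not of a suffix of $S$. You do not need $S$ to have this shape. The paper moves each $A_\beta$-pair rightwards by re-pairing $(t_i,z_i)(x_1,y_1)\cdots(x_\ell,y_\ell)\to(t_i,x_1)(y_1,x_2)\cdots(y_\ell,z_i)$, accepting a longer prefix. This yields a block $(t'_1,z_1)\cdots(t'_\alpha,z_\alpha)S_\beta$ meeting $L_{\max}$ when completed at $C_{\max}$. Only then does it replace the first jobs, using $t'_k\le n-i^*-\alpha+k-1$: a first job with later due date and smaller $b$ has smaller lateness in the same slot.

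A more direct route is a witness argument: each trim job completes no later than some job of $S$ with no later due date.
\begin{itemize}
\item For every second job of $P_\alpha$, the job $n-i^*$ is a witness, since it finishes in $S$ after $S_\beta$ starts.
\item For the first job $n-i^*-k$ of the $k$-th pair from the end of $P_\alpha$, among the last $k$ jobs of $B$ to complete in $S$ one has index at most $n-i^*-k$. A count using $b_x\le p$ shows the $k$-th last of them completes no earlier than $n-i^*-k$ does in the trim.
\end{itemize}

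Separately, your Step~2 justification is reversed. Swapping $y<y'$ in $(x,y)\dots(x',y')$ puts the earlier-due-date job $y$ into the later slot. What makes this harmless is (\ref{E1}) (or a witness as above), not that the later slot receives the later due date.
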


\begin{proof} For contradiction suppose that $\beta<\beta^*$, i.e $\alpha>\alpha^*$ since $\alpha+2\beta+1=i^*$. We show that then a feasible trim with $\alpha$ exists which gives contradiction.
For $i^*-k^*\leq 2\beta +1$,  Lemma \ref{A13} implies that no job from the set  $A_{\beta}=\{ n-i^*+2\beta +2,\dots, n\}$ is in $S_{\beta}$
We have $|A_{\beta}|=i^*-2\beta-1=\alpha$. 
Also no job
from the set $B=\{1,\dots,n-i^*-1\}$ is in $S_{\beta}$. 
%Thus, if $2(i^*-\beta)<n$, then there is a pair $(x,y)$ with both $x$ and $y$ in $B$ in $S$.

For $i^*-k^*> 2\beta +1$  Lemma \ref{A13} implies that no job from the set  $A_{\beta}=\{ n-k^*+1,\dots, n\}\cup \{n-k^*-1, \dots, n-i^*+2\beta+1\}$ is in $S_{\beta}$,
Again, $|A_{\beta}|=i^*-2\beta-1=\alpha$. Also 
no job
from the set $B=\{1,\dots,n-i^*-1\}$ is in $S_{\beta}$. 
%Thus, if $2(i^*-\beta)<n$, then there is a pair $(x,y)$ with both $x$ and $y$ in $B$ in $S$. We assume that $(x,y)$ is the latest such pair in $S$.  
Thus $S$ looks as follows
\begin{equation*}
S=\dots (t_1,z_1)\dots(t_{\alpha}, z_{\alpha})\dots S_{\beta},
\end{equation*}
%for some $\lambda \geq 0$. For $\lambda=0$ we have $S=\dots (x,y)S_{\beta}$.  
where $\{z_1>\dots> z_{\alpha}\}= A_{\beta}$, and $\{t_1,\dots, t_{\alpha}\}\subseteq B$. By Lemma \ref{L11} we have
$t_1<\dots<t_{\lambda}<n-i^*$.  
%Thus $x\leq n-i^*-(\lambda+1)$. 
%Moreover since $n-i^*-(\lambda+1)\leq \min\{t_1,\dots, t_{\lambda}, y\}$ and $x<y$, we have $x< n-i^*-(\lambda+1)$. Thus $S$ looks as follows
%\begin{equation*}
%S=\dots (t,n-i^*-(\lambda+1))\dots (t_1,z_1)\dots(t_{\lambda},z_{\lambda})S_{\beta}.
%\end{equation*}
Let
\begin{equation*}
S_1=\dots (t_1,z_1)\dots (t_i,z_i)(x_1,y_1)\dots(x_{\ell},y_{\ell}),
\end{equation*}
be a feasible schedule in $[0,C]$ with maximum lateness not exceeding $L_{\max}$,
and
\begin{equation*}
S_2=(t_{i+1}, z_{i+1})\dots(t_{\alpha},z_{\alpha})S_{\beta}.
\end{equation*}
be a schedule in $[c\geq 0, C_{\max}]$ with maximum lateness not exceeding $L_{\max}$, and $c\leq C$.
Change $S_1$ to
\begin{equation*}
S'_1=\dots (t_1,z_1)\dots (t_i,x_1)(y_1,x_2)\dots(y_{\ell-1},x_{\ell})(y_{\ell},z_i),
\end{equation*}
observe that $x_i$ is longer than $y_i$ for the disagreeable case, and the jobs $x_1, y_1, \dots, x_{\ell},y_{\ell}$ reduce their completion times by at least $p$ in comparison to $S_1$. Thus $S'_1$ has maximum lateness
not exceeding $L_{\max}$. Consider $(x_i,y_i)$ from $S_1$ and $(y_i,x_{i+1})$ from $S'_1$ for $i=1, \dots, \ell_1$. For $x_i<y_i<x_{i+1}$, call  $(y_i,x_{i+1})$ in $S'_1$ a \emph{switch}. For $x_i<x_{i+1}<y_i$, change $(y_i,x_{i+1})$ to $(x_{i+1}, y_i)$ in $S'_1$
With no switches the makespan of $S'_1$ increases by $\Delta=b_{x_1}-b_{y_{\ell}}$ in comparison to $S_1$. Let  $(y_{i_1},x_{i_1+1})$, \dots,  $(y_{i_k},x_{i_k+1})$ where $k\geq 1$ and $i_1<\dots<i_k$ be all switches. The makespan of $S'_1$ increases  
\begin{equation*}
\Delta=(b_{x_1}-b_{y_{i_1}})+ (b_{x_{i_1+1}}-b_{y_{i_2}})+\dots+(b_{x_{i_{k-1}+1}}-b_{y_{i_k}})+ (b_{x_{i_{k}+1}}-b_{y_{\ell}})
\end{equation*}
in comparison to $S_1$.
Change $S'_1$ to
\begin{equation*}
S''_1=\dots (t_1,z_1)\dots (t_i,x_1)(y_1,x_2)\dots(y_{\ell-1},x_{\ell})
\end{equation*}
in $[0, C+\Delta-(3p+b_{z_i})]$. The schedule is feasible and its maximum lateness do not exceed $L_{\max}$. 
We change $S_2$ to a feasible schedule with maximum lateness
not exceeding $L_{\max}$

\begin{equation*}
S'_2=(y_{\ell},z_i),(t_{i+1}, z_{i+1})\dots(t_{\alpha},z_{\alpha})S_{\beta}
\end{equation*}
in $[c-(3p+b_{z_i}), C_{\max}]$. We have
\begin{equation*}
C+\Delta-(3p+b_{z_i})\geq c-(3p+b_{z_i})
\end{equation*}
since $C\geq c$.

By repeating this operation we get 
\begin{equation*}
S_2=(t_1,z_1)\dots(t_{\alpha}, z_{\alpha})S_{\beta}
\end{equation*}
with maximum lateness not exceeding $L_{\max}$ when completed at $C_{\max}$. Also we have
\begin{equation*}
t_{k}\leq n-i^*+k -\alpha-1
\end{equation*}
for $k=1, \dots, \alpha$. Thus
\begin{equation*}
(n-i^*-\alpha,z_1)\dots(n-i^*-1, z_{\alpha})S_{\beta}
\end{equation*}
is a feasible trim $P_{\alpha}S_{\beta}$ with maximum lateness not exceeding $L_{\max}$ when completed at $C_{\max}$.
Since $\alpha>\alpha^*$ we get contradiction.
\end{proof}

\begin{lemma} 
Let $S$ be the shortest schedule with maximum lateness $L_{\max}$, and let $C_{\max}$ be the makespan of $S$. If both $k^*$ and $i^*$ exist for $C_{\max}$ and $L_{\max}$, then
$S$  ends with $S_{\beta^*}$.
%There is an optimal schedule that ends with the suffix $S_{\beta^*}$.
\end{lemma}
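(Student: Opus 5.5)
By Lemma \ref{A14}, $S$ ends with $S_\beta$ for some $\beta\geq\beta^*$. Equivalently, since $\alpha+2\beta+1=i^*$, it ends with $S_\beta$ for some $\alpha\leq\alpha^*$. It therefore remains to exclude $\beta>\beta^*$. I would argue by contradiction, assuming $\beta>\beta^*$, and then exhibit a schedule with maximum lateness at most $L_{\max}$ that is strictly shorter than $S$. This contradicts the choice of $S$ as the shortest such schedule.

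The first step is to compare the two ends $S_\beta$ and $S_{\beta^*}$. By Lemma \ref{A13}, the first jobs of $S_\beta$ are the consecutive jobs $n-i^*,\dots,n-i^*+\beta$, and the second jobs are taken, in decreasing order, from the block just above. The end $S_{\beta^*}$ uses the first jobs $n-i^*,\dots,n-i^*+\beta^*$ only. Its second jobs $y$ range over a block shifted by $\alpha^*-\alpha=2(\beta-\beta^*)$. In the trim $P_{\alpha^*}S_{\beta^*}$, the $\alpha^*-\alpha$ extra pairs of $P_{\alpha^*}$ take the largest-index jobs as second jobs and take first jobs from $n-i^*-\alpha^*,\dots,n-i^*-1$.

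The second step is a local exchange inside $S$. I would use the structure from the proof of Lemma \ref{A14}: the prefix of $S$ before the last $\alpha$ pairs preceding $S_\beta$ contains pairs $(t,z)$ whose first jobs lie in $B=\{1,\dots,n-i^*-1\}$. I would take the feasible trim $P_{\alpha^*}S_{\beta^*}$, which by definition completes at $C_{\max}$ within $L_{\max}$, and transplant it as the tail of $S$. The jobs it displaces are the $\beta-\beta^*$ first jobs $n-i^*+\beta^*+1,\dots,n-i^*+\beta$ and the corresponding small-index second jobs of $S_\beta$. These get paired, via Lemma \ref{L11} and swaps in the style of the $S_1\to S_1'\to S_1''$ construction in Lemma \ref{A14}, with the first jobs of $B$ that the trim has freed.

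The key observation is that in the disagreeable case $b$ is nonincreasing in the index. The trim places the long-index jobs, which have small $b$, as second jobs at the end, while moving the larger-$b$ second jobs earlier. I would track the makespan change as a telescoping sum of differences $b_x-b_y$ with $x<y$, as for $\Delta$ in Lemma \ref{A14}, but with the opposite sign. The aim is to show that the earlier part shrinks by at least the amount the tail needs, so the total makespan drops strictly below $C_{\max}$. Lateness of the earlier jobs only improves because their completion times do not increase, and the tail is feasible by the feasibility of the trim.

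The main obstacle is the sign bookkeeping in this exchange. One must verify that the jobs removed from $S_\beta$ fit into the freed earlier slots without raising their completion times beyond their due-date slack. The relevant bound should come from the definitions of $k^*$ and $i^*$, namely (\ref{E1}) and (\ref{E2}), which guarantee that any job with index at least $n-i^*$ tolerates completion at $C_{\max}-p-b_{n-k^*}+b$. The other delicate point is maximality: among feasible trims, $\alpha^*$ is chosen maximal. If the exchange only reached equal makespan, I would use the fact that $\alpha^*>\alpha$ frees a strictly longer second job to obtain a strict decrease. Failing that, I would settle for equality and conclude that some shortest optimal schedule ends with $S_{\beta^*}$.
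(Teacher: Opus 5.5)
Your plan has a genuine gap at its central step, and as designed that step cannot work in general. The trim $P_{\alpha^*}S_{\beta^*}$ uses every job of index at least $a=n-i^*-\alpha^*$. So, contrary to your description, nothing of $S_\beta$ is displaced: the jobs $n-i^*+\beta^*+1,\dots,n-i^*+\beta$ just become second jobs inside the trim, and cannot also be re-paired in the prefix. Likewise the trim consumes, rather than frees, the jobs $a,\dots,n-i^*-1$ of $B$.

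What changes is the prefix $R$, which must consist of exactly the jobs $1,\dots,a-1$. Every job below $a$ whose partner in $S$ has index at least $a$ (a $t_k<a$, or the partner of one of $a,\dots,n-i^*-1$) has to be re-paired with another job below $a$. That puts a long job into a second position. An idle-free schedule of $n/2$ pairs has makespan $3p\frac{n}{2}$ plus the total length of its second jobs, and the tail need not recover this loss.

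For example, let $\alpha=0$ and $\beta=\beta^*+1$, so $a=n-i^*-2$. Let $S$ contain $(u,a)$ and $(u',a+1)$ with $u<u'<a$, and let $b_j=b_0$ for all $j<a$. Then every schedule $R\,P_{\alpha^*}S_{\beta^*}$ is longer than $S$ by at least $b_0+b_{n-i^*+\beta}-b_a-b_{a+1}$. This is positive whenever $b_0>b_a+b_{a+1}-b_{n-i^*+\beta}$.

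Three consequences follow. The claim that ``the earlier part shrinks by at least the amount the tail needs'' is false in general. The claim that prefix completion times do not increase is unsupported, for the same reason: the prefix receives jobs with larger $b$ and earlier due dates. And your fallback to equality would only show that \emph{some} shortest schedule ends with $S_{\beta^*}$, not that $S$ does.

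The missing idea is to target only the end $S_{\beta^*}$, lowering $\beta$ one step at a time, and to reuse the short jobs pushed out of the end as partners of long jobs in the prefix. In the paper, the hypothesis $\beta>\beta^*$ enters, together with $i^*+\alpha^*\le n-1$, to give $2(i^*-\beta)=i^*+\alpha+1<n$. Hence $S$ contains a pair with both jobs in $B$. The latest such pair locates the job $n-i^*-(\omega+1)$ in second position of some pair $(t,n-i^*-(\omega+1))$.

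That long job is moved to a first position, and the $z$'s shift to earlier pairs. The two shortest second jobs $v,w$ of $S_\beta$ become second jobs of pairs preceding the new end $S_{\beta-1}$. The net effect on the multiset of second jobs is that the long job $n-i^*-(\omega+1)$ is traded for a job of index at least $n-i^*$, which is where the makespan decrease comes from.

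In the configuration above, this exchange yields a schedule ending with $S_{\beta^*}$, with makespan $C_{\max}-(b_{a+1}-b_{n-i^*+\beta})$. Its last $\beta^*+2$ pairs coincide with those of the feasible trim, so the trim's feasibility at $C_{\max}$ controls the lateness of the tail. This is a decrease exactly where your transplant produced an increase.
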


\begin{proof}  
Let $S$ be the shortest schedule with maximum lateness $L_{\max}$. 
%Without loss of generality we
%assume that $S$ is the shortest schedule with maximum lateness $L_{\max}$. 
Let $C_{\max}$ be the makespan of $S$. By Lemma \ref{A14} let the $S$ ends with 
\begin{equation*}
S_{\beta\geq \beta^*}.
\end{equation*}
Suppose for contradiction $\beta^*<\beta$.
We show that the larger $\beta$ guarantees a schedule shorter than $S$ with maximum lateness not exceeding $L_{\max}$ which gives contradiction.
For $i^*-k^*\leq 2\beta +1$,  Lemma \ref{A13} implies that no job from the set  $A_{\beta}=\{ n-i^*+2\beta +2,\dots, n\}$ is in $S_{\beta}$. We have $|A_{\beta}|=i^*-2\beta-1$. Also no job
from the set $B=\{1,\dots,n-i^*-1\}$ is in $S_{\beta}$. Thus, if $2(i^*-\beta)<n$, then there is a pair $(x,y)$ in $S$ with both $x$ and $y$ in $B$.

Similarly, for $i^*-k^*> 2\beta +1$  Lemma \ref{A13} implies that no job from the set  $A_{\beta}=\{ n-k^*+1,\dots, n\}\cup \{n-k^*-1, \dots, n-i^*+2\beta+1\}$ is in $S_{\beta}$. Again, $|A_{\beta}|=i^*-2\beta-1$. Also no job
from the set $B=\{1,\dots,n-i^*-1\}$ is in $S_{\beta}$. Thus, if $2(i^*-\beta)<n$, then there is a pair $(x,y)$ with both $x$ and $y$ in $B$ in $S$. We assume that $(x,y)$ is the latest such pair in $S$.  
Thus $S$ looks as follows
\begin{equation*}
S=\dots (x,y)(t_1,z_1)\dots(t_{\omega},z_{\omega})S_{\beta}
\end{equation*}
for some $i^*-2\beta-1\geq \omega \geq 0$. For $\omega=0$ we have $S=\dots (x,y)S_{\beta}$.  Moreover $\{z_1>\dots> z_{\omega}\}\subseteq A_{\beta}$  and the jobs are the $\omega$ longest jobs in $A_{\beta}$, and $\{t_1,\dots, t_{\omega}\}\subseteq B$. By Lemma \ref{L11} we have
$x<t_1<\dots<t_{\omega}<n-i^*$.  Thus $x\leq n-i^*-(\omega+1)$. Moreover since $n-i^*-(\omega+1)\leq \min\{t_1,\dots, t_{\omega}, y\}$ and $x<y$, we have $x< n-i^*-(\omega+1)$. Thus $S$ looks as follows
\begin{equation*}
S=\dots (t,n-i^*-(\omega+1))\dots (t_1,z_1)\dots(t_{\omega},z_{\omega})S_{\beta},
\end{equation*}
where both $t$ and $n-i^*-(\omega+1)$ are in $B$. We change $S$ by shortening $S_{\beta}$ and swapping $n-i^*-(\omega+1)$ to make the schedule shorter.
For $\beta+1> i^*-k^*$
\begin{equation*}
S_{\beta}=(n-i^*, n-i^*+2\beta+1)\dots (n-i^*+\beta, n-i^*+\beta+1)
\end{equation*}
change to
\begin{equation*}
S_{\beta-1}=(n-i^*, n-i^*+2\beta-1)\dots (n-i^*+\beta-1, n-i^*+\beta),
\end{equation*}
the jobs $w=n-i^*+2\beta$ and $v=n-i^*+2\beta+1$ are pushed out of $S_{\beta}$.
For $\beta+1= i^*-k^*$
\begin{equation*}
S_{\beta}=(n-i^*, n-i^*+2\beta+1)\dots (n-i^*+\beta, n-i^*+\beta+1)
\end{equation*}
change to
\begin{equation*}
S_{\beta-1}=(n-i^*, n-i^*+2\beta-1)\dots (n-i^*+\beta-1, n-i^*+\beta+1),
\end{equation*}
the jobs $w=n-i^*+2\beta$ and $v=n-i^*+2\beta+1$ are pushed out of $S_{\beta}$.
For $\beta+1<i^*-k^*$
\begin{equation*}
S_{\beta}=(n-i^*, y_0)\dots (n-i^*+\beta, y_{\beta}=n-k^*),
\end{equation*}
where $y_0>\dots>y_{\beta}$, and $\{y_0, \dots, y_{\beta}\}=\{n-i^*+2\beta+1, \dots, n-i^*+\beta+1\}$ for $2\beta+1\geq i^*-k^*$ 
and $\{y_0, \dots, y_{\beta-1}\}=\{n-i^*+2\beta, \dots, n-i^*+\beta+1\}$ for $2\beta+1< i^*-k^*$.  Change to
\begin{equation*}
S_{\beta-1}=(n-i^*, y_2)\dots (n-i^*+\beta-1, y_{\beta}=n-k^*),
\end{equation*}
the jobs $v=y_0$ and $w=y_1$ are pushed out of $S_{\beta}$. We now modify $S$ as follows.
For $\omega\geq 2$ change $S$ to
\begin{equation} \label{X1}
S'=\dots (t,z_1)\dots (n-i^*-(\omega+1),z_2)(t_1,z_3)\dots(t_{\omega-2},z_{\omega})(t_{\omega-1}, v)(t_{\omega}, w)S_{\beta-1}.
\end{equation}
For $\omega=1$ change $S$ to
\begin{equation} \label{X2}
S'=\dots (t,z_1)\dots (n-i^*-(\omega+1),v)(t_1,w)S_{\beta-1}
\end{equation}
respectively.
For $\omega=0$ change $S$ to
\begin{equation} \label{X3}
S'=\dots (t,v)\dots (n-i^*-(\omega+1),w)S_{\beta-1}.
\end{equation}
It can be checked that the schedules (\ref{X1})-(\ref{X3}) have maximum lateness not exceeding $L_{\max}$ and they are shorter than $C_{\max}$ since $n-i^*-(\omega+1)$ was swapped with a shorter job
which gives contradiction.
\end{proof}

\subsection{The trimming algorithm}

Let $\mathcal{I}=\{1, \dots, n\}$ be an instance with an even $n$. Let $C_{\mathcal{I}}=3p\frac{n}{2}+b_{\frac{n}{2}+1}+\dots+b_n$ be the lower bound on the makespan in the disagreeable case. Do \emph{binary} search on $C_{\max}$ in the interval $[3p\frac{n}{2}+b_{\frac{n}{2}+1}+\dots+b_n, 3p\frac{n}{2}+b_1+ \dots+ b_{\frac{n}{2}}]$, and $L_{\max}$ in the interval $[3p\frac{n}{2}+b_{\frac{n}{2}+1}+\dots+b_n -d_n, 3p\frac{n}{2}+b_1+ \dots+ b_{\frac{n}{2}}-d_1]$. For given $C_{\max}\geq C_{\mathcal{I}}$ and $L_{\max}$ test whether there is a schedule $S$ for  $\mathcal{I}$ with maximum lateness not exceeding $L_{max}$, and makespan and not exceeding $C_{\max}$. 
The test works as follows. 

\emph{Initial step}: Set $\mathcal{I}:=\{1, \dots, n\}$, $C:=C_{\max}$, and an empty schedule $S$. 

\emph{Main step}: For  $\mathcal{I}$, $C\geq C_{\mathcal{I}}$ and $L$ calculate the pivotal jobs $k^*$ and $i^*$. If $k^*$ does not exist, then no schedule with maximum lateness not exceeding $L_{max}$ for $\{1, \dots,n\}$ exists. If $i^*$ does not exist, then set $S:=S(n-k^*,n-k^*+1)$ (if $i^*$ does not exist, then $k^*>0$, otherwise no schedule with maximum lateness not exceeding $L_{max}$ for $\{1, \dots,n\}$ exists, see the beginning of Subsection \ref{S1}). Set $\mathcal{I}:=\mathcal{I}\setminus \{n-k^*,n-k^*+1\}$ (the trimming)
and $C:=C-(3p+b_{n-k^*+1})$. If $C\geq C_{\mathcal{I}}$ and $\mathcal{I}\neq \emptyset$, then repeat the main step. If $C<C_{\mathcal{I}}$, then no schedule with maximum lateness not exceeding $L_{max}$ for $\{1, \dots,n\}$ exists, stop. If $C\geq 0$ and $\mathcal{I}=\emptyset$, then the schedule $S$ has maximum lateness not exceeding $L_{max}$, stop.

If $i^*$ exists, then let $\mathcal{S}_{\beta^*}$ be an optimal end (if the end does not exist, then by Lemma \ref{A13} and the proof of Lemma \ref{A14} no schedule with maximum lateness not exceeding $L_{max}$ for $\{1, \dots,n\}$ exists). Set $S:=S\mathcal{S}_{\beta^*}$. Remove the jobs $I_{\beta^*}$ from $\mathcal{S}_{\beta^*}$ from the instance i.e. set $\mathcal{I}:=\mathcal{I}\setminus I_{\beta^*}$ (the trimming),
and reduce $C$ by the makespan $C_{\beta^*}$ of $\mathcal{S}_{\beta^*}$, i.e. set $C:= C-C_{\beta^*}$. If $C\geq C_{\mathcal{I}}$ and $\mathcal{I}\neq \emptyset$, then repeat the main step. If $C<C_{\mathcal{I}}$, then no schedule with maximum lateness not exceeding $L_{max}$ for $\{1, \dots,n\}$ exists, stop. If $C\geq 0$ and $\mathcal{I}=\emptyset$, then the schedule $S$ has maximum lateness not exceeding $L_{max}$, stop.
%Repeat the step until there is no job left to schedule.
%\end{proof}

%We have now two cases to consider. In either case we show that there always is an optimal  schedule with a fixed  \emph{suffix}.
% which depends on whether $i^*-k^*>1$ or $i^*-k^*=1$. 
% The suffix can then be 
%fixed and the algorithm repeated with fewer jobs and reduced makespan. 

\section{General instances. An algorithm  for given partition  $S$ and $T$} \label{G}
\label{sec:3}

In this section we show how to test whether there is a schedule with maximum lateness not exceeding $L$ for given partition $P=\{1, \dots,m\}$ and $T=\{1', \dots, m'\}$ of $\mathcal{J}$ of size $m\geq '$, $m+m'=n$. We have $b_{j'}\leq p$ for each $j'\in T$. We assume without loss of generality that $m'>0$ since for $m'=0$ the singletons in $P$ follow the EDD order. We have $1'\prec 2' \prec \dots \prec m'$ for the jobs in $T$.
% with $b_j\leq p$, and $B_j= b'_j+3p$ for job $J_j\in \mathcal{J}$ with $b_j> p$, where $b'_j=b_j-p$.  Given partition into $S$ and $T$ . 
Let 
%We need to add a key condition
%\begin{equation}
%\beta - (B_{\pi(n)}+ \dots +B_{\pi(1)})\geq 0.
%\end{equation}
%The condition is equivalent to
\begin{equation}
\beta =3pm'+2p(m-m')+\sum_{j\in P}b_{j}.
\end{equation}
%which is easy to check for a given partition.
%Let $D_1, \dots, D_n$ be due dates, and $c_1, \dots, c_n$ processing times respectively of the jobs  in $T$. 
%that the jobs in $T$ are order so that  $D_n-c_n\geq \dots \geq D_1-c_1$. 
Set $i:=0$, and $P_m:=P$.
Let $\pi(m)\in P_{m}$ be the \emph{longest} job that meets the  inequality

\begin{equation*} 
\beta  \leq L+d_{\pi(m)}.
\end{equation*}
If no job in  $P_{m}$ satisfies the inequality, then answer No and stop. Otherwise, set $P_{m-1}:=P\setminus \{\pi(m)\}$ and check if job $m'$ satisfies

\begin{equation*} 
\beta - b_{\pi(m)}  \leq L+d_{m'}-p-b_{m'},
\end{equation*}
if it does, then set $\beta:=\beta - (3p+b_{\pi(m)})$, $T:=T\setminus \{m'\}$, and $m':=m'-1$. If it does not, then set $\beta:=\beta - (2p+b_{\pi(m)})$. Continue.
%If no job in  $S_{n}$ satisfies the inequality, then answer No and stop. Otherwise, continue.

Suppose that partial permutation $\pi(m), \dots, \pi(m-i+1)$ of the jobs in $P$ has been calculated for some $0< i \leq m-1$. Let $P_{m-i}= P \setminus \{ \pi(m), \dots, \pi(m-i+1) \}$. Let $\pi(m-i)\in P_{m-i}$ be the \emph{longest} job that meets
the inequality 
\begin{equation*} 
\beta  \leq L+d_{\pi(m-i)}.
\end{equation*}
If no job in  $P_{m-i}$ satisfies the inequality, then answer No and stop. Otherwise, set $P_{m-(i+1)}:=P\setminus \{ \pi(m), \dots, \pi(m-i) \}$. If $m'=0$, then set $\beta:=\beta - (2p+b_{\pi(m)})$. Continue.
Otherwise check if job $m'$ satisfies 
\begin{equation*} 
\beta - b_{\pi(n-i)}  \leq L+d_{n'}-p-b_{n'}
\end{equation*}
if it does, then set $\beta:=\beta - (3p+b_{\pi(m-i)})$, $T:=T\setminus \{m'\}$, and $m':=m'-1$. If it does not, then set $\beta:=\beta - (2p+b_{\pi(m-i)})$. Continue.
%set $\pi(n-i):=J_j$. 

Suppose that partial permutation $\pi(m), \dots, \pi(m-i+1)$ of the jobs in $S$ has been calculated for $m=i$. If $m'=0$, then answer Yes and stop. Otherwise, if $m'>0$, then answer No and stop.

We show that any schedule $\sigma$ for partition $P$ and $T$ with maximum lateness not exceeding $L$ can be turned into the schedule obtained by the test without increasing maximum lateness. By exchange argument the permutation of the jobs in $P$ defined by $\sigma$ can be changed so that the job $\sigma(m)$ is the longest among all jobs in $P$
that have maximum lateness not exceeding $L$ when completed at $\beta$. Thus $\pi(m)=\sigma(m)$ in the schedule. Similar arguments show $\sigma(m-1)=\pi(m-1), \dots, \sigma(1)=\pi(1)$ in the schedule. Furthermore, let jobs in $T$ be paired as follows
$(1', \pi(i_1)), \dots, (m', \pi(i_{m'}))$ with the jobs $\pi(i_1), \dots,  \pi(i_{m'})$ from $P$ in positions $i_1, \dots, i_{m'}$ of $\sigma$. The schedule $\sigma$ can be changed so that the pair of $m'$ with any job from $P$ in positions $j>i_{m'}$
results in the maximum lateness of $m'$ exceeding $L$. Similar arguments for the remaining jobs in $T$ show that the resulting schedule is the same as the one obtained by the test.

\section{Conclusion}

We studied the problem $1|(p,p,b_i)|L_{\max}$ of minimization of maximum lateness for coupled tasks with exact delays scheduled on a single machine. Surprisingly the optimization algorithms for the problem turn out to be rather involved even for
special cases of agreeable and disagreeable orders of due dates and processing times $b_i$. Nevertheless we show  polynomial time algorithms for both cases using quite intricate arguments. The computational complexity of the general problem
remains open though we show that an optimal schedule can be found as long as we know which jobs determine schedule makespan. We conjecture that a pseudopolynomial algorithm exists for the general problem.

\bibliographystyle{plain}
\bibliography{BibChapter1a}
\end{document}